\numberwithin{equation}{section}
\newtheorem{Theorem}{Theorem}[section]
\newtheorem{Definition}[Theorem]{Definition}
\newtheorem{Proposition}[Theorem]{Proposition}
\newtheorem{Lemma}[Theorem]{Lemma}
\newtheorem{Claim}[Theorem]{Claim}
\newtheorem{Corollary}[Theorem]{Corollary}
\newtheorem{Remark}[Theorem]{Remark}
\newcommand{\CC}{\mathbb{C}}
\newcommand{\RR}{\mathbb{R}}
\newcommand{\NN}{\mathbb{N}}
\newcommand{\ZZ}{\mathbb{Z}}
\newcommand{\nr}{{\text{near}}}
\newcommand{\fr}{{\text{far}}}
\renewcommand{\O}{\mathcal{O}}
\newcommand{\nn}{\nonumber}
\DeclareMathOperator{\sinc}{sinc}
\DeclareMathOperator{\spec}{spec}
\begin{document}

\title{
Homogenized description of defect modes in periodic structures with localized defects }

\author{Vincent Duch\^ene, Iva Vuki\'cevi\'c and Michael I. Weinstein}

\maketitle

\centerline{\small{ {\it Dedicated to George C. Papanicolaou on the Occasion of His 70th Birthday} }}
\bigskip

\abstract{ 
A spatially localized initial condition for an energy-conserving wave equation with periodic coefficients disperses (spatially spreads) and decays 
in amplitude as time advances.
This dispersion is associated with the continuous spectrum of the underlying differential operator and the absence of discrete eigenvalues. The introduction of spatially localized perturbations in a periodic medium, leads to {\it defect modes}, states in which energy remains trapped and spatially localized. In this paper we study weak, $\mathcal{O}(\lambda),\ 0<\lambda\ll1$, localized perturbations of one-dimensional periodic Schr\"odinger operators. Such perturbations give rise to such defect modes, and are associated with the emergence of discrete eigenvalues from the continuous spectrum. Since these isolated eigenvalues are located near a spectral band edge, there is strong scale-separation between the medium period ($\sim$ order $1$) and the localization length of the defect mode 
($\sim$ order  $|\textrm{defect eigenvalue}|^{-{\frac12}}=\lambda^{-1}\gg1$). Bound states therefore have a multi-scale structure: a ``carrier Bloch wave'' $\times$ a ``wave envelope'', which is governed by a homogenized Schr\"odinger operator with associated {\it effective mass},  depending on the spectral band edge which is the site of the bifurcation. Our analysis is based on a reformulation of the eigenvalue problem in Bloch quasi-momentum space, using the Gelfand-Bloch transform and a Lyapunov-Schmidt reduction to a closed equation for the near-band-edge frequency components of the bound state. A rescaling of the latter equation yields the homogenized effective equation for the wave envelope, and approximations to bifurcating eigenvalues and  eigenfunctions.}

\section{Introduction}\label{introduction}

A spatially localized initial condition for an energy-conserving wave equation with periodic coefficients disperses (spatially spreads) and decays 
in amplitude as time advances.
This (Floquet-Bloch) dispersion is associated with the continuous spectrum (extended states) of the underlying differential operator and the absence of discrete eigenvalues (localized bound states)~\cite{Kuchment-01,RS4}. The introduction of localized perturbations in a periodic medium leads to {\it defect modes}, states in which energy remains trapped and spatially localized. This phenomenon is of great importance in fundamental and applied science\ - from the existence of stable states of matter in atomic systems to the engineering of materials with desirable energy transport properties through localized {\it doping} of ordered materials. 

The process by which the system undergoes a transition from one with only propagating delocalized states to one which supports both localized and propagating states is associated with the emergence or bifurcation of discrete eigenvalues from the continuous spectrum associated with the unperturbed periodic structure. In this paper, we discuss this bifurcation phenomenon in detail for the Schr\"odinger operator 
\begin{equation}
H_Q\ =\ -\partial_x^2 + Q(x)
\label{HQ}
\end{equation}
where $Q(x)$ is a continuous, real-valued, periodic potential:
\begin{equation}
Q(x+1)\ =\ Q(x).
\label{Qperiodic}
\end{equation}

The spectrum, $\spec(H_Q)$, of the Schr\"odinger operator
 is continuous and is the union of closed intervals called {\it spectral bands}~\cite{RS4}. 
 The complement of the spectrum is a union of open intervals called {\it spectral gaps}.
 The spectrum is determined by the family of self-adjoint eigenvalue problems 
 parameterized by the {\it quasi-momentum} $k\in (-1/2,1/2]$:
 \begin{align}
 H_Q u(x;k)\ =\ E\ u(x;k) \ ,\label{HQuEu}\\
 u(x+1;k)\ =\ e^{2\pi i k}\ u(x;k) \ .\label{u-pseudoper}
 \end{align}
That is, we seek {\it $k-$ pseudo-periodic} solutions of the eigenvalue equation. For each $k\in (-1/2,1/2]$, the self-adjoint eigenvalue problem~\eqref{HQuEu}-\eqref{u-pseudoper} has discrete eigenvalue-spectrum (listed with multiplicity):
\begin{equation}
E_0(k) \le E_1(k)\le\dots\le E_b(k)\le\dots
\label{eigs}
\end{equation}
with corresponding $k-$ pseudo-periodic eigenfunctions:
\begin{equation}
u_b(x;k)\ =\ e^{2\pi ikx}\ p_b(x;k),\ \ p_b(x+1;k)\ =\ p_b(x;k),\ \ b\ge 0.
\label{ub-def}
\end{equation}
The $b^{th}$ spectral band is given by
\begin{equation}
\mathcal B_b\ =\ \bigcup_{\substack{k \in (-1/2,1/2]}} E_b(k).
\label{Band-b}
\end{equation}
The spectrum of $H_Q$ is given by:
\begin{equation}
\spec(H_Q) = \bigcup_{b\ge 0} \mathcal B_b\ = \bigcup_{b\geq 0}\ \bigcup_{\substack{k \in (-1/2,1/2]}} E_b(k). \label{spectrum-HQ}
\end{equation}
Since the boundary condition~\eqref{u-pseudoper} is invariant with respect to $k\mapsto k+1$, the functions $E_b(k)$ can be extended to all $\RR$ as periodic functions of $k$. The minima and maxima of $E_b(k)$ occur at $k=k_*\in\{0,1/2\}$; see Figure~\ref{fig:SketchOfSpectrumQ}. In cases where extrema border spectral gap, we have that $\partial_k^2E_b(k_*)$ is either strictly positive or strictly negative~\cite{eastham1973spectral,RS4}; see Lemma~\ref{lem:band-edge}.

Consider now the perturbed operator
$H_{Q+ V}$, where $V(x)$ is sufficiently localized in space.
  By Weyl's theorem on the stability of the essential spectrum, one has $\spec_{\rm cont}(H_{Q+ V})=\spec_{\rm cont}(H_{Q})$; see~\cite{RS4}. 
The effect of a localized perturbation is to possibly introduce discrete eigenvalues into the open spectral gaps. Note that in our setting, $H_{Q+V}$ does not have discrete eigenvalues embedded in its continuous spectrum; see~\cite{Rofe-Beketov:64},~\cite{Gesztesy-Simon:93}.

 In this paper we present a detailed study of the bifurcation of localized bound states into gaps of the continuous spectrum induced by a small and localized perturbation of $H_Q$:
\begin{equation}
 H_{Q+ \lambda V} \ \equiv \ -\partial_x^2 + Q(x)+\lambda V(x)\ ,\ \ \lambda>0,\label{HQV}
 \end{equation}
 where $\lambda$ is taken sufficiently small.
Here $Q(x)$ is a continuous, $1-$ periodic function defined on $\RR$ and $V(x)$ is spatially localized.  We next turn to a summary of our results. See Theorem~\ref{thm:Qzero} and Theorem~\ref{thm:per_result} for detailed statements.

Let $E_*=E_{b_*}(k_*),\ k_*\in\{0,1/2\}$, denote an endpoint (uppermost or lowermost) of the $(b_*)^{th}$ spectral band, bordering a spectral gap. We show that under the condition:
\begin{equation}
\partial_k^2 E_{b_*}(k_*)\ \times\ \int_{\mathbb{R}}\ |u_{b_*}(x;k_*)|^2 \ V(x)\ dx\ <\ 0\ ,
\label{intV}
\end{equation} 
the following holds:\ There exists a positive number, $\lambda_0$, such that for all $0<\lambda<\lambda_0$, $H_{Q+\lambda V}$ has a simple discrete eigenvalue
\begin{equation}
 E(\lambda)\ =\ E_*\ +\ \lambda^2\mu\ +\ \mathcal{O}(\lambda^{2+\alpha}),\ \ \alpha>0\ .
 \label{eig-expand}\end{equation}
which bifurcates from the edge, $E_*=E_{b_*}(k_*) $, of $\mathcal B_{b_*}$ into a spectral gap.

\begin{enumerate}
\item
If $\partial_k^2 E_{b_*}(k_*)>0$ and $ \int_{\mathbb{R}}\ |u_{b_*}(x;k_*)|^2 \ V(x)\ dx< 0$, then $\mu<0$ and $E(\lambda)$ lies near the lowermost edge of $\mathcal B_{b_*}$; see the center panel of Figure~\ref{fig:SketchOfSpectrumQ}.
\item 
If $\partial_k^2 E_{b_*}(k_*)<0$ and $ \int_{\mathbb{R}}\ |u_{b_*}(x;k_*)|^2 \ V(x)\ dx > 0$, then $\mu>0$ and $E(\lambda)$ lies near the uppermost edge of $\mathcal B_{b_*}$; see the right panel of Figure~\ref{fig:SketchOfSpectrumQ}.
\end{enumerate}
 For $0<\lambda<\lambda_0$, $\psi^\lambda(x)$, the eigenstate corresponding to the eigenvalue, $E(\lambda)$, is well-approximated in $L^\infty$ by, $g_0(\lambda x)$, where $g_0(y)$ denotes the unique eigenstate of the {\it homogenized} operator
 \begin{equation}
H_{b_*,{\rm eff}}\ =\ -\frac{d}{d y}\ A_{b_*,{\rm eff}}\ \frac{d}{dy} \ +\ B_{b_*,{\rm eff}}\ \delta(y)\ ,
 \label{ABdelta}
 \end{equation}
 with constant effective parameters $A_{b_*,{\rm eff}}$ and $B_{b_*,{\rm eff}}$. Here,  
 \begin{equation}
 A_{b_*,{\rm eff}}\ =\ \frac1{8\pi^2}\partial_k^2E_{b_*}(k_*)
 \label{Aeff}
 \end{equation}
 is the inverse {\it effective mass} associated to the spectral edge, $E_*=E_{b_*}(k_*)$, 
 \begin{equation}
 B_{b_*,{\rm eff}}\ =\ \int_{\mathbb{R}} |u_{b_*}(x;k_*)|^2 \ V(x)\ dx\ ,
 \label{Beff}
 \end{equation}
 and $\delta(y)$ denotes the Dirac delta mass at $y=0$. The unique discrete eigenvalue, $\mu_\star$, of the eigenvalue problem:  $H_{b_*,{\rm eff}} \psi=\mu\psi$ is easily 
 seen to be 
 \begin{equation}
 \mu_\star=- \frac{B_{b_*,{\rm eff}}^2}{4A_{b_*,{\rm eff}}}.
 \label{eig-Hbeff}\end{equation}

\begin{Remark}\label{EffectiveMass}
The notion of effective mass is well known in condensed matter physics~\cite{Ashcroft-Mermin:76}. The effective mass for an evolving wave-packet may be derived by multiscale perturbation theory and is related the general problem of homogenization of periodic structures; see the very influential book of Bensoussan, Lions \& Papanicolaou~\cite{BLP}; see also~\cite{BS:99,B:03,BS:03,Allaire-Piatnitski:05,BS:06}. 
\end{Remark}

\begin{Remark}\label{Qeq0}
For the case $Q\equiv0$, then $H_Q=H_0=-\partial_x^2$ and its spectrum consists of a semi-infinite interval, $\spec(H_0)=[0,\infty)$, the union of touching bands with no finite length gaps. Furthermore, $p_b(x;k)\equiv1$, for all $|k|\le 1/2$ and $b\ge0$. The only band edge which borders a gap is located at $E_*=E_0(0)=0$, where we have: $k_*=0$, $E_0(k)=4\pi^2 k^2$ and $\partial_k^2E_0(k_*)=8\pi^2$. 
In this case, our results describe the bifurcation of an eigenvalue from the edge of the continuous spectrum of $H_0$ induced by a small and localized perturbation: $H_{\lambda V}=-\partial_x^2+\lambda V$, under the condition $\int_{\RR} V<0$. The homogenized operator is
\begin{equation}
H_{0,{\rm eff}}=-\frac{d^2}{dy^2}+\lambda\int_{\mathbb{R}}V dx\cdot\quad ;
\label{H0eff}\end{equation}
 see the discussion below of~\cite{simon1976bound}.
\end{Remark}
 \begin{figure}[htp]
 \begin{center}
 \includegraphics[width=.7\textwidth]{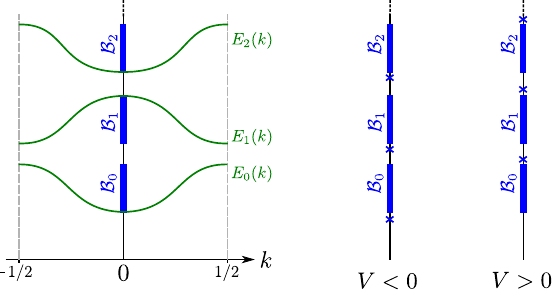}
 \end{center}
 \caption{{\footnotesize{Sketch of spectra. Eigenvalues, $E_b(k),k\in(-1/2,1/2], b=0,1,2,\dots$, are displayed in green. The continuous spectrum, is in blue, and discrete eigenvalues are indicated through cross markers. Left panel corresponds $\spec(H_Q)$, $Q$ periodic. The center (resp. righ) panel corresponds $\spec(H_{Q+\lambda V})$, where $\lambda V$ is small, localized negative (resp. positive).}}}
\label{fig:SketchOfSpectrumQ}
 \end{figure}

\subsection{Previous related work} 
The case $Q\equiv0$, where $H_{Q+\lambda V}=-\Delta+V(x)$ was considered by Simon~\cite{simon1976bound} in one and two spatial dimensions. In one dimension, it is proved that if $V$ is sufficiently localized and $-\infty<\int_{\mathbb R} V<0$, then $H_{\lambda V}$ has a small negative eigenvalue $E(\lambda)$ of order $\lambda^2$; see the Corollary~\ref{cor:Elambdaprecise} and the discussion following it. The case of perturbations of one-dimensional periodic Schr\"odinger operators ($Q$ non-trivial, $1-$periodic) is treated by Gesztesy \& Simon~\cite{Gesztesy-Simon:93}, where sufficient conditions are given for the bifurcation of eigenvalues in the gaps of the continuous spectrum. Borisov and Gadyl'shin~\cite{Borisov-Gadylshin:08}  obtain results closely related to the current work, although using very different methods.
A formal asymptotic study, in terms of a Floquet-Bloch decomposition, in one and two spatial
 dimensions was given in Wang {\it et. al.}~\cite{Wang-Yang-Chen:07}. 
Parzygnat {\it et. al.}~\cite{parzygnat2010sufficient} formulate a variational principle for defect modes with frequencies in spectral gaps. They use formal trial function arguments to show the existence of such defect modes in spatial dimensions one and two. By formal asymptotic arguments, they obtain the condition~\eqref{intV}, for the case of the first spectral gap.
Deift \& Hempel~\cite{Deift-Hempel:86} obtained results on the existence and number of eigenstates in spectral gaps 
  for operators of the general type $H-\lambda W$, where $H$ has a band spectrum and $W$ is bounded. 
Figotin \& Klein~\cite{Figotin-Klein:97,Figotin-Klein:98} studied localized defect modes in context of acoustic and electromagnetic waves.
Results on bound states and scattering resonances of one-dimensional Schr\"odinger operators with compactly supported potentials appear in work of Bronski \& Rapti~\cite{BR:11} and Korotyaev~\cite{Korotyaev:09,Korotyaev:11}, respectively.
Bifurcations of defect modes into spectral gaps was considered in dimensions $d=1,2$ \underline{and} $3$ by Hoefer \& Weinstein~\cite{Hoefer-Weinstein:11} for operators of the form $-\Delta + Q(x) + \varepsilon^2V(\varepsilon x)$, where $Q$ is periodic on $\mathbb R^d$ and $V$ is spatially localized. This scaling was motivated by work of Ilan \& Weinstein~\cite{Ilan-Weinstein:10} on the bifurcation of nonlinear bound states from continuous spectra for the nonlinear Schr\"odinger / Gross-Pitaevskii equation. The works~\cite{Ilan-Weinstein:10,Hoefer-Weinstein:11} employ the general Lyapunov-Schmidt reduction strategy used in the present work; see also 
\cite{Pelinovsky-Schneider:07,Dohnal-Uecker:09,Dohnal-Pelinovsky-Schneider:09}.

\subsection{Outline, remarks on the proof and future directions}\label{sec:outline} In Section~\ref{sec:background} we present background material concerning spectral properties of Schr\"odinger operators with periodic potentials defined on $\mathbb R$. In Section~\ref{sec:results} we give precise technical statements of our main results: Theorem~\ref{thm:Qzero} and Theorem~\ref{thm:per_result}. 

Our strategy of  proof is to transform the eigenvalue problem, using the appropriate spectral transform (Fourier or Floquet-Bloch), to a formulation in frequency (quasi-momentum) space. Anticipating a bifurcation from the spectral edge, we express the eigenvalue problem in terms of coupled equations governing the frequency components located {\it near} the band edge and those which are {\it far} from the band edge. The precise frequency cutoff depends on the small parameter, $\lambda$. We employ a  Lyapunov-Schmidt reduction strategy~\cite{Nirenberg:74} in which we solve for the {\it far}-frequency components as a functional of the {\it near}-frequency components. This yields a reduction to a  closed {\it bifurcation equation} for the {\it near}-frequency components. In contrast to classical applications of this strategy, our reduced equation is infinite dimensional. For $\lambda$ small, in an appropriate scaled limit, the bifurcation equation is asymptotically exactly solvable; it is the eigenvalue problem for the homogenized / effective operator $H_{b_*,{\rm eff}}$.  In Section~\ref{sec:gentech}, we prove a general technical lemma, crucial to the analyses of Sections~\ref{sec:nonper} and~\ref{sec:per}, covering the kinds of bifurcation equations which arise. 
 Finally, Appendices~\ref{pf-of-lem:band-edge} and~\ref{regularity-app} contain the proof of results stated in Lemmata~\ref{lem:band-edge} and~\ref{lem:regularity-of-Eb}, and in Appendix~\ref{proof-of-Corollary-per}
we give proofs, by a bootstrap method, of Corollary~\ref{cor:Elambdaprecise} and 
 Corollary~\ref{cor:Elambdaprecise-per} which contain more detailed expansions and sharper error terms for the bifurcating eigenstates than those in Theorem~\ref{thm:Qzero} and 
 Theorem~\ref{thm:per_result} . 
 
We conclude this section with several possible extensions of the present work. 
\begin{enumerate}
\item The results of this paper describe the bifurcation of eigensolutions  in the case where the perturbing potential is small  in the strong sense (in norm). {\it What of the case where the perturbing potential converges weakly to zero?} This corresponds to the question of the effective behavior of high-contrast microstructures. In~\cite{DVW:12}, the authors consider a class of problems, depending on a small parameter, $\varepsilon$,  including the case  where the potential, $q^{(\varepsilon)}(x)=q(x,x/\varepsilon)$, converges {\it weakly} as $\varepsilon$ tends to zero.  In particular, we considered the small $\varepsilon$ limit of 
the  scattering and time-evolution properties for operators of the form
$H^{(\varepsilon)}=-\partial_x^2+q(x,x/\varepsilon),$
where $y\mapsto q(\cdot,y)$ is oscillatory (including periodic and certain almost periodic cases) and $x\mapsto q(x,\cdot)$ is spatially localized. An important subtlety arises in the case where $q_{av}(x)=\int_{\mathbb R} q(x,y) dy \equiv 0$, {\it i.e.} $q^\varepsilon$ tends to zero \underline{weakly}; see \cite{DW:11} for the case where $q_{\rm av}(x)\ne0$ is {\it generic}.
 In this case, classical homogenization theory breaks down at low energies. Indeed, the homogenized operator, obtained by averaging the potential over its fast variations, is $H_0=-\partial_x^2$, which does not capture key spectral and scattering information. Among these are the low energy behavior of the transmission coefficient (related to the spectral measure) and the existence of a bifurcating bound state at a very small negative energy. We show that 
 the correct behavior is captured by an effective Hamiltonian with effective potential well: $H_{\rm eff}^{(\varepsilon)}=-\partial_y^2-\varepsilon^2 \Lambda_{\rm eff}(y),\ \ \Lambda_{\rm eff}(y)>0$. Using Theorem~\ref{thm:Qzero} and the results of~\cite{simon1976bound}, we conclude that $H^{(\varepsilon)}$ has a bound state with negative energy of the order $\varepsilon^4$, with a precise expansion for $\varepsilon$ small. 

Thus, it would be interesting to use our approach in order to extend the results of the present paper to families of potentials, $q^{\varepsilon}$, which converge weakly to a nontrivial periodic potential, $Q(x)$; see~\cite[in progress]{DVW3}. 
\item Further, in~\cite{DVW:12} a {\it multi-scale} local energy  time-decay estimate, for localized initial conditions orthogonal to the bound state,  in which the different dispersive time-dynamics on different time-scales is explicit. In particular, the decay rate is $\mathcal{O}(t^{-\frac12})$ for times $t\ll\varepsilon^{-2}$ and  $\mathcal{O}(t^{-\frac32})$ for 
$t\ge\varepsilon^{-2}$. We believe that our methods can be extended to give  detailed properties of the resolvent $\left(-\partial_x^2+Q+\lambda V-E\right)^{-1}$ and therefore the spectral measure~\cite{RS4} near the band edges. Such information could be used to derive the detailed dispersive time-decay behavior. However, the decay estimates of the type obtained in ~\cite{DVW:12} can be expected to hold only for initial conditions which are spectrally localized near band edges. Initial conditions with spectral components away from the band edge can sample a regime where, for $Q$ non-zero, the dispersion relation has higher degeneracy, yielding different (slower) dispersive time-decay
~\cite{Firsova:96,Cai:06,cuccagna:08}.

\item Finally, it would be of interest to extend the methods of the current paper to the study of bifurcations of eigenvalues for multiplicatively small or weakly convergent spatially localized perturbations of the higher-dimensional periodic Schr\"odinger operator, $-\Delta +Q$. In spatial dimension $n=2$ and the case $Q=0$, Simon~\cite{simon1976bound} proved that the bound state generated by a multiplicatively small perturbation is exponentially close to the edge of the continuous spectrum. Such results has been extended by Borisov~\cite{Borisov11} in the periodic ($Q$ nontrivial) case. Formal asymptotics were obtained in Wang {\it et. al.}~\cite{Wang-Yang-Chen:07}. In spatial dimensions $n\ge3$, it is well known that for sufficiently small 
$\lambda$, $-\Delta +\lambda V$ does not have discrete spectrum, by Cwikel-Lieb-Rozenblum bound. 
Finally, Parzygnat {\it et. al.}~\cite{parzygnat2010sufficient} also treat the case of dimensions $n\ge3$, where the defect potential, $V$ is localized along in one or two dimensions.
\end{enumerate}

\subsection{Definitions and notation} 

We denote by $C$ a constant, which does not depend on the small parameter, $\lambda$. It may depend on norms of $Q(x)$ and $V(x)$, which are assumed finite. $C(\zeta_1, \zeta_2,\dots)$ is a constant depending on 
 $\zeta_1$, $\zeta_2$, $\dots$.
We write $A\lesssim B$ if $A\leq C \ B$, and $A\approx B$ if $A\lesssim B$ and $B\lesssim A$.

 \noindent $\chi$ and $\overline{\chi}$ are the characteristic functions defined by
 \begin{equation}
 \chi_{_\delta}(\xi) = \chi\left(|\xi| < \delta\right)  \equiv 
 \left\{ 
 \begin{array}{ll}
 1, \quad & |\xi| < \delta \\
 0, \quad & |\xi| \geq \delta
 \end{array}
 \right. , \qquad
 \overline{\chi}_{_\delta}(\xi)=\overline{\chi}(|\xi| < \delta)  \equiv  1-\chi(|\xi| < \delta)\ .
 \label{chi-def}\end{equation}
 
\noindent For $f,g \in L^2(\RR)$, the Fourier transform and its inverse are given by 
\[
\mathcal{F}\{f\}(\xi)  \equiv  \widehat{f}(\xi) = \int_\RR e^{-2\pi i x\xi}f(x)dx, \qquad
\mathcal{F}^{-1}\{g\}(x)  \equiv  \check{g}(x) = \int_\RR e^{2\pi i x\xi}g(\xi)d\xi.
\]

\noindent $\mathcal{T}$ and $\mathcal{T}^{-1}$ denote the Gelfand-Bloch transform and its inverse;
 see Section~\ref{sec:background}.

\noindent $L^{p,s}(\RR)$ is the space of functions $F:\RR\to\RR$ such that $(1+|\cdot|^2)^{s/2} F\in L^p(\RR)$, endowed with the norm
\begin{equation}
\|F\|_{L^{p,s}(\mathbb{R})} \equiv \|(1+|\cdot|^2)^{s/2} F\|_{L^p(\mathbb{R})} < \infty,\ \ 1\le p\le\infty.
\label{Lps-def}
\end{equation}

\noindent $W^{k,p}(\RR)$ is the space of functions $F:\RR\to\RR$ such that $\partial_x^j F\in L^p(\RR)$ for $0\leq j\leq k$, endowed with the norm
\[
\|F\|_{W^{k,p}(\mathbb{R})} \equiv \sum_{j=0}^k \|\partial_x^j F\|_{L^p(\mathbb{R})} < \infty,\ \ 1\le p\le\infty.
\]

\noindent{\bf Acknowledgements:}\ I.V. and M.I.W. acknowledge the partial support of U.S. National Science Foundation under U.S. NSF Grant DMS-10-08855, the Columbia Optics and Quantum Electronics IGERT NSF Grant DGE-1069420 and  NSF EMSW21- RTG: Numerical Mathematics for Scientific Computing. This research was carried out while V.D. was the Chu Assistant Professor of Applied Mathematics at Columbia University.

\section{Mathematical preliminaries} \label{sec:background}
In this section we summarize basic results on the spectral theory of Schr\"odinger operators with periodic potentials defined on $\mathbb R$. For a detailed discussion, see for example,~\cite{eastham1973spectral,RS4,magnus1979hill}.

\subsection{Floquet-Bloch states}
We seek solutions of the $k-$ pseudo-periodic eigenvalue problem
\begin{equation}
(-\partial_x^2 + Q(x))u(x) = Eu(x) \ , \quad u(x+1)=e^{2\pi i k}u(x)\ ,
\label{k-evp}\end{equation}
for $k\in(-1/2,1/2]$, the {\it Brillouin zone}.
Setting $u(x;k) = e^{2\pi ikx}p(x;k)$, we equivalently seek eigensolutions $(p(x;k),E)$ of the periodic elliptic boundary value problem:
\begin{equation}\label{eq:e-value}
\left( -(\partial_x + 2\pi ik)^2 + Q(x) \right)p(x;k) = E(k)p(x;k), \quad p(x+1;k)=p(x;k)
\end{equation}
for each $k \in (-1/2,1/2]$. 

The eigenvalue problem~\eqref{eq:e-value} has a sequence of eigenpairs $\{(p_b(x;k),E_b(k))\}_{b \geq 0}$ satisfying~\eqref{eigs} and~\eqref{ub-def}. The functions $p_b(x;k)$ may be chosen so that $\{p_b(x;k)\}_{b\ge0}$ is, for each fixed $k\in(-1/2,1/2]$, a complete orthonormal set in $L_{\rm per}^2([0,1])$. It can be shown that the set of Floquet-Bloch states $\{u_b(x;k)\equiv e^{2\pi ikx}p_b(x;k), \ b\in\NN, \ -1/2< k\leq 1/2\}$ is complete in $L^2(\RR)$, {\it i.e.} for any $f\in L^2(\RR)$, 
\[
f(x) \ - \ \sum_{0\le b\le N} \int_{-1/2}^{1/2} \langle u_b(\cdot,k),f \rangle_{L^2(\mathbb{R})} u_b(x;k)\ dk
\to\ 0\]
in $L^2(\mathbb R)$ as $N\uparrow\infty$.

Recall further that the spectrum of $-\partial_x^2 + Q(x)$ is continuous, and equal to the union of the closed intervals, the spectral bands; see~\eqref{Band-b},~\eqref{spectrum-HQ}.

\begin{Definition} 
 We say there is a spectral gap between the $b^{th}$ and $(b+1)^{st}$ bands if
\begin{equation}\sup_{|k|<1/2} |E_b(k)|\ <\ \inf_{|k|<1/2} |E_{b+1}(k)|.\label{whatisgap}
\end{equation}
\end{Definition}

Our study of eigenvalue bifurcation from the band edge $E_* \equiv E_{b_*}(k_*)$ into a spectral gap, requires regularity and detailed properties of $E_b(k)$ near its edges. These are summarized in the following results (see a sketch of $E_b(k)$ in Figure~\ref{fig:SketchOfSpectrumQ}, left panel). 
Proofs and references to proofs are given in Appendices~\ref{pf-of-lem:band-edge} and~\ref{regularity-app}. 

\begin{Lemma}\label{lem:band-edge} 
Assume $E_b(k_*)$ is an endpoint of a spectral band of $-\partial_x^2 + Q(x)$, which borders on a spectral gap;
 see \eqref{whatisgap}.
Then $k_*\in\{0,1/2\}$ and the following results hold:
\begin{enumerate}
\item $E_b(k_*)$ is a simple eigenvalue of the eigenvalue problem~\eqref{k-evp}.
\item \subitem $b$ even: $E_b(0)$ corresponds to the {\em left (lowermost)} end point of the band,
\subitem \phantom{$b$ even:} $E_b(1/2)$ corresponds to the {\em right (uppermost)} end point.
\subitem $b$ odd: $E_b(0)$ corresponds to the {\em right (uppermost)} end point of the band,
\subitem \phantom{$b$ odd:} $E_b(1/2)$ corresponds to the {\em left (lowermost)}  end point.
\item $\partial_k E_b(k_*) = 0$,\ $\partial_k^3E_b(k_*) = 0$;
\item \subitem $b$ even: $\partial_k^2 E_b(0) > 0$, $\partial_k^2 E_b(1/2) < 0$;
\subitem $b$ odd: $\partial_k^2 E_b(0) < 0$, $\partial_k^2 E_b(1/2) > 0$; 
\end{enumerate}
\end{Lemma}

\begin{Lemma}\label{lem:regularity-of-Eb}
 Let $E_b(k_1)$ denote a simple eigenvalue; thus $k_1=k_*$ as above applies. Then, the mappings $k\mapsto E_b(k),\ k\mapsto u_b(x;k)$, with $u_b$ normalized, can be chosen to be analytic for $k$ in a sufficiently small complex neighborhood of $k_1$.  Moreover, for $k$ real and in this neighborhood $(E_b(k),u_b(x;k))$ are Floquet-Bloch eigenpairs. 
\end{Lemma}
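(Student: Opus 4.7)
The plan is to recast the Floquet-Bloch problem as a holomorphic perturbation problem and invoke Kato's analytic perturbation theory. On the Hilbert space $L^2_{\text{per}}([0,1])$ with $k$-independent domain $D = H^2_{\text{per}}([0,1])$, define
\[
L(k) \ =\ -(\partial_x + 2\pi i k)^2 + Q(x) \ =\ -\partial_x^2 - 4\pi i k\, \partial_x + 4\pi^2 k^2 + Q(x), \qquad k \in \CC.
\]
For every $p \in D$, the map $k\mapsto L(k)p$ is polynomial of degree two in $k$, and the first-order term $-4\pi ik\,\partial_x$ is $(-\partial_x^2)$-bounded with relative bound zero, so $L(k)$ is closed on $D$ for every $k\in\CC$. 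Hence $\{L(k)\}_{k\in\CC}$ is a holomorphic family of type (A) in the sense of Kato.

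Next I would invoke the Riesz projection. The simplicity of $E_b(k_1)$, together with self-adjointness and compactness of the resolvent of $L(k_1)$, makes $E_b(k_1)$ an isolated eigenvalue of algebraic multiplicity one. Pick a small positively oriented circle $\Gamma \subset \CC$ enclosing $E_b(k_1)$ and no other point of $\spec(L(k_1))$. Joint analyticity of $(k,z)\mapsto (L(k)-z)^{-1}$ on a neighborhood of $\{k_1\}\times\Gamma$ yields a family of projections
\[
P(k) \ =\ -\frac{1}{2\pi i}\oint_\Gamma (L(k)-z)^{-1}\, dz,
\]
holomorphic in $k$ on a complex neighborhood $\mathcal{U}$ of $k_1$. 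By the local constancy of rank in a continuous projection family, each $P(k)$ has rank one, so $L(k)$ possesses precisely one eigenvalue $E_b(k)$ inside $\Gamma$, with one-dimensional eigenspace equal to the range of $P(k)$.

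To produce the analytic eigenpair, set $\tilde p(\cdot;k) = P(k)\, p_b(\cdot;k_1)$, which is holomorphic in $k$ with values in $D$ and nonvanishing near $k_1$. Normalizing by $p_b(\cdot;k) = \tilde p(\cdot;k)/c(k)$ with $c(k)=\langle p_b(\cdot;k_1), \tilde p(\cdot;k)\rangle_{L^2([0,1])}$ — a holomorphic scalar with $c(k_1)=1$ — yields a holomorphic family of eigenfunctions; the corresponding eigenvalue, determined uniquely by $L(k)p_b(\cdot;k) = E_b(k)p_b(\cdot;k)$, is then also holomorphic. For real $k\in\mathcal{U}$, self-adjointness of $L(k)$ ensures that $p_b(\cdot;k)$ is (up to a nonzero scalar factor) a genuine Floquet-Bloch eigenfunction of $L(k)$ for the eigenvalue $E_b(k)$, and $u_b(x;k) = e^{2\pi ikx}p_b(x;k)$ is the claimed analytic family. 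The main obstacle is verifying the type-(A) hypothesis, that is, controlling the $k$-dependent first-order derivative term so as to conclude closedness of $L(k)$ uniformly on compact subsets of $\CC$; all subsequent steps are routine applications of Kato's theory of analytic perturbations of isolated simple eigenvalues.
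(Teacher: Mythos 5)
Your proof is correct but takes a genuinely different route from the paper's self-contained argument in Appendix~\ref{regularity-app}. You invoke the Kato machinery: check that $L(k) = -(\partial_x + 2\pi i k)^2 + Q$ on the fixed domain $H^2_{\mathrm{per}}([0,1])$ is a holomorphic family of type (A), form the Riesz projection $P(k) = -\frac{1}{2\pi i}\oint_\Gamma (L(k)-z)^{-1}\,dz$ around the simple isolated eigenvalue $E_b(k_1)$, use local constancy of rank to obtain a rank-one analytic projection, and read off the analytic eigenvalue and an analytically normalized eigenvector $P(k)p_b(\cdot;k_1)/c(k)$. The paper instead proves Theorem~\ref{reg-in-k} by a hands-on Lyapunov--Schmidt reduction: it writes $p(x;k_*+\kappa)=p(x;k_*)+\kappa\eta_1$ and $E(k_*+\kappa)=E_\star+\kappa\mu_1$, splits the eigenvalue equation using the orthogonal projectors onto $\mathrm{span}\{p(\cdot;k_*)\}$ and its complement, solves the $Q_\perp$-component for $\eta_1(\cdot;\mu_1,\kappa)$ via a convergent Neumann series built from the reduced resolvent $R(E_*)Q_\perp$, and closes the resulting scalar bifurcation equation $\mathcal{G}(\mu_1,\kappa)=0$ by the Implicit Function Theorem. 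Both arguments establish the claim. The paper's version is elementary and self-contained, and produces the explicit leading-order expansions of $p_b$ and $E_b$ in $\kappa$ that are reused in the bootstrap argument of Appendix~\ref{proof-of-Corollary-per}; yours is shorter, more conceptual, and carries over essentially verbatim to higher-dimensional periodic operators where the ODE and Wronskian methods of Appendix~\ref{pf-of-lem:band-edge} are unavailable. One minor caveat, which applies equally to the normalization asserted in Theorem~\ref{reg-in-k}: for complex $k$ the quantity $\int_0^1 |p_b(x;k)|^2\,dx$ is not holomorphic, so the stated $L^2$-normalization should be imposed only for real $k$, or replaced, as you essentially do, by an analytic bilinear normalization such as $\langle p_b(\cdot;k_1),p_b(\cdot;k)\rangle_{L^2([0,1])}=1$.
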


\subsection{The Gelfand-Bloch transform} 
Given $f \in L^2(\mathbb{R})$, we introduce the Gelfand-Bloch transform $\mathcal{T}$ and its inverse as follows
\begin{align*}
\mathcal{T}\{f(\cdot)\} & =  \widetilde{f}(x;k) = \sum_{n\in\mathbb{Z}}e^{2\pi inx}\widehat{f}(k+n),\qquad
\mathcal{T}^{-1}\{\widetilde{f}(x;\cdot)\}(x)  =  \int_{-1/2}^{1/2} e^{2\pi ixk}\widetilde{f}(x;k)dk.
\end{align*}
One can check that $\mathcal{T}^{-1}\mathcal{T} = {\rm Id}$.
 Let 
\begin{equation}
u(x;k)\ =\ e^{2\pi ikx} p(x;k)
\label{p-def}
\end{equation}
denote a Floquet-Bloch mode. Then, 
by the Poisson summation formula, we have that
\[ \langle u(\cdot,k),f \rangle_{L^2(\mathbb{R})} \ = \ \langle p(\cdot,k),\widetilde f(\cdot;k) \rangle_{L^2_{\rm per}([0,1])}
\ .\]

Define 
\begin{equation}\label{def:Tb}
\mathcal{T}_b \{f\}(k) \equiv \langle p_b(\cdot;k),\widetilde{f}(\cdot;k) \rangle_{L^2_{\rm per}([0,1])} \equiv \int_0^1 \overline{p_b(x;k)}\widetilde{f}(x;k)dx.
\end{equation}
By completeness of the $\{p_b(x;k)\}_{b \geq 0}$, 
the spectral decomposition of $f\in L^2(\RR)$ in terms of Floquet-Bloch states is
\[
\widetilde{f}(x;k) \ = \ \sum_{b \geq 0} \mathcal{T}_b \{f\}(k) p_b(x;k) \ , \quad f(x) \ = \ \sum_{b \geq 0} \int_{-1/2}^{1/2} \mathcal{T}_b \{f\}(k) u_b(x;k) dk.
\]

 Recall the Sobolev space, $H^s$, the space of functions with square integrable derivatives  up to order $s$. It is natural to construct the following $\mathcal{X}^s$ norm in terms of Floquet-Bloch states:
\begin{equation}
\| \widetilde{\phi} \|_{\mathcal{X}^s}^2 \equiv \int_{-1/2}^{1/2} \sum_{b \geq 0} \left(1+|b|^2\right)^s |\mathcal{T}_b\{\phi\}(k)|^2dk.
\label{Xs-norm}
\end{equation}

\begin{Proposition}\label{prop:norm_equivalence}
$H^s(\mathbb{R})$ is isomorphic to $\mathcal{X}^s$ for $s \geq 0$. Moreover, there exist positive constants $C_1$, $C_2$ such that for all $\phi \in H^s(\mathbb{R})$, we have
$
C_1 \| \phi \|_{H^s(\mathbb{R})} \leq \| \widetilde{\phi} \|_{\mathcal{X}^s} \leq C_2 \| \phi \|_{H^s(\mathbb{R})}.
$
\end{Proposition}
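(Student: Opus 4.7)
The plan is to handle $s=0$ as a direct Parseval identity and to reduce the general case $s>0$ to a uniform comparison $E_b(k)+C_0 \asymp 1+b^2$ by means of spectral calculus applied to $H_Q$. For $s=0$, Poisson summation gives
\[
\|\widetilde\phi(\cdot;k)\|_{L^2_{\rm per}([0,1])}^2 \ =\ \sum_{n\in\mathbb{Z}} |\widehat\phi(k+n)|^2
\]
for each $k\in(-1/2,1/2]$; integrating in $k$ and applying Plancherel on $\mathbb R$ yields $\|\phi\|_{L^2(\mathbb R)}^2 = \int_{-1/2}^{1/2}\|\widetilde\phi(\cdot;k)\|_{L^2_{\rm per}}^2\, dk$. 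Since $\{p_b(\cdot;k)\}_{b\geq 0}$ is an orthonormal basis of $L^2_{\rm per}([0,1])$, Parseval in the fiber gives $\|\widetilde\phi(\cdot;k)\|_{L^2_{\rm per}}^2 = \sum_b |\mathcal{T}_b\phi(k)|^2$, so $\|\phi\|_{L^2}^2 = \|\widetilde\phi\|_{\mathcal{X}^0}^2$ with both constants equal to $1$.

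For $s>0$, I would pick $C_0 > 0$ large enough that $H_Q + C_0 \geq I$ and invoke the equivalence
\[
\|\phi\|_{H^s(\mathbb R)}\ \asymp\ \|(H_Q + C_0)^{s/2}\phi\|_{L^2(\mathbb R)},
\]
standard for Schr\"odinger operators with smooth periodic potentials: for integer $s$ by iterated commutator/integration-by-parts identities using boundedness and smoothness of $Q$, and for non-integer $s$ by complex interpolation between consecutive integer cases. Because the Floquet-Bloch decomposition diagonalizes $H_Q$, the operator $(H_Q + C_0)^{s/2}$ acts as multiplication by $(E_b(k) + C_0)^{s/2}$, so
\[
\|(H_Q + C_0)^{s/2}\phi\|_{L^2}^2\ =\ \int_{-1/2}^{1/2}\sum_{b\geq 0}(E_b(k) + C_0)^s\,|\mathcal{T}_b\phi(k)|^2\, dk.
\]
The proof then reduces to the uniform comparability $E_b(k)+C_0 \asymp 1+b^2$ for $b\geq 0$ and $k\in(-1/2,1/2]$.

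This last step follows from min-max. Since $|Q(x)|\leq \|Q\|_{L^\infty}$, the form inequality $-(\partial_x + 2\pi i k)^2 - \|Q\|_\infty \leq -(\partial_x + 2\pi i k)^2 + Q \leq -(\partial_x + 2\pi i k)^2 + \|Q\|_\infty$ on $L^2_{\rm per}([0,1])$ implies $|E_b(k) - E_b^{(0)}(k)| \leq \|Q\|_\infty$, where $E_b^{(0)}(k)$ are the sorted values of $\{4\pi^2(k+n)^2 : n\in\mathbb Z\}$; a direct lattice-point count shows $E_b^{(0)}(k) \asymp 1 + b^2$ uniformly in $k\in(-1/2,1/2]$, closing the argument. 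The main obstacle I anticipate is the Sobolev-side equivalence $\|\phi\|_{H^s} \asymp \|(H_Q + C_0)^{s/2}\phi\|_{L^2}$ for non-integer $s$; an alternative that avoids fractional spectral calculus is to establish the proposition for integer $s$ directly, using the identity $\widetilde{\partial_x\phi} = (\partial_x + 2\pi i k)\widetilde\phi$ together with the diagonal action of $-(\partial_x+2\pi ik)^2 + Q$, and then interpolate the proposition itself between consecutive integers.
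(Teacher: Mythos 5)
Your proposal takes essentially the same approach as the paper: replace the $H^s$ norm by $\|(H_Q + c)^{s/2}\phi\|_{L^2}$ for a suitable shift $c$, use the Floquet--Bloch decomposition to diagonalize this as an $L^2$ sum weighted by $(E_b(k)+c)^s$, and then reduce to the uniform comparability $E_b(k)+c \asymp 1+b^2$. The only substantive difference is that where the paper simply cites Weyl asymptotics $E_b(k)\sim b^2$, you derive this from scratch via min-max comparison with the free operator plus a lattice-point count, and you flag (and sketch how to handle) the fractional-$s$ equivalence $\|\phi\|_{H^s}\asymp\|(H_Q+c)^{s/2}\phi\|_{L^2}$, which the paper asserts without comment; both of these additions are correct and make the argument more self-contained.
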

\begin{proof}
Since $E_0(0) = \inf \spec(-\partial_x^2 + Q)$, then $L_0 \equiv -\partial_x^2 + Q - E_0(0)$ is a non-negative operator and $H^s(\mathbb{R})$ has the equivalent norm defined by 
$
\| \phi \|_{H^s} \approx \|(I + L_0)^{s/2} \phi \|_{L^2}.
$
Using orthogonality, it follows that
\begin{align*}
\| \phi \|_{H^s}^2 & \approx \|(I + L_0)^{s/2} \phi \|_{L^2}^2\ 
 =\  \sum_{b \geq 0} \int_{-1/2}^{1/2} |\mathcal{T}_b\{\phi\}(k)|^2 |1 + E_b(k) - E_0(0)|^s dk \\
& \approx  \sum_{b \geq 0} \left( 1 + |b|^2 \right)^s \int_{-1/2}^{1/2} |\mathcal{T}_b\{\phi\}(k)|^2 dk\ \equiv \| \widetilde{\phi} \|_{\mathcal{X}^s}^2.
\end{align*}
The last line follows from the Weyl asymptotics $E_b(k) \sim b^2$; see, for example,~\cite{Courant-Hilbert-I}. 
This completes the proof of Proposition~\ref{prop:norm_equivalence}
.\end{proof}

We conclude this section with a Lemma, which gives various estimates on the Floquet-Bloch states of $H_Q$  and the spectrum of $H_{Q+\lambda V}$, for a class of periodic potentials, $Q$, and localized potentials, $V$. These estimates are used within the proof of Theorem~\ref{thm:per_result}, in Section~\ref{sec:per}.
\begin{Lemma}\label{lem:estimates}
Assume that $Q$
 is continuous, $1-$periodic, and $V$ is such that $(1+|\cdot|)V(\cdot)\in L^1$. Let $\Omega$ be a small neighborhood of $k_1$ a simple eigenvalue, such that Lemma~\ref{lem:regularity-of-Eb} applies.
Then one has:
\begin{subequations}
\begin{align}
(a)\qquad &\ \sup_{k\in(-\frac12,\frac12]}\big\Vert p_{b}(\cdot;k)\big\Vert_{L^\infty}\ \leq \ \sup_{k\in(-\frac12,\frac12]} \sum_{n\in\mathbb Z}\left| \left\langle p_{b}(\cdot;k),e^{2\pi i n\cdot}\right\rangle_{L^2([0,1])} \right| \ <\ \infty, \label{pb-k-reg}\\ 
&\nn\\
(b)\qquad &\sup_{k\in\Omega}\big\Vert \partial_k p_{b}(\cdot;k)\big\Vert_{L^\infty}\ \leq\ \sup_{k\in\Omega}\sum_{n\in\mathbb Z}\left| \left\langle \partial_k p_{b}(\cdot;k),e^{2\pi i n\cdot}\right\rangle_{L^2([0,1])} \right| \ <\ \infty.\label{pb-dk-reg}
\end{align}
\end{subequations}
\end{Lemma}
\begin{proof}
We begin by proving that $p_b(x;k)$ is uniformly bounded for $x\in\RR$ and $k\in(-1/2,1/2]$. Since $p_b(\cdot;k)$ is $1-$periodic, it is bounded if its Fourier coefficients are summable. Thus we study
\[ \sum_{n\in\ZZ}\big|\langle p_b(\cdot;k),e^{2\pi i n\cdot}\rangle_{L^2([0,1])}\big| \ = \ \sum_{n\in\ZZ}\left| \int_0^1 p_b(x;k)e^{-2\pi i nx}\ dx\right|.\]
Since $k\in(-1/2,1/2]$, we can use integration by parts for $n\neq0$, the Cauchy-Schwarz inequality and equation~\eqref{eq:e-value} for $p_b(x;k)$ to obtain
\begin{align*}
\sum_{n\in\ZZ}\big|\langle p_b(\cdot;k),e^{2\pi i n\cdot}\rangle_{L^2([0,1])}\big|
&\leq \big\Vert p_b(x;k)\big\Vert_{L^2([0,1])}\big\Vert 1\big\Vert_{L^2([0,1])}\\
&\quad +\sum_{n\in\ZZ\setminus\{0\}}\left| \int_0^1 (Q(x)-E_b(k))p_b(x;k)\left(\frac1{2\pi i(n-k)}\right)^2e^{-2\pi i nx}\ dx\right|\\
&\leq 1 +\sum_{n\in\ZZ\setminus\{0\}}\frac1{4\pi^2 (n-k)^2}\big\Vert (Q(\cdot)-E_b(k))p_b(\cdot;k) \big\Vert_{L^2([0,1])}.
\end{align*}
Thus,  $ \sup_{k\in(-1/2,1/2]}\big\Vert p_b(\cdot;k)\big\Vert_{L^\infty} \leq\sup_{k\in(-1/2,1/2]} \sum_{n\in\ZZ}\big|\langle p_b(\cdot;k),e^{2\pi i n\cdot}\rangle_{L^2([0,1])}\big| <\infty$.
\medskip

We now turn to the study of $\partial_k p_b(x;k)$ in (b). Differentiating~\eqref{eq:e-value} with respect to  $k$ yields
\[
\left( -(\partial_x + 2\pi ik)^2 + Q(x) \right)\partial_k p_b(x;k) = E_b(k)\partial_k p_b(x;k)+\big(\partial_k E_b(k) + 4\pi i( \partial_x+2\pi i k) \big)p_b(x;k).
\]
Following the same method as above yields
\begin{align*}
\big\Vert \partial_k p_{b}(\cdot;k)\big\Vert_{L^\infty} &\leq \sum_{n\in\ZZ}\big|\langle \partial_k p_b(\cdot;k),e^{2\pi i n\cdot}\rangle_{L^2([0,1])}\big|  \\
&\leq C(\big\Vert Q\big\Vert_{L^\infty},E_b(k))\big\Vert \partial_k p_b(x;k)\big\Vert_{L^2([0,1])} +C(\partial_k E_b(k),\big\Vert Q\big\Vert_{L^\infty},E_b(k) ).
\end{align*}
The finiteness of $\big\Vert \partial_k p_b(\cdot;k)\big\Vert_{L^2([0,1])}$ and $\partial_k E_b(k)$ for $k\in\Omega$  is a consequence of Lemma~\ref{lem:regularity-of-Eb}; thus (b) follows.
\end{proof}

\section{Bifurcation of defect states into gaps;\ main results} \label{sec:results}

Consider the eigenvalue problem:
\[
\left(-\partial_x^2 + Q(x)+\lambda V(x) \right)\psi^\lambda\ = \ E^\lambda \psi^\lambda\ ,\ \psi\in L^2(\mathbb R) ,
\]
where $Q(x)$ is continuous, $1-$periodic, $\lambda>0$ is small, and $V(x)$ is spatially localized.  
Our first result concerns the case where $Q\equiv0$:
\begin{Theorem}[$Q\equiv 0$]\label{thm:Qzero}
Let $V$ be such that $\widehat V\in W^{1,\infty}(\mathbb{R})$; thus $\int_\RR (1+|x|)|V(x)|\ dx<\infty$ suffices. Assume 
$ \widehat V(0)\ =\ \int_\mathbb{R} V\ <\ 0$. 
There exists positive constants $\lambda_0$ and  $C(V,\lambda_0)$, such that for all $0<\lambda<\lambda_0$, there exists an eigenpair $(E^\lambda,\psi^\lambda)$, solution of the eigenvalue problem
\begin{equation}\label{eq:nonper}
\left(-\partial_x^2+\lambda V(x)\right)\psi^\lambda(x)\ =\ E^\lambda\psi^\lambda(x)
\end{equation}
with negative eigenvalue of the order $\lambda^2$. Specifically, 
\begin{align}
 \left| E^\lambda\ - \ \Big[-\frac{\lambda^2}{4}\ \left(\int_\mathbb{R} V\right)^2\Big]\ \right| \ &\leq \ C \lambda^{5/2} \ ,
\label{Elambda} \\
 \sup_{x\in\RR}\left|\psi^\lambda(x) \ - \ \exp\left(\frac{\lambda}{2}\left(\int_\mathbb{R} V\right)|x|\right) \right| \ &\leq \ C \lambda^{1/2}.
 \label{psi-lambda}
 \end{align}
The eigenvalue, $E^\lambda$, is unique in the neighborhood defined by~\eqref{Elambda}, and the corresponding eigenfunction, $\psi$, is unique up to a multiplicative constant.
\end{Theorem}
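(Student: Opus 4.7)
The plan is to carry out the Lyapunov--Schmidt reduction sketched in Section~\ref{sec:outline} in the simplest setting, where the background operator is $-\partial_x^2$ and frequency decomposition is just the classical Fourier transform. Writing $E^\lambda = -\mu^2$ with $\mu>0$ to be determined, the eigenvalue equation becomes, in Fourier variables,
\begin{equation*}
(4\pi^2\xi^2 + \mu^2)\,\widehat{\psi^\lambda}(\xi)\;=\;-\lambda\,\bigl(\widehat V * \widehat{\psi^\lambda}\bigr)(\xi).
\end{equation*}
The hypothesis $(1+|x|)V\in L^1$ guarantees $\widehat V \in W^{1,\infty}(\RR)$. Anticipating $\mu = O(\lambda)$ and that $\widehat{\psi^\lambda}$ is concentrated on frequencies of size $\mu$, I introduce a cutoff scale $\delta=\delta(\lambda)$ with $\lambda^{1/2}\ll\delta\ll 1$ and split $\widehat{\psi^\lambda} = \widehat\psi_\nr + \widehat\psi_\fr$ with $\widehat\psi_\nr = \chi_\delta \widehat{\psi^\lambda}$.

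First I solve the far equation
\begin{equation*}
(4\pi^2\xi^2+\mu^2)\widehat\psi_\fr = -\lambda\,\overline{\chi_\delta}\,\widehat V * (\widehat\psi_\nr + \widehat\psi_\fr)
\end{equation*}
for $\widehat\psi_\fr$ as a functional of $\widehat\psi_\nr$. On the support of $\overline{\chi_\delta}$, the multiplier $(4\pi^2\xi^2+\mu^2)^{-1}$ is bounded by $C\delta^{-2}$; combined with $\widehat V \in L^\infty$ and $V \in L^1$, the map $\widehat\psi_\fr \mapsto -\lambda\,\overline{\chi_\delta}(4\pi^2\xi^2+\mu^2)^{-1}\widehat V * \widehat\psi_\fr$ is a contraction on a suitable weighted function space provided $\lambda\delta^{-2}$ is small. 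This produces $\widehat\psi_\fr = \mathcal{E}[\widehat\psi_\nr;\mu,\lambda]$ with $\|\widehat\psi_\fr\| \lesssim (\lambda/\delta^2)\|\widehat\psi_\nr\|$.

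Substituting back yields the reduced bifurcation equation
\begin{equation*}
(4\pi^2\xi^2+\mu^2)\widehat\psi_\nr(\xi) \;=\; -\lambda\,\chi_\delta(\xi)\,\widehat V * \bigl(\widehat\psi_\nr + \mathcal{E}[\widehat\psi_\nr]\bigr)(\xi),\qquad |\xi|<\delta.
\end{equation*}
Since $\widehat V(\xi-\eta) = \widehat V(0) + O(|\xi|+|\eta|)$ uniformly for $|\xi|,|\eta|<\delta$, the leading-order near equation is $\widehat\psi_\nr(\xi) \approx -\lambda\widehat V(0)\,M\,\chi_\delta(\xi)/(4\pi^2\xi^2+\mu^2)$, where $M := \int\widehat\psi_\nr$. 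Integrating over $(-\delta,\delta)$ and using
\begin{equation*}
\int_{-\delta}^{\delta}\frac{d\xi}{4\pi^2\xi^2+\mu^2}\;=\;\frac{1}{\pi\mu}\arctan\!\bigl(2\pi\delta/\mu\bigr)\;=\;\frac{1}{2\mu}\bigl(1+O(\mu/\delta)\bigr),
\end{equation*}
yields the scalar consistency condition $1 \approx -\lambda\widehat V(0)/(2\mu)$. Since $\widehat V(0) = \int V < 0$, this uniquely fixes $\mu = -\lambda(\int V)/2 + \text{error}$, whence $E^\lambda = -\mu^2 = -\lambda^2(\int V)^2/4 + \text{error}$. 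Inverse Fourier transforming the near profile (up to truncation error $O(\mu/\delta)$) recovers $e^{-\mu|x|} = \exp\bigl((\lambda/2)(\int V)|x|\bigr)$, matching~\eqref{psi-lambda}.

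The principal technical obstacle is the quantitative bookkeeping of three error sources: the tail $O(\mu/\delta)$ from truncating the frequency integral, the convolution expansion error $O(\delta)$ coming from $\widehat V(\xi-\eta) - \widehat V(0)$, and the far-feedback error $O(\lambda/\delta^2)$. These must be organized in Banach-space norms that survive the rescaling $\xi = \mu\eta$ which is implicit in isolating the effective delta-well problem. Balancing with an optimized $\delta$ in the window $\lambda^{1/2}\ll\delta\ll 1$, and invoking the general bifurcation lemma to be established in Section~\ref{sec:gentech}, should deliver the claimed $O(\lambda^{5/2})$ bound in~\eqref{Elambda} and the $O(\lambda^{1/2})$ bound in~\eqref{psi-lambda}. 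Uniqueness of $E^\lambda$ in the specified neighborhood, and of the eigenfunction up to a multiplicative constant, follows from the contraction in the far equation together with the implicit function theorem applied to the scalar reduced equation for $\mu$.
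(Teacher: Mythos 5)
Your strategy is the same as the paper's — Fourier transform, near/far frequency splitting at a $\lambda$-dependent scale, inversion of the far equation, and Lyapunov--Schmidt reduction of the near equation to a scalar condition via the general lemma of Section~\ref{sec:gentech} — but your error bookkeeping for the far-frequency feedback is too pessimistic, and this prevents the argument from reaching the claimed $\lambda^{5/2}$ bound.

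You estimate the far-frequency operator norm by $\lambda\sup_{|\xi|\geq\delta}(4\pi^2\xi^2+\mu^2)^{-1}\lesssim\lambda\delta^{-2}$, i.e., you take the pointwise supremum of the resolvent multiplier on the far region. The sharper bound used by the paper (Proposition~\ref{prop:far_nonper_intermsof_near_nonper}) works in $L^1$ on the Fourier side and exploits $\widehat V\in L^\infty$:
\[
\Big\|\lambda\,\overline{\chi_\delta}(\xi)\,(4\pi^2\xi^2+\mu^2)^{-1}\,(\widehat V*\widehat g)\Big\|_{L^1_\xi}
\ \leq\ \lambda\,\|\widehat V\|_{L^\infty}\|\widehat g\|_{L^1}\int_{|\xi|\geq\delta}\frac{d\xi}{4\pi^2\xi^2}\ \lesssim\ \frac{\lambda}{\delta}\,\|\widehat g\|_{L^1}.
\]
With this $L^1$-based estimate the far-feedback is $O(\lambda/\delta)$, not $O(\lambda/\delta^2)$. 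This matters for the balancing. Writing $\delta=\lambda^r$, your three error sources are $\lambda^{1-r}$ (tail truncation $\mu/\delta$), $\lambda^r$ (Taylor expansion of $\widehat V$), and $\lambda^{1-2r}$ (your far feedback). Their maximum is $\max(\lambda^r,\lambda^{1-2r})$, optimized at $r=1/3$, giving a $\lambda^{1/3}$ bound on the perturbation — hence $|E^\lambda+\lambda^2\theta_0^2|\lesssim\lambda^{7/3}$ and $|\psi^\lambda-g_0|\lesssim\lambda^{1/3}$, \emph{not} the claimed $\lambda^{5/2}$ and $\lambda^{1/2}$. Your stated window $\lambda^{1/2}\ll\delta\ll 1$ (forced by needing $\lambda\delta^{-2}\ll1$) even excludes the paper's choice $\delta=\lambda^{1/2}$ at the outset. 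With the correct far-feedback $\lambda/\delta=\lambda^{1-r}$, the three error sources are $\lambda^{1-r},\lambda^r,\lambda^{1-r}$, and $r=1/2$ balances them at $\lambda^{1/2}$, which then feeds into Lemma~\ref{lem:technical} to produce the $\lambda^{2+1/2}=\lambda^{5/2}$ eigenvalue bound. The fix is simply to carry out the far-frequency inversion in the $L^1$ norm on the Fourier side, integrating the resolvent tail rather than taking its pointwise supremum. Once that is repaired, the rest of your outline — the arctan computation yielding the scalar condition $1\approx-\lambda\widehat V(0)/(2\mu)$, the inverse Fourier transform recovering $e^{-\mu|x|}$, and uniqueness via the contraction/implicit function theorem — aligns with the paper's proof.
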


\begin{Remark}\label{rmk:homogenized-Q0} Theorem~\ref{thm:Qzero} shows, and is essentially proved by demonstrating, that for small positive $\lambda$, the leading order behavior of the eigenstate $\left(E^\lambda,\psi^\lambda(x)\right)$ is a scaling of the unique eigenstate of the attractive Dirac delta potential: 
\[\left(\ E^\lambda\ ,\ \psi^\lambda(x)\ \right) \ \approx \ \left(\ \lambda^2\theta_0^2\ ,\ g_0(\lambda x)\ \right) \ , \]
where $\theta_0 = -\frac{1}{2}\int_{\mathbb R } V>0$ and $g_0(y) = e^{-\theta_0 |y|}$ satisfy
\begin{equation}
\left[-\partial_y^2\ +\ \int_{\mathbb R}V\ \cdot\ \delta(y) \right]\ g_0(y) = -\theta_0^2\ g_0(y).
\end{equation}
\end{Remark}
The error bounds in Theorem~\ref{thm:Qzero} are not optimal. However, the bootstrap argument of  Appendix~\ref{proof-of-Corollary-per}  can be used to recover a higher order expansion on $E^\lambda$, similar to that obtained in~\cite{simon1976bound}.
\begin{Corollary}\label{cor:Elambdaprecise}
Assume $(1+|x|^2)V\in L^1$, and $\widehat V(0)\ =\ \int_\mathbb{R} V(z)\ dz\ <\ 0$ . Then $E^\lambda$, as defined in Theorem~\ref{thm:Qzero}, satisfies the precise estimate:
\begin{equation}
 E^\lambda\ =\ -\lambda^2\ \left[\theta(\lambda)\right]^2 ,\text{ with } \theta(\lambda) \ = \ -\frac12 \int_\mathbb{R} V \ -\ \frac14\ \lambda\iint_{\mathbb{R}^2}\ V(x)|x-y|V(y)\ dx dy \ \ + \ \O(\lambda^{3/2})\ .
\label{Elambda-precise}
\end{equation}
\end{Corollary}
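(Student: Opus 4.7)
My plan is to promote the leading estimate of Theorem~\ref{thm:Qzero} to a second-order expansion by bootstrapping an integral-equation reformulation of~\eqref{eq:nonper}. Writing $E^\lambda=-\lambda^2\theta^2$ for some $\theta=\theta(\lambda)>0$, the eigenvalue equation is equivalent to the fixed-point relation
$$\psi^\lambda(x)\ =\ -\frac{1}{2\theta}\int_{\RR} e^{-\lambda\theta|x-y|}V(y)\psi^\lambda(y)\,dy,$$
since $\tfrac{1}{2\lambda\theta}e^{-\lambda\theta|x-y|}$ is the integral kernel of $(-\partial_x^2+\lambda^2\theta^2)^{-1}$ on $\RR$. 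Theorem~\ref{thm:Qzero} gives $\psi^\lambda(x)\approx e^{-\lambda\theta_0|x|}$ up to $\mathcal{O}(\lambda^{1/2})$ in $L^\infty$; in particular $\psi^\lambda(0)\neq 0$ for small $\lambda$, so I may normalize $\alpha^\lambda:=\psi^\lambda/\psi^\lambda(0)$. Evaluating the fixed-point equation at $x=0$ collapses the eigenvalue relation to the scalar identity
$$2\theta\ =\ -\int_{\RR}e^{-\lambda\theta|y|}\,V(y)\,\alpha^\lambda(y)\,dy.\qquad(\star)$$

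The first step of the bootstrap refines $\alpha^\lambda$. Subtracting the $x=0$ and general-$x$ versions of the fixed-point equation yields
$$\alpha^\lambda(x)-1\ =\ -\frac{1}{2\theta}\int_{\RR}\bigl(e^{-\lambda\theta|x-y|}-e^{-\lambda\theta|y|}\bigr)V(y)\alpha^\lambda(y)\,dy.$$
Feeding in the crude bound $\alpha^\lambda=1+\mathcal{O}(\lambda^{1/2})$ from Theorem~\ref{thm:Qzero}, together with the Taylor expansion $e^{-\lambda\theta|x-y|}-e^{-\lambda\theta|y|}=-\lambda\theta(|x-y|-|y|)+\mathcal{O}(\lambda^2)$ and the hypothesis $(1+|\cdot|^2)V\in L^1$, upgrades this to
$$\alpha^\lambda(x)\ =\ 1\ +\ \frac{\lambda}{2}\int_{\RR}\bigl(|x-y|-|y|\bigr)V(y)\,dy\ +\ r^\lambda(x),$$
where $r^\lambda$ is small enough that $\int V(x)r^\lambda(x)\,dx$ contributes only at order $\mathcal{O}(\lambda^{3/2})$ in the next step.

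The second step substitutes this refined expression into $(\star)$ and expands $e^{-\lambda\theta|y|}=1-\lambda\theta|y|+\mathcal{O}(\lambda^2y^2)$, producing
\begin{align*}
2\theta\ =\ -\int_{\RR}V\ +\ \lambda\theta\int_{\RR}|y|V(y)\,dy\ -\ \frac{\lambda}{2}\iint_{\RR^2}\bigl(|y-z|-|z|\bigr)V(y)V(z)\,dy\,dz\ +\ \mathcal{O}(\lambda^{3/2}).
\end{align*}
Inserting $\theta=\theta_0+\mathcal{O}(\lambda^{1/2})$ with $\theta_0=-\tfrac12\int V$ rewrites $\lambda\theta\int|y|V(y)\,dy$ as $-\tfrac{\lambda}{2}(\int V)\int|y|V(y)\,dy+\mathcal{O}(\lambda^{3/2})$, and this cancels exactly the piece $+\tfrac{\lambda}{2}(\int V)\int|z|V(z)\,dz$ coming from the $|z|$ term of the double integral. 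What survives is
$$\theta(\lambda)\ =\ -\frac{1}{2}\int_{\RR}V\ -\ \frac{\lambda}{4}\iint_{\RR^2}V(x)|x-y|V(y)\,dx\,dy\ +\ \mathcal{O}(\lambda^{3/2}),$$
which is the claim.

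The step I expect to be most delicate is not this algebraic cancellation but the rigorous book-keeping of the $\mathcal{O}(\lambda^{3/2})$ remainders: one must verify that all integrals converge absolutely, that the second-order Taylor remainders of the exponentials integrate against $V$ (this is precisely where the strengthened hypothesis $(1+|x|^2)V\in L^1$ enters, in contrast to the $(1+|x|)V\in L^1$ hypothesis of Theorem~\ref{thm:Qzero}), and that the crude $\mathcal{O}(\lambda^{1/2})$ error in $\alpha^\lambda$ contributes only at the admissible order $\mathcal{O}(\lambda\cdot\lambda^{1/2})$ after multiplication by the small Lipschitz factor $\lambda\theta\bigl||x-y|-|y|\bigr|\lesssim\lambda(1+|x|)$.
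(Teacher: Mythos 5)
Your proof is correct and takes essentially the same route as the paper's Appendix~\ref{proof-of-Corollary-per} bootstrap, specialized to $Q\equiv 0$: pass to the resolvent (Birman--Schwinger) integral equation, extract a scalar identity, feed in the leading approximation from Theorem~\ref{thm:Qzero}, and Taylor-expand the exponential kernel. The only variation is that you obtain the scalar identity by evaluating the fixed point at $x=0$ and normalizing $\psi^\lambda(0)=1$ (and you refine $\alpha^\lambda$ directly from the integral equation), whereas the paper pairs against $u_{b_*}V$ and divides out the factor $\int u_{b_*}V\psi^\lambda$; the algebraic cancellation of the $\left(\int V\right)\int|y|V(y)\,dy$ terms and the role of $(1+|x|^2)V\in L^1$ in controlling the $\mathcal{O}(\lambda^{3/2})$ remainder are identical.
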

Simon~\cite{simon1976bound} and Klaus~\cite{klaus:77} prove  expansion~\eqref{Elambda-precise}, under the conditions: $(1+|x|)|V(x)\in L^1(\RR)$ and  $\int_\RR V\le0$, with the error term $o(\lambda)$.  Corollary~\ref{cor:Elambdaprecise} gives a sharper error term under a more stringent decay condition on $V$.
That Theorem \ref{thm:Qzero} implies Corollary~\ref{cor:Elambdaprecise} is proved in Appendix~\ref{proof-of-Corollary-per} .

\begin{Theorem}[ $Q$ non-trivial, $1-$periodic]\label{thm:per_result} 
Let $Q$ be continuous, $1-$periodic, and let $V$ be such that $\int_{\mathbb{R}}(1+|x|)V(x) dx<\infty$ and $V\in L^\infty$.
 Let $E_{b_*}:k\in(-1/2,1/2]\to\mathbb{R}$ denote the band dispersion function associated with the $(b_*)^{th}$ band of the continuous spectrum of $-\partial_x^2 + Q(x)$. Fix a spectral band edge of the $(b_*)^{th}$ band; thus $E_*=E_{b_*}(k_*)$, where $k_*=0$ or $k_*=1/2$ (see Lemma~\ref{lem:band-edge}). 
 
Assume either
\begin{equation} \partial_k^2 E_{b_*}(k_*) > 0 \quad \text{ and }\quad \int_{\mathbb{R}} |u_{b_*}(x;k_*)|^2 V(x)dx <~0,\label{Epp+}\end{equation}\\
 or 
 \begin{equation}
  \partial_k^2 E_{b_*}(k_*) < 0\quad \text{ and }\quad \int_{\mathbb{R}}\ |u_{b_*}(x;k_*)|^2 V(x)dx >~0.\label{Epp-}\end{equation}
 Then, there is a positive constants, $\lambda_0$ and $C=C(\lambda_0,V,Q)$,  such that for all $\lambda<\lambda_0$, the following assertions hold:
 \begin{enumerate}
\item  There exists an eigenpair $\left(E^\lambda,\psi^\lambda(x)\right)$ of the eigenvalue problem
\begin{equation}\label{eq:per}
\left(-\partial_x^2+Q(x)+\lambda V(x)\right)\psi^\lambda(x)\ =\ E^\lambda\psi^\lambda(x),\ \ \psi^\lambda\in L^2(\mathbb R)\ .
\end{equation}
\item   Define
\begin{equation}
\alpha_0 \ \equiv \ \frac{ \int_{-\infty}^{\infty} |u_{b_*}(x;k_*)|^2 V(x)dx}{\frac1{4\pi^2}\partial_k^2 E_{b_*}(k_*) }\ <\ 0,\label{alpha0-def}
 \end{equation}
 where the inequality holds by ~\eqref{Epp+} and~\eqref{Epp-}.
Then, $E^\lambda$ and $\psi^\lambda(x)$ satisfy the following approximations:
\begin{align}
 \left| E^\lambda\ - \ \left(\ E_{b_*}(k_*) \ + \ \lambda^2 E_2\ \right) \ \right| \ &\leq \ C \lambda^{2+1/4} \ ,
 \label{Elambda-Qper} \\
\sup_{x\in\mathbb R}\ \left|\ \psi^\lambda(x) \ - \ u_{b_*}(x;k_*)\exp(\lambda\alpha_0|x|) \right| \ &\leq \ C \lambda^{1/4} \ ,\label{psilambda-Qper}
 \end{align}
where
\begin{equation}
 E_2 \ = \ -\frac{ \left| \int_{-\infty}^{\infty} |u_{b_*}(x;k_*)|^2 V(x)dx\right|^2}{\frac1{2\pi^2}\partial_k^2 E_{b_*}(k_*) }\ .\label{E2-def}
 \end{equation}
 Note  that the direction of bifurcation 
of $E^\lambda$ is given by:
 \[  {\rm sgn}\left( E_2\right) =\ -{\rm sgn}\left( \partial_k^2 E_{b_*}(k_*)\right)\ .\]
\item The eigenstate, $(E^\lambda,\psi^\lambda)$, is unique (up to a multiplicative constant for $\psi^\lambda$) in the neighborhood defined by~\eqref{Elambda-Qper},\eqref{psilambda-Qper}.
\end{enumerate}
\end{Theorem}

\begin{Remark}\label{rmk:homogenized-per}
By Theorem~\ref{thm:per_result}, the bifurcating eigenvalue $E^\lambda$ lies in the spectral gap of $-\partial_x^2 + Q(x)$ at a distance $\mathcal{O}(\lambda^2)$ near the spectral edge $E_*$; see Figure~\ref{fig:SketchOfSpectrumQ}.
Moreover, $E_2$ is the unique eigenvalue and $g_0(y) = e^{\alpha_0 |y|}$ is the unique (up to multiplication by a constant) eigenfunction of the effective (homogenized) Hamiltonian:
\[
H_{\rm eff}\ =\ -\frac{d}{dy}\ \frac1{8\pi^2}\partial_k^2 E_{b_*}(k_*)\ \frac{d}{dy}\ +\ \int_{-\infty}^{\infty} |u_{b_*}(x;k_*)|^2 V(x) dx\times \delta(y) .
\]
\end{Remark}
The following refinement of Theorem~\ref{thm:per_result} can be proved via the bootstrap argument presented in Appendix~\ref{proof-of-Corollary-per}.

\begin{Corollary}\label{cor:Elambdaprecise-per}
Assume $\int_{\mathbb{R}}(1+|x|^2)V(x) dx<\infty$ and that the hypotheses of Theorem~\eqref{thm:per_result} hold. Then,
\begin{align}
 E^\lambda-E_{b_*}(k_*)\ &=\ \lambda^2(E_2+\lambda E_3)\ +\ \O(\lambda^{3+1/4}) 
=\ -\lambda^2 \frac{8\pi^2}{\partial_k^2E_{b_*}(k_*)} \left[ \Theta(\lambda) \right]^2,
\label{Elambda-precise-per}\end{align}
where $E_2$ is as in~\eqref{E2-def},
\begin{multline*}E_3\equiv\frac{-8\pi^4}{(\partial_k^2 E_{b_*}(k_*))^2}\left(\int_{-\infty}^{\infty} |u_{b_*}(x;k_*)|^2 V(x)\ dx\right)\\
\times\left(\iint_{\RR^2} V(x)|u_{b_*}(x;k_*)|^2|x-y| |u_{b_*}(y;k_*)|^2V(y) \ dx\ dy \right)\ , 
\end{multline*}
and 
\begin{align}
\label{eq:Theta-relation}
\Theta(\lambda) =& -\frac{1}{2} \int_{\RR} |u_{b_*}(x;k_*)|^2 V(x) \ dx \\ 
&- \frac{1}{4} \lambda \ \frac{8\pi^2}{\partial_k^2E_{b_*}(k_*)}  \iint_{\RR^2} V(x)|u_{b_*}(x;k_*)|^2|x-y| |u_{b_*}(y;k_*)|^2V(y) \ dx\ dy + \mathcal{O}(\lambda^{1+1/4}). \nn 
\end{align}
\end{Corollary}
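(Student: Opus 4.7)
The plan is a bootstrap built on top of Theorem~\ref{thm:per_result}, paralleling the $Q\equiv 0$ derivation of Corollary~\ref{cor:Elambdaprecise} but in Floquet-Bloch variables, with the Fourier average $\int V$ replaced by the Bloch-weighted average $\widehat{V_{b_*}}(0)\equiv\int_\RR |u_{b_*}(x;k_*)|^2\,V(x)\,dx$.

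Applying $\mathcal{T}_b$ to both sides of \eqref{eq:per} yields, for each band $b$,
\[(E_b(k)-E^\lambda)\,\mathcal{T}_b\{\psi^\lambda\}(k)\ =\ -\lambda\,\mathcal{T}_b\{V\psi^\lambda\}(k).\]
Writing $E^\lambda-E_{b_*}(k_*)=-\lambda^2\,\tfrac{8\pi^2}{\partial_k^2E_{b_*}(k_*)}\,\Theta(\lambda)^2$ and isolating the resonant mode $(b_*,k)$ in an $O(1)$ neighborhood of $k=k_*$, I would rewrite the above as a Lippmann-Schwinger equation for $\mathcal{T}_{b_*}\{\psi^\lambda\}$; the non-resonant bands $b\neq b_*$ and frequencies $|k-k_*|\gtrsim 1$ contribute a smoothly invertible term treated exactly as in the Lyapunov-Schmidt construction of Section~\ref{sec:per}. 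Substituting back into the right-hand side produces a scalar consistency equation for $\Theta(\lambda)$, whose leading term reproduces $\Theta_0=-\tfrac12\widehat{V_{b_*}}(0)$.

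To extract the $O(\lambda)$ correction, I would rescale $k-k_*=\lambda\kappa$ and invoke the effective-mass expansion $E_{b_*}(k_*+\lambda\kappa)-E_{b_*}(k_*)=\tfrac12\partial_k^2E_{b_*}(k_*)\,\lambda^2\kappa^2+O(\lambda^4)$ (Lemma~\ref{lem:band-edge} kills the $\kappa^1$ and $\kappa^3$ terms). The resolvent kernel then reduces, via the elementary Fourier identity
\[\int_\RR\frac{e^{2\pi i(x-y)\lambda\kappa}}{4\pi^2\kappa^2+\Theta(\lambda)^2}\,d\kappa\ =\ \frac{1}{2\Theta(\lambda)}\,e^{-\Theta(\lambda)\lambda|x-y|}\ =\ \frac{1}{2\Theta_0}-\frac{\lambda|x-y|}{2}+O(\lambda^2),\]
to the zero-energy Green's function $-|x-y|/2$ of the homogenized operator at the band edge, up to the singular constant $1/(2\Theta_0)$ absorbed into the leading term. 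Combining this with the Taylor expansion of the Bloch eigenfunction $u_{b_*}(x;k_*+\lambda\kappa)$ in $\kappa$ (justified by Lemma~\ref{lem:regularity-of-Eb} and Lemma~\ref{lem:estimates}(b)) and the leading approximation $\psi^\lambda\approx u_{b_*}(\cdot;k_*)\exp(\lambda\alpha_0|\cdot|)$ from Theorem~\ref{thm:per_result} inside $\mathcal{T}_{b_*}\{V\psi^\lambda\}$, one reads off the symmetric correction integrand $V(x)|u_{b_*}(x;k_*)|^2|x-y||u_{b_*}(y;k_*)|^2V(y)$ and recovers the claimed expression for $\Theta_1$. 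Squaring $\Theta(\lambda)=\Theta_0+\lambda\Theta_1+O(\lambda^{1+1/4})$ then produces the expansion $E^\lambda-E_{b_*}(k_*)=\lambda^2(E_2+\lambda E_3)+O(\lambda^{3+1/4})$ of the statement.

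The bootstrap closes the argument: the sharper $\Theta(\lambda)$ just obtained improves the approximation of $\psi^\lambda$ via the Lyapunov-Schmidt reconstruction, which when re-injected into the consistency equation upgrades the remainder from $O(\lambda^{2+1/4})$ to $O(\lambda^{3+1/4})$. The main obstacle is the simultaneous bookkeeping of two $\lambda$-corrections specific to the periodic case---the $k$-dependence of $u_{b_*}(\cdot;k)$ itself, and the finite-part expansion of the effective resolvent---ensuring that they assemble into the displayed \emph{symmetric} $|x-y|$ kernel rather than into an asymmetric or spurious combination. The strengthened hypothesis $(1+|x|^2)V\in L^1$, relative to the $(1+|x|)V\in L^1$ assumed in Theorem~\ref{thm:per_result}, is precisely what is required to carry both Taylor expansions to second order.
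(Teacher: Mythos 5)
Your strategy is genuinely different from the one the paper uses. The paper passes to physical space and works with the integral equation $\psi^\lambda = -\lambda R_Q(E^\lambda)[V\psi^\lambda]$, where the resolvent kernel $R_Q(x,y;E^\lambda)$ is built explicitly from the two Floquet solutions $\psi_\pm(x) = p_{b_*}(x;\mp i\lambda\kappa)\,e^{\pm 2\pi\lambda\kappa x}$, obtained by analytically continuing the band function to complex quasi-momentum $k=i\lambda\kappa$ via the discriminant $D(E)$ (Appendix~\ref{pf-of-lem:band-edge}). Multiplying by $u_{b_*}(\cdot;k_*)V$ and integrating gives a scalar consistency relation~\eqref{eqn:intVpsi}, which is then bootstrapped using the $L^\infty$ approximation to $\psi^\lambda$ from Theorem~\ref{thm:per_result}. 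You propose instead to push the frequency-space Lyapunov--Schmidt reduction of Section~\ref{sec:per} one order further in $\lambda$. Both strategies are plausible, and your leading-order $\Theta_0$ and the effective Green's function identity are correct.

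The genuine gap is precisely the point you flag as ``the main obstacle'' and then leave unresolved. Expanding the Bloch function $p_{b_*}(x;k_*+\lambda\kappa) = p_{b_*}(x;k_*) + \lambda\kappa\,\partial_k p_{b_*}(x;k_*) + \cdots$ (equivalently, expanding the resolvent kernel) produces an $O(\lambda\kappa)$ contribution built from the cross combination $p_{b_*}(x;k_*)\,\partial_k p_{b_*}(y;k_*) - \partial_k p_{b_*}(x;k_*)\,p_{b_*}(y;k_*)$, which enters at \emph{the same order} in $\lambda$ as the desired $|x-y|$ correction to $\Theta$. If this term did not vanish, the formula for $E_3$ would acquire an extra double integral involving $\partial_k p_{b_*}$, and the claimed expansion would be wrong. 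The paper's Lemma~\ref{lem:conj} isolates this contribution into $R_Q^{(0)}(x,y)$ and proves it is \emph{skew-symmetric}, $R_Q^{(0)}(x,y)=-R_Q^{(0)}(y,x)$; since the rest of the integrand $u_{b_*}(x;k_*)V(x)\cdots u_{b_*}(y;k_*)V(y)$ in the consistency equation is symmetric in $(x,y)$, the spurious piece integrates to zero, which is the crucial step (see the sentence after~\eqref{estkappa1}). Your proposal observes that the two $\lambda$-corrections ``must assemble into the displayed symmetric $|x-y|$ kernel,'' but supplies no mechanism for the cancellation of the asymmetric piece. Without this identity (or a frequency-space analogue of it), one can only assert the $O(\lambda^2)$ coefficient $E_2$; the $O(\lambda^3)$ coefficient $E_3$ is not determined. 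Establishing that cancellation is the core technical content of Appendix~\ref{proof-of-Corollary-per}, and it is what is missing from your sketch.
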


\begin{Remark}
For the case $Q \equiv 0$, the spectrum consists of only one semi-infinite band which we can label the $b = 0$ band. In this case, $u_0(x;k_* = 0) = 1$ and $E_0(k) = 4\pi^2 k^2$. Therefore, to leading order, relation~\eqref{eq:Theta-relation} simplifies to the result of Corollary~\ref{cor:Elambdaprecise} and the two results are consistent. 
\end{Remark}
 
\section{Key general technical results} \label{sec:gentech}

In this section, we study the operator $\widehat{\mathcal{L}}_0[\theta]$, defined by:
\begin{equation}
\widehat{f}(\xi)\ \mapsto\ \widehat{\mathcal{L}}_0[\theta]\widehat{f}(\xi) \equiv \left(4\pi^2 A \xi^2 + {\theta}^2\right) \widehat{f}(\xi) - B\ \chi\left(|\xi|<\lambda^{-\beta}\right) \int_{\mathbb R}\chi\left(|\eta|<\lambda^{-\beta}\right)\ \widehat{f}(\eta)\ d\eta .
\label{hatcalL0-def}
\end{equation}
Here, $A,B$ and $\beta$ are fixed positive constants. The operator $\widehat{\mathcal{L}}_0[\theta] $ appears
in the bifurcation equations we derived via the Lyapunov-Schmidt reduction; see Section~\ref{sec:outline}.

In $x-$ space, we have that $\mathcal{L}_0[\theta]$ is a rank one perturbation of $-A\partial_y^2+\theta^2$:
\begin{equation}
\mathcal{L}_0[\theta] f \ \equiv \ (-A\partial_y^2+\theta^2)f(y) \ - \ \frac{2 B}{ \lambda^{\beta}}\ 
\left\langle \frac{2}{ \lambda^{\beta}}\sinc \left(\frac{2\pi }{ \lambda^{\beta}}\ \cdot \right),f(\cdot)\right\rangle_{L^2}\ 
\sinc \left(\frac{2\pi y}{ \lambda^{\beta}} \right) ,
\label{calL0-def}
\end{equation}
where $\sinc(z)=\sin(z)/z$.
$\mathcal{L}_0[\theta]$ is a band-limited regularization of the operator: 
\begin{equation}
 \left(\ H^{A,B}\ +\ \theta^2\right)\ f \ \equiv \ \left(-A\partial_y^2 \ - \ B \delta(y)\ +\ \theta^2\right)f\ ,
 \label{HAB-def}
 \end{equation}
appearing in the effective equations governing the leading order behavior of bifurcating eigenstates; see Remarks~\ref{rmk:homogenized-Q0} and~\ref{rmk:homogenized-per}.

\subsection{The operator $\widehat{\mathcal{L}} _0$}

\begin{Lemma} \label{lem:homogeneous}
Fix constants $A>0$, $B>0$ and $\beta>0$. Define, for $\theta^2>0$, the linear operator
\begin{equation}\label{eq:homogeneous}
\widehat{f}(\xi)\ \mapsto\ \widehat{\mathcal{L}}_0[\theta]\widehat{f}(\xi) \equiv \left(4\pi^2 A \xi^2 + {\theta}^2\right) \widehat{f}(\xi) - B\ \chi\left(|\xi|<\lambda^{-\beta}\right) \int_{\mathbb R} \chi\left(|\eta|<\lambda^{-\beta}\right)\ \widehat{f}(\eta)\ d\eta .
\end{equation}
Note that $\widehat{\mathcal{L}}_0[\theta]:L^1(\mathbb{R})\to L^{1,-2}(\mathbb{R})$; see~\eqref{Lps-def}. 
\begin{enumerate}
\item There exists a unique $\theta_0^2>0$ such that $\widehat{\mathcal{L}}_0[\theta_0]$
has a non-trivial kernel.
\item The ``eigenvalue'' $\theta_0^2$ is the unique positive solution of
\begin{equation} \label{eq:theta0} 
 1 \ - \ B\int_{\mathbb R}\frac{\chi\left(|\xi|<\lambda^{-\beta}\right)}{4\pi^2A\xi^2 + \theta_0^2}d\xi \ = \ 0 \ .
\end{equation}
\item The kernel of $\widehat{\mathcal{L}} _0[\theta_0]$ is given by:
\begin{equation} \label{eq:f0}
{\rm kernel}\left(\widehat{\mathcal{L}} _0[\theta_0]\right)\ =\ {\rm span}\left\{ \widehat{f}_0(\xi) \right\},\ \ {\rm where}\ \ 
 \widehat{f}_0(\xi) \equiv\ \frac{\chi\left(|\xi|<\lambda^{-\beta}\right)}{4\pi^2A\xi^2 + \theta_0^2} \ .
\end{equation}
\item
 $\theta_0=\theta_0(\lambda)$ can be approximated as follows:
\begin{equation} \label{eq:esttheta0}
\left| \theta_0 \ - \ \frac{B}{2\sqrt{A}} \right| \ \le\frac{\theta_0}{2\pi^2}\ \frac{B}{A}\ \lambda^{\beta} .
\end{equation}
\item Define $g(x)=\exp(\alpha_0|x|)$, with $\alpha_0=-\frac{B}{2A}<0$. Then one has
\begin{equation} \label{eq:asymptotic-f0}
\sup_{x\in\RR} \Big\vert\ \mathcal{F}^{-1}\left\{ \widehat{f}_0 \right\}(x) \ - \ \frac{1}{B}g(x)\ \Big\vert \ \leq C(A,B) \lambda^{\beta}.
\end{equation}
\end{enumerate}
\end{Lemma}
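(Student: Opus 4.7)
The plan is to handle parts 1--3 together by a self-consistency characterization of the kernel, then to obtain part 4 by reducing \eqref{eq:theta0} to an elementary transcendental equation in $\theta_0$, and finally to get part 5 by approximating the inverse Fourier transform of $\widehat{f}_0$ by its full-line limit.

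For the kernel analysis, I would start from the observation that any $\widehat{f}\in L^1(\RR)$ in the kernel of $\widehat{\mathcal{L}}_0[\theta]$ must satisfy
\[
(4\pi^2A\xi^2+\theta^2)\,\widehat{f}(\xi)\ =\ B\,c\,\chi(|\xi|<\lambda^{-\beta}), \qquad c\ \equiv\ \int_\RR \chi(|\eta|<\lambda^{-\beta})\,\widehat{f}(\eta)\,d\eta.
\]
Dividing by the strictly positive prefactor (which is legal since $\theta^2>0$) gives $\widehat{f}(\xi)=B\,c\,\chi(|\xi|<\lambda^{-\beta})/(4\pi^2A\xi^2+\theta^2)$, and substituting back into the defining integral for $c$ forces either $c=0$ (hence $\widehat{f}\equiv 0$) or the scalar self-consistency condition \eqref{eq:theta0}. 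Viewed as a function of $\theta^2\in(0,\infty)$, the right-hand side of \eqref{eq:theta0} is smooth, strictly decreasing, blows up as $\theta\downarrow 0$, and vanishes as $\theta\uparrow\infty$; the intermediate value theorem then produces a unique $\theta_0^2>0$, and the kernel is one-dimensional, spanned by $\widehat{f}_0$ as in \eqref{eq:f0}.

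For part 4, I would compute the integral in \eqref{eq:theta0} explicitly using the antiderivative of $1/(u^2+a^2)$:
\[
1\ =\ B\int_{-\lambda^{-\beta}}^{\lambda^{-\beta}}\frac{d\xi}{4\pi^2A\xi^2+\theta_0^2}\ =\ \frac{B}{\pi\sqrt{A}\,\theta_0}\,\arctan\!\left(\frac{2\pi\sqrt{A}\,\lambda^{-\beta}}{\theta_0}\right).
\]
In the formal limit $\lambda\to 0$, $\arctan\to \pi/2$ and one recovers $\theta_0\to B/(2\sqrt{A})$. To quantify the correction I would write $\arctan(x)=\pi/2-\arctan(1/x)$ for $x>0$ and use $0\le\arctan(y)\le y$; rearranging the resulting identity isolates the $\O(\lambda^\beta)$ correction and yields \eqref{eq:esttheta0}.

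Finally, for part 5, I would split $\mathcal{F}^{-1}\{\widehat{f}_0\}(x)$ into a full-line resolvent and a small tail:
\[
\mathcal{F}^{-1}\{\widehat{f}_0\}(x)\ =\ \int_\RR\frac{e^{2\pi i x\xi}}{4\pi^2A\xi^2+\theta_0^2}\,d\xi\ -\ \int_{|\xi|>\lambda^{-\beta}}\frac{e^{2\pi i x\xi}}{4\pi^2A\xi^2+\theta_0^2}\,d\xi.
\]
After the change of variables $\eta=\sqrt{A}\,\xi$, the first integral reduces to the classical identity $\mathcal{F}^{-1}\{(4\pi^2\eta^2+\theta_0^2)^{-1}\}(x)=(2\theta_0)^{-1}e^{-\theta_0|x|}$, producing $(2\sqrt{A}\,\theta_0)^{-1}\exp(-\theta_0|x|/\sqrt{A})$. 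The tail is bounded uniformly in $x\in\RR$ by $\int_{|\xi|>\lambda^{-\beta}}(4\pi^2A\xi^2)^{-1}\,d\xi\lesssim \lambda^\beta$. Plugging in $\theta_0/\sqrt{A}=|\alpha_0|+\O(\lambda^\beta)$ from part 4, I then replace $e^{-\theta_0|x|/\sqrt{A}}$ by $e^{\alpha_0|x|}$ in sup-norm. The main obstacle I anticipate is precisely this last uniform-in-$x$ replacement in the exponent: one must exploit that both decay rates are strictly positive and bounded away from $0$ uniformly in small $\lambda$, so that the elementary bound $|e^{-t|x|}-e^{-s|x|}|\le |t-s|/\min(t,s)$ for $t,s>0$ gives $\O(\lambda^\beta)$ control on $\RR$. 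Combining these estimates yields \eqref{eq:asymptotic-f0}.
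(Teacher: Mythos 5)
Your argument for parts 1--4 coincides with the paper's: the kernel is forced by division to be a multiple of $\chi(|\xi|<\lambda^{-\beta})/(4\pi^2A\xi^2+\theta^2)$, the scalar self-consistency condition reproduces \eqref{eq:theta0}, and monotonicity in $\theta^2$ gives the unique positive root; your $\arctan$ reduction for part 4 is a computational variant of the paper's split $\chi=1+(\chi-1)$, but both hinge on the same tail bound $\int_{|\xi|\ge\lambda^{-\beta}}(4\pi^2A\xi^2)^{-1}d\xi\lesssim\lambda^{\beta}$. For part 5 you work on the physical side — full-line inverse transform minus tail, then replace the exponential rate — where the paper simply bounds $\big\|\widehat f_0-\tfrac{c}{B}\widehat g\big\|_{L^1}$ and uses $\|\mathcal F^{-1}h\|_{L^\infty}\le\|h\|_{L^1}$; the paper's route avoids the exponent-swapping step you correctly flag as delicate, but your proposed control $|e^{-t|x|}-e^{-s|x|}|\lesssim|t-s|/\min(t,s)$ (together with the analogous estimate for the prefactor $\tfrac{1}{2\sqrt A\,\theta_0}$) is sound, so both routes go through.

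One thing you should notice rather than silently absorb: carried out carefully, your computation produces the limiting profile $\tfrac{1}{2\sqrt A\,\theta_0}e^{-\theta_0|x|/\sqrt A}\to\tfrac{1}{B}g(x)$, not $\tfrac{2}{B}g(x)$ as written in \eqref{eq:asymptotic-f0}. Evaluating at $x=0$ settles it: by \eqref{eq:theta0}, $\mathcal F^{-1}\{\widehat f_0\}(0)=\int_{\mathbb R}\widehat f_0=1/B$, while $\tfrac{2}{B}g(0)=2/B$, so the displayed coefficient is off by a factor of $2$ (correspondingly, the residue formula should read $\widehat g(\xi)=\tfrac{-2\alpha_0}{4\pi^2\xi^2+\alpha_0^2}$). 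Since the eigenfunctions in the main theorems are only determined up to scalar multiple this does not propagate downstream, but your own derivation disagrees with the literal statement of \eqref{eq:asymptotic-f0} and you should say so rather than claim to recover it as stated.
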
 
\begin{proof}
First note, by rearranging terms in the equation $\widehat{\mathcal{L}} _0[\theta_0] \widehat{g}=0$, that any element, $\widehat{g}(\xi)$, of the kernel of $\widehat{\mathcal{L}} _0[\theta]$, is a constant multiple of the function 
$ \widehat{f}_\lambda(\xi;\theta)\equiv \chi(|\xi|<\lambda^{-\beta})\times (4\pi^2A\xi^2 + \theta^2)^{-1}$.
Thus, if $\widehat{g}$ is non-trivial then it is strictly positive or strictly negative and therefore $\int_\mathbb{R}\widehat{g}\ne0$.

Next, note that a necessary condition for $\widehat{g}$ to lie in the kernel of $\widehat{\mathcal{L}} _0[\theta]$ is that equation~\eqref{eq:theta0} holds. To see this, divide the equation $\widehat{\mathcal{L}} _0[\theta_0]\widehat{g}=0$ by $\left(4\pi^2A\xi^2 + {\theta_0}^2\right)$ and integrate $d\xi$ over $\RR$. This yields:
 \begin{equation}
 \int_{-\infty}^{\infty} g(\xi) \ d\xi\ \times\ \Big[\ 1\ - \ B \int_{-\infty}^{\infty}\frac{\chi\left(|\xi|<\lambda^{-\beta}\right)}{4\pi^2A\xi^2 + \theta^2}d\xi \Big]\ = \ 0,
 \label{theta0-nec}
 \end{equation}
By the above discussion, if $\widehat{g}$ is non-trivial then $\int_\mathbb{R} \widehat{g}\ne0$. Hence
 $\theta^2$ satisfies
\[
  J({\theta}^2) \equiv1 \ - \ B\int_{-\infty}^{\infty}\frac{\chi\left(|\xi|<\lambda^{-\beta}\right)}{4\pi^2A\xi^2 + \theta^2}d\xi \ = \ 0.
  \]
 Since $J:(0,\infty)\to\mathbb{R}$ is smooth, $J'(X)>0$, $\lim\limits_{X\to0} J(X)=-\infty$ and $\lim\limits_{X\to\infty} J(X)=1$, the function $J$ has a unique positive root, which we denote by $\theta_0^2$.
 One can check by direct substitution and the condition $J(\theta_0^2)=0$, that
 any multiple of 
 \begin{equation}
 \widehat{f}_0(\xi)\ \equiv\ \widehat{f}_\lambda(\xi;\theta_0)\ =\ \chi(|\xi|<\lambda^{-\beta})\times (4\pi^2A\xi^2 + \theta_0^2)^{-1}
 \label{f0def}
 \end{equation}
satisfies $\widehat{\mathcal{L}} _0[\theta_0]\widehat{f}_0(\xi)=0$ .
 
The approximation to $\theta_0(\lambda)$,~\eqref{eq:esttheta0}, is obtained as follows. Let $\theta_0^2$ denote the unique solution of $J(\theta_0^2)=0$ and $\theta_0$ its positive square root. Then, 
\begin{align}
\frac{1}{B}\ &=\ \int_{\mathbb R}\frac{\chi\left(|\xi|<\lambda^{-\beta}\right)}{4\pi^2\ A\ \xi^2 + \theta_0^2}\ d\xi\ 
 = \ 
\int_{\mathbb R}\frac{1+(\chi\left(|\xi|<\lambda^{-\beta}\right)-1)}{4\pi^2\ A\ \xi^2 + \theta_0^2}d\xi  \nn \\
&= \ \frac{1}{2\sqrt{A}\ \theta_0}+\int_{\mathbb R}\frac{\chi\left(|\xi|<\lambda^{-\beta}\right)-1}{4\pi^2A\xi^2 + \theta_0^2}d\xi \ . 
\label{theta0-est1}\end{align}
The last term can be bounded as follows:
\begin{equation}
\left| \int_{\mathbb R}\frac{1-\chi\left(|\xi|<\lambda^{-\beta}\right)}{4\pi^2A\xi^2 + \theta_0^2}d\xi \right| \ = \ 
 \int_{|\xi|\ge\lambda^{-\beta}}\ \frac{d\xi}{4\pi^2\ A\ \xi^2 + \theta_0^2}\ \leq \ \int_{|\xi|\ge\lambda^{-\beta}}\ \frac{d\xi}{4\pi^2\ A\ \xi^2}
 \le\ \frac{ \lambda^\beta}{2\pi^2A}.
 \label{theta0-est2}
\end{equation}
Relations~\eqref{theta0-est1},~\eqref{theta0-est2}, after rearrangement of terms, yield~\eqref{eq:esttheta0}.

 Finally, let us turn to the asymptotic expression for $\mathcal{F}^{-1}\left\{ \widehat{f}_0 \right\}(x)$ given in~\eqref{eq:asymptotic-f0}. By residue computation, one has
$\widehat{g}(\xi) \ 
\ =\ \dfrac{-2\alpha_0 }{4\pi^2|\xi|^2+\alpha_0^2}\ \ =\ \dfrac{B}{4\pi^2A|\xi|^2+\frac{B^2}{4A}}.$
It follows that 
\begin{multline*} 
\sup_{x\in\RR} \Big\vert\ \mathcal{F}^{-1}\left\{ \widehat{f}_0 \right\}(x)  -  \frac{1}{B}g(x)\ \Big\vert \ \leq \ \Big\Vert\ \widehat{f}_0  - \frac{1}{B}\widehat{g}\ \Big\Vert_{L^1} \ \leq \ \int_{\RR} \left| \frac{\chi\left(|\xi|<\lambda^{-\beta}\right)}{4\pi^2A\xi^2 + \theta_0^2} \ - \ \frac{1}{4\pi^2A|\xi|^2+\frac{B^2}{4A}}\right| \ d\xi \\
 \leq \ \int_{\RR} \chi\left(|\xi|<\lambda^{-\beta}\right)\left| \frac{1}{4\pi^2A\xi^2 + \theta_0^2} \ - \ \frac{1}{4\pi^2A|\xi|^2+\frac{B^2}{4A}}\right| \ d\xi 
\ + \ \int_{\RR} \left| \frac{1-\chi\left(|\xi|<\lambda^{-\beta}\right)}{4\pi^2A\xi^2 + \frac{B^2}{4A}} \right| \ d\xi .
\end{multline*}
A bound on the second term follows from~\eqref{theta0-est2}. The first term is easily bounded, using~\eqref{eq:esttheta0}, by  $C(A,B)\lambda^\beta$, with some constant $C(A,B)>0$.
Estimate~\eqref{eq:asymptotic-f0} follows, and the proof of Lemma~\ref{lem:homogeneous} is now complete.
\end{proof}

We shall also require a result on the solvability of the inhomogeneous equation
\begin{equation} \label{eq:inhom}
\left( \widehat{\mathcal{L}}_0[\theta_0]\widehat{\varphi} \right)(\xi) = \widehat{h}(\xi), 
\end{equation}
where $\widehat{\mathcal{L}}_0[\theta_0]$ is defined in~\eqref{eq:homogeneous}.

\begin{Lemma} \label{lem:inhom_sol}
The equation~\eqref{eq:inhom} is solvable if and only if $\widehat{h}$ is such that $\chi\left(|\xi|<\lambda^{-\beta}\right) \widehat{h}(\xi)=\widehat{h}(\xi)$ and satisfies the orthogonality condition
\begin{equation}
\left\langle \widehat{f}_0, \widehat{h} \right\rangle_{L^2(\RR)} = 0,
\label{orthog}
\end{equation}
 where $\widehat{f}_0$, displayed in~\eqref{eq:f0}, spans the kernel of 
 $\widehat{\mathcal{L}}_0 [\theta_0]$. In that case,
 \begin{enumerate}
 \item any solution of the inhomogeneous equation~\eqref{eq:inhom} is of the form
\begin{equation}\label{eq:solinhom}
\widehat{\varphi}(\xi) \ \equiv \ (C+\widehat{h}(\xi)) \widehat{f}_0(\xi) \ \equiv \ (C+ \widehat{h}(\xi))\ \frac{\chi\left(|\xi|<\lambda^{-\beta}\right) }{4\pi^2 A \xi^2 + \theta_0^2},
\end{equation}
for some constant $C$.
\item The unique solution of~\eqref{eq:inhom} such that $\int_\RR \widehat{\varphi}=0$ is obtained by choosing $C=0$:
\begin{equation}\label{vphi-def}
\widehat{\varphi}(\xi) \ \equiv \ \widehat{h}(\xi)\ \widehat{f}_0(\xi).
\end{equation}
\end{enumerate}
\end{Lemma}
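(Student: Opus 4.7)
The plan is to reduce \eqref{eq:inhom} to an algebraic condition on a single scalar unknown. Introducing the nonlocal quantity
\[ I_\varphi \ \equiv\ \int_\RR \chi\!\left(|\eta|<\lambda^{-\beta}\right)\widehat{\varphi}(\eta)\,d\eta, \]
the equation $\widehat{\mathcal{L}}_0[\theta_0]\widehat{\varphi} = \widehat{h}$ becomes $(4\pi^2 A\xi^2 + \theta_0^2)\widehat{\varphi}(\xi) = \widehat{h}(\xi) + B\chi(|\xi|<\lambda^{-\beta})\,I_\varphi$. Since $4\pi^2 A\xi^2 + \theta_0^2 > 0$, I can divide pointwise in $\xi$; using $\chi\widehat{h}=\widehat{h}$ together with the definition \eqref{eq:f0} of $\widehat{f}_0$, this rewrites as
\[ \widehat{\varphi}(\xi) \ =\ \bigl(\widehat{h}(\xi) + B\,I_\varphi\bigr)\,\widehat{f}_0(\xi), \]
which is already the form claimed in \eqref{eq:solinhom} with $C = B\,I_\varphi$. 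Thus every solution of \eqref{eq:inhom} necessarily takes this form, and it only remains to check self-consistency of $I_\varphi$.

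Next, I would substitute this representation back into the definition of $I_\varphi$. Using $\chi\widehat{f}_0 = \widehat{f}_0$ together with the identity
\[ \int_\RR \widehat{f}_0(\xi)\,d\xi \ =\ \int_\RR \frac{\chi(|\xi|<\lambda^{-\beta})}{4\pi^2 A\xi^2 + \theta_0^2}\,d\xi \ =\ \frac{1}{B}, \]
which is precisely the defining relation \eqref{eq:theta0} for $\theta_0^2$, the self-consistency requirement collapses to $I_\varphi = \langle \widehat{f}_0, \widehat{h}\rangle_{L^2} + I_\varphi$, equivalently to the orthogonality hypothesis \eqref{orthog}. Under this hypothesis, $I_\varphi$ (and therefore $C$) is entirely unconstrained, and conversely every choice of $C \in \RR$ yields an authentic solution of \eqref{eq:inhom}. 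This establishes part (1).

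For part (2), I integrate \eqref{eq:solinhom} over $\RR$ and use the same two ingredients, namely $\int_\RR \widehat{f}_0 = 1/B$ and the orthogonality condition \eqref{orthog}, to obtain $\int_\RR \widehat{\varphi}\,d\xi = C/B$. Hence the additional normalization $\int_\RR \widehat{\varphi} = 0$ is equivalent to $C = 0$, which gives the formula \eqref{vphi-def}. The entire argument is a standard Fredholm-alternative bookkeeping; no step poses a genuine obstacle, but the one point to keep straight is that the same pairing $\langle \widehat{f}_0,\widehat{h}\rangle_{L^2}$ simultaneously governs the solvability condition in part (1) and the uniqueness normalization in part (2), with the arithmetic working out exactly because of the defining relation \eqref{eq:theta0}.
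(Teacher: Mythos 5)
Your proof is correct, and it takes a route that is organized differently from the paper's. The paper first shows by direct substitution that $(C+\widehat h)\widehat f_0$ solves the inhomogeneous equation (using $\widehat{\mathcal{L}}_0[\theta_0]\widehat f_0=0$ from Lemma~\ref{lem:homogeneous} together with the orthogonality hypothesis to kill the nonlocal term), and then obtains the converse by linearity: two solutions differ by a kernel element, and Lemma~\ref{lem:homogeneous} says the kernel is $\mathrm{span}\{\widehat f_0\}$. You instead treat the rank-one nonlocal term as a scalar unknown $I_\varphi$, which lets you solve the equation pointwise in one stroke and arrive at $\widehat\varphi=(\widehat h + B I_\varphi)\widehat f_0$ as a \emph{necessary} form for any solution, with no appeal to the kernel characterization; the Fredholm solvability condition then appears not as a hypothesis to be imposed when verifying the formula, but as the scalar self-consistency relation obtained upon closing the loop on $I_\varphi$, collapsing to $\langle\widehat f_0,\widehat h\rangle_{L^2}=0$ precisely because $\int_\RR\widehat f_0 = 1/B$ is the defining relation \eqref{eq:theta0}. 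Your approach makes the rank-one structure of the nonlocal perturbation explicit and yields the full solution set and the solvability condition simultaneously; the paper's is the more conventional particular-plus-homogeneous decomposition and leans on the already-proved kernel description. Both routes use the same two ingredients (\eqref{eq:theta0} and \eqref{orthog}) and arrive at the same conclusion; part~(2) is handled the same way in each, since $\int_\RR\widehat\varphi = C\int_\RR\widehat f_0 = C/B$ once orthogonality removes the $\int\widehat h\widehat f_0$ contribution.
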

\begin{proof}The solvability condition $\chi\left(|\xi|<\lambda^{-\beta}\right) \widehat{h}(\xi)=\widehat{h}(\xi)$ is straightforward, and~\eqref{orthog} is obtained by taking the inner product of~\eqref{eq:inhom} with $\widehat{f}_0$, and using that $\widehat{\mathcal{L}}_0 [\theta_0]$ is symmetric, and $\widehat{\mathcal{L}}_0 [\theta_0]\widehat{f}_0=0$.

To show that~\eqref{eq:solinhom} solves the inhomogeneous equation~\eqref{eq:inhom} we simply insert the function~\eqref{eq:solinhom} into~\eqref{eq:inhom}, and use the properties: $\widehat{\mathcal{L}}_0 (\theta_0)\widehat{f}_0=0$ and $\left\langle \widehat{f}_0, \widehat{h} \right\rangle_{L^2} = 0$. This gives 
\begin{align*}
\left( \widehat{\mathcal{L}} _0[\theta_0]\widehat{\varphi}\right)(\xi) & = (4\pi^2 A \xi^2 + \theta_0^2) \widehat{h}(\xi)  \widehat{f}_0(\xi) - B\ \chi\left(|\xi|<\lambda^{-\beta}\right) \int_{-\infty}^\infty \widehat{h}(\eta)  \widehat{f}_0(\eta) d\eta \\
& = (4\pi^2 A \xi^2 + \theta_0^2)\ \frac{\chi\left(|\xi|<\lambda^{-\beta}\right) \widehat{h}(\xi)}{4\pi^2 A \xi^2 + \theta_0^2}\ -\ B\ \chi\left(|\xi|<\lambda^{-\beta}\right)   \left\langle \widehat{f}_0, \widehat{h} \right\rangle_{L^2(\RR)}\ =\ \widehat{h}(\xi) .
\end{align*}

The converse clearly holds by Lemma~\ref{lem:homogeneous}, since the difference of solutions of the inhomogeneous equation solves the homogeneous equation~\eqref{eq:homogeneous}.
 Finally, using the orthogonality condition $\left\langle \widehat{f}_0, \widehat h \right\rangle_{L^2} = 0$, one has that $\int_{\mathbb R}\widehat\varphi=C\int_\RR\widehat{f}_0 =0$ if and only if $C=0$.
\end{proof}

\subsection{A perturbation result for $\widehat{\mathcal{L}} _0$}
As discussed in the introduction, our strategy is to obtain a reduction  of the eigenvalue problem for $H_{Q+\lambda V}$ to an eigenvalue problem (the bifurcation equation) for functions supported at energies near the band-edge. These reduced equations have a general form which we study in this section.

Let $\mathcal{Z}_1$ and $\mathcal{Z}_2$ denote Banach spaces with 
 $\mathcal{Z}_1, \mathcal{Z}_2\subset L^1_{\rm loc}$. Assume that 
 for any $(f,g)\in \mathcal{Z}_1\times \mathcal{Z}_2$, 
\begin{equation}\label{norms}
\left| \left\langle f, g \right\rangle_{L^2} \right| \lesssim \big\Vert f \big\Vert_{\mathcal{Z}_2} \big\Vert g \big\Vert_{\mathcal{Z}_1} , \quad \quad \big\Vert f g \big\Vert_{\mathcal{Z}_2} \lesssim \big\Vert f \big\Vert_{\mathcal{Z}_2} \big\Vert g \big\Vert_{L^\infty}, \quad \text{ and } \quad \big\Vert (1+\xi^2)^{-1}f \big\Vert_{\mathcal{Z}_2} \lesssim \big\Vert f \big\Vert_{\mathcal{Z}_1}.
\end{equation}
Furthermore, we also assume that $\widehat{f}_0\in \mathcal{Z}_1\cap \mathcal{Z}_2$, where $\left(\theta_{0}^2, \widehat{f}_0\right)$ is the unique normalized solution of the homogeneous equation $\widehat{\mathcal{L}}_0[\theta]\widehat f=0$; see Lemma~\ref{lem:homogeneous}.
\begin{Remark}
In order to prove Theorems~\ref{thm:Qzero} and~\ref{thm:per_result}, we shall apply Lemma~\ref{lem:technical}, below, with
\begin{itemize}
\item Case $Q\equiv 0$: $\left(\mathcal{Z}_1,\mathcal{Z}_2\right) = \left(L^{\infty},L^1\right)$ in the case $Q=0$; and 
\item $Q$ non-trivial, $1-$periodic: $\left(\mathcal{Z}_1,\mathcal{Z}_2\right) = \left(L^{2,-1},L^{2,1} \right)$, where $L^{2,s}$ is the space of locally integrable functions such that
\[ \big\Vert F \big\Vert_{L^{2,s}} \ \equiv \ \big\Vert (1+|\xi|^2)^{s/2} F \big\Vert_{L^2(\mathbb R_\xi)} \ < \ \infty.\]
\end{itemize}
It is straightforward to check that such spaces satisfy~\eqref{norms}, and $\widehat{f}_0\in \mathcal{Z}_1\cap \mathcal{Z}_2$.
\end{Remark}

We seek a solution of the equation:
\begin{equation} \label{eq:near_general}
\widehat{\mathcal{L}} _0[\theta]\widehat f\ =\  R\big(\widehat f\,\big)\ ,
\end{equation}
where $\widehat{\mathcal{L}}_0(\theta)$ is the operator defined in~\eqref{eq:homogeneous} and the mapping 
 $\widehat f\mapsto  R\big(\widehat f\,\big)$ is linear and satisfies the following properties:

\noindent{\bf Assumptions $R_{\alpha,\beta}$:}\\ There exist constants $\alpha>0$, $\beta>0$ and $C_R>0$ such that for any $\widehat f\in \mathcal{Z}_2$
\begin{equation}
 \chi\left(|\xi|<\lambda^{-\beta}\right)  R\big(\widehat f\,\big)(\xi) \ = \  R\big(\widehat f\,\big)(\xi)\ , \quad \text{ and } \quad 
 \left\|  R\big(\widehat f\,\big) \right\|_{\mathcal{Z}_1} \ \leq\ C_R\lambda^{\alpha} \left\|\widehat f\ \right\|_{\mathcal{Z}_2}\ .
\label{assumptionsR}
\end{equation}

In the above setting we have the following 

\begin{Lemma} \label{lem:technical}
Let $(\theta_0^2,\widehat{f}_0(\xi))$ be the solution of $\widehat{\mathcal{L}} _0(\theta_0)\widehat{f}_0=0$, as defined in Lemma~\ref{lem:homogeneous}, where $A,B$ and $\beta>0$ are fixed. Let $R:\widehat f\in\mathcal{Z}_2\to\mathcal{Z}_1$ be a linear mapping satisfying assumptions $R_{\alpha,\beta}$ displayed in~\eqref{assumptionsR}, where $\mathcal{Z}_1, \mathcal{Z}_2$ satisfy~\eqref{norms}. Then there exists $\lambda_0>0$ such that for any $0<\lambda<\lambda_0$, the following holds:
\begin{enumerate}
\item There exists a unique solution $\left(\theta,\widehat f(\xi)\ \right)\in\RR^+\times \mathcal{Z}_2$ of the equation 
~\eqref{eq:near_general}, such that 
 \[\left\| \widehat{f} - \widehat{f}_0 \right\|_{\mathcal{Z}_2} \leq C \lambda^{\alpha}, \qquad 
 \int_{-\infty}^\infty \widehat{f}(\xi)-\widehat{f}_0(\xi)\ d\xi \ = \ 0, \]
with $C=C(A,B,C_R,\beta)$, independent of $\lambda$. 
\item Moreover, one has
$
\widehat{f}(\xi) \ = \ \chi\left(|\xi|<\lambda^{-\beta}\right) \widehat{f}(\xi)$ and $\left| \theta^2 - \theta_{0}^2 \right| \leq C\lambda^{\alpha} \ .
$
\end{enumerate}
\end{Lemma}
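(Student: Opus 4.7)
The plan is a Lyapunov--Schmidt reduction followed by a Banach fixed-point argument. I write $\widehat{f}=\widehat{f}_0+\widehat{\varphi}$ with the normalization $\int_{\RR}\widehat{\varphi}\,d\xi=0$, and exploit the identity $\widehat{\mathcal{L}}_0[\theta]-\widehat{\mathcal{L}}_0[\theta_0]=(\theta^2-\theta_0^2)\,\mathrm{Id}$ to rewrite~\eqref{eq:near_general} as
\[\widehat{\mathcal{L}}_0[\theta_0]\widehat{\varphi}\;=\;R(\widehat{f}_0+\widehat{\varphi})\;-\;(\theta^2-\theta_0^2)\,(\widehat{f}_0+\widehat{\varphi}).\]
By~\eqref{assumptionsR} and the explicit formula~\eqref{eq:f0}, this right-hand side is automatically supported in $\{|\xi|<\lambda^{-\beta}\}$, so I work in the closed subspace of $\mathcal{Z}_2$ consisting of zero-integral functions supported there. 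By the Fredholm alternative in Lemma~\ref{lem:inhom_sol}, solvability requires the right-hand side to be $L^2$-orthogonal to $\widehat{f}_0$, which uniquely determines
\[\theta^2 - \theta_0^2\;=\;\Theta(\widehat{\varphi})\;:=\;\frac{\bigl\langle \widehat{f}_0,R(\widehat{f}_0+\widehat{\varphi})\bigr\rangle_{L^2}}{\|\widehat{f}_0\|_{L^2}^2+\bigl\langle \widehat{f}_0,\widehat{\varphi}\bigr\rangle_{L^2}},\]
whose denominator is bounded below once $\|\widehat{\varphi}\|_{\mathcal{Z}_2}$ is small, by the first hypothesis of~\eqref{norms}. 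Substituting this value back and invoking~\eqref{eq:solinhom} with $C=0$ recasts the problem as the fixed-point equation $\widehat{\varphi}=T(\widehat{\varphi})$ with
\[T(\widehat{\varphi})\;:=\;\widehat{f}_0\,\bigl[R(\widehat{f}_0+\widehat{\varphi})-\Theta(\widehat{\varphi})\,(\widehat{f}_0+\widehat{\varphi})\bigr].\]

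Next I would run the contraction mapping theorem for $T$ on the ball $\mathcal{B}_\rho=\{\widehat{\varphi}:\|\widehat{\varphi}\|_{\mathcal{Z}_2}\le\rho\}$ with $\rho=C_0\lambda^\alpha$. Assumption $R_{\alpha,\beta}$ and the pairing bound in~\eqref{norms} yield $|\Theta(\widehat{\varphi})|\lesssim C_R\lambda^\alpha\bigl(\|\widehat{f}_0\|_{\mathcal{Z}_2}+\|\widehat{\varphi}\|_{\mathcal{Z}_2}\bigr)$. Because $\widehat{f}_0(\xi)$ is pointwise bounded by a $\lambda$-independent constant multiple of $(1+\xi^2)^{-1}$, the product estimate combined with the embedding $\|(1+\xi^2)^{-1}g\|_{\mathcal{Z}_2}\lesssim\|g\|_{\mathcal{Z}_1}$ gives $\|\widehat{f}_0\cdot h\|_{\mathcal{Z}_2}\lesssim\|h\|_{\mathcal{Z}_1}$, so that
\[\|T(\widehat{\varphi})\|_{\mathcal{Z}_2}\;\lesssim\;\|R(\widehat{f}_0+\widehat{\varphi})\|_{\mathcal{Z}_1}+|\Theta(\widehat{\varphi})|\bigl(\|\widehat{f}_0\|_{\mathcal{Z}_2}+\|\widehat{\varphi}\|_{\mathcal{Z}_2}\bigr)\;\lesssim\;\lambda^\alpha.\]
Thus $T:\mathcal{B}_\rho\to\mathcal{B}_\rho$ for $\rho=C_0\lambda^\alpha$ and $\lambda$ sufficiently small. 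For the Lipschitz bound, linearity of $R$ and an elementary quotient estimate for $\Theta$ reduce $T(\widehat{\varphi}_1)-T(\widehat{\varphi}_2)$ to terms of size $\O(\lambda^\alpha)\|\widehat{\varphi}_1-\widehat{\varphi}_2\|_{\mathcal{Z}_2}$, giving a strict contraction for $\lambda<\lambda_0$.

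The unique fixed point $\widehat{\varphi}\in\mathcal{B}_{C\lambda^\alpha}$ produces the required pair $(\theta,\widehat{f})$ with $\widehat{f}=\widehat{f}_0+\widehat{\varphi}$ and $\theta^2=\theta_0^2+\Theta(\widehat{\varphi})$ satisfying $|\theta^2-\theta_0^2|\le C\lambda^\alpha$; the support identity $\widehat{f}(\xi)=\chi(|\xi|<\lambda^{-\beta})\widehat{f}(\xi)$ is built in through the support properties of $\widehat{f}_0$ and of $R(\cdot)$. The principal technical obstacle I anticipate is the two-norm bookkeeping: $R$ delivers its outputs in $\mathcal{Z}_1$ while the contraction must close in $\mathcal{Z}_2$, and the $\lambda$-dependent cut-off appearing both in $\widehat{f}_0$ and in $R$ has to be tracked so that no implicit constant blows up as $\lambda\downarrow 0$. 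The embedding $\|(1+\xi^2)^{-1}g\|_{\mathcal{Z}_2}\lesssim\|g\|_{\mathcal{Z}_1}$ in~\eqref{norms} is precisely what bridges the two norms through the $\O(\xi^{-2})$ pointwise decay of $\widehat{f}_0$, and its constant is $\lambda$-independent, so the whole argument closes uniformly for $\lambda\in(0,\lambda_0)$.
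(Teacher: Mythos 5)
Your proposal is correct and follows essentially the same Lyapunov--Schmidt reduction as the paper: write $\widehat f=\widehat f_0+\widehat\varphi$, use the solvability (orthogonality) condition from Lemma~\ref{lem:inhom_sol} to express $\theta^2-\theta_0^2$ in terms of $\widehat\varphi$, substitute back to obtain a fixed-point equation, and close it by the contraction mapping principle on a ball of radius $O(\lambda^\alpha)$ in $\mathcal Z_2$. Your map $T$ coincides with the paper's $\mathcal G$ (eqn~\eqref{def:G}) up to notation, and the two-norm bookkeeping you flag is handled in the paper exactly as you indicate, via the $L^\infty$-boundedness of $\chi_{\lambda^{-\beta}}(1+\xi^2)/(4\pi^2A\xi^2+\theta_0^2)$ together with the third inequality in~\eqref{norms}.
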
 
\begin{proof}
Our strategy is to use a fixed point argument. We seek a solution $(\theta^2,f)$ to 
~\eqref{eq:near_general} of the form 
 \[
\theta^2\ \equiv\ \theta_0^2 \ +\ \theta_1^2 \qquad \text{and} \qquad \widehat{f}\ \equiv\ \widehat{f}_0 \ +\ \widehat{f}_1.
\]
Clearly, any solution $\widehat f$ of~\eqref{eq:near_general} satisfies $ \widehat{f}(\xi) = \chi\left(|\xi|<\lambda^{-\beta}\right) \widehat{f}(\xi)$. Therefore, since one has, by definition, $ \widehat{f}_0(\xi) = \chi\left(|\xi|<\lambda^{-\beta}\right) \widehat{f}_0(\xi)$, it follows that $ \widehat{f}_1(\xi) = \chi\left(|\xi|<\lambda^{-\beta}\right) \widehat{f}_1(\xi)$.
Substitution of these expressions into~\eqref{eq:near_general} yields
\begin{multline*}
(4\pi^2 A\xi^2 + \theta^2) \chi\left(|\xi|<\lambda^{-\beta}\right)\left(\widehat{f}_0+\widehat{f}_1\right)(\xi) \\
- \chi\left(|\xi|<\lambda^{-\beta}\right) B \int_{-\infty}^{\infty}\chi\left(|\eta|<\lambda^{-\beta}\right)\left(\widehat{f}_0+\widehat{f}_1\right)(\eta)d\eta 
= R\left(\widehat{f}_0+\widehat{f}_1\right)(\xi).
\end{multline*}
Rearranging terms yields the following equation for $\widehat{f}_1$, in which $\theta_1^2$ is a parameter to be determined:
\begin{equation}
\left(\widehat{\mathcal{L}} _{0}[\theta_0]\widehat{f}_1\right)(\xi) = -\theta_1^2\left(\widehat{f}_0+\widehat{f}_1\right)(\xi) + R\left(\widehat{f}_0+\widehat{f}_1\right)(\xi).
\label{eq:f1-1}
\end{equation}
By Lemma~\ref{lem:inhom_sol},~\eqref{eq:f1-1} is solvable in $L^2$ only if the right hand side is $L^2$- orthogonal to $\widehat{f}_0$:
\[
\left\langle \widehat{f}_0, -\theta_1^2\left(\widehat{f}_0+\widehat{f}_1\right) + R\left(\widehat{f}_0+\widehat{f}_1\right) \right\rangle_{L^2} = 0.
\]
Solving for $\theta_1^2$, we obtain
\begin{equation}\label{eq:theta}
\theta_1^2 = \frac{\left\langle \widehat{f}_0,R\left(\widehat{f}_0+\widehat{f}_1\right)\right\rangle_{L^2}}{\left\langle \widehat{f}_0,\widehat{f}_0\right\rangle_{L^2} + \left\langle \widehat{f}_0,\widehat{f}_1\right\rangle_{L^2}} \ .
\end{equation}
In summary, equation~\eqref{eq:near_general} can be rewritten equivalently as two coupled equations in terms of $\widehat{f}_1$ and $\theta_1^2$:~\eqref{eq:f1-1}--\eqref{eq:theta}.

Substitution of $\theta_1^2$ in~\eqref{eq:f1-1}, or equivalently projecting the right hand side of~\eqref{eq:f1-1} onto the orthogonal complement of ${\rm span}\{\widehat{f}_0\}$, yields the following closed equation for $\widehat{f}_1$: 
\begin{equation}
 \left( \widehat{\mathcal{L}} _0[\theta_0]\widehat{f}_1 \right)(\xi) = -\frac{\left\langle \widehat{f}_0,R\left(\widehat{f}_0+\widehat{f}_1\right)\right\rangle_{L^2}}{\left\langle \widehat{f}_0,\widehat{f}_0\right\rangle_{L^2} + \left\langle \widehat{f}_0,\widehat{f}_1\right\rangle_{L^2}}\left(\widehat{f}_0+\widehat{f}_1\right)(\xi) + R\left(\widehat{f}_0+\widehat{f}_1\right)(\xi).
 \label{f1eqn-projected}
 \end{equation}

By Lemma~\ref{lem:inhom_sol}, $\widehat{f}_1$ is a solution of~\eqref{f1eqn-projected} with $\int_\mathbb{R}\widehat{f}_1=0$ if and only if :
\begin{equation}
\widehat{f}_1(\xi) \ = \ \mathcal G (\widehat{f}_1)(\xi), \label{fixedpt}
\end{equation}
where
\begin{equation}
\mathcal{G}(\widehat{f}_1)(\xi)\ \equiv \ \frac{\chi\left(|\xi|<\lambda^{-\beta}\right)}{4\pi^2A\xi^2 + \theta_0^2}\left( -\frac{\left\langle \widehat{f}_0,R\left(\widehat{f}_0+\widehat{f}_1\right)\right\rangle_{L^2}}{\left\langle \widehat{f}_0,\widehat{f}_0\right\rangle_{L^2} + \left\langle \widehat{f}_0,\widehat{f}_1\right\rangle_{L^2}}\left(\widehat{f}_0+\widehat{f}_1\right)(\xi) + R\left(\widehat{f}_0+\widehat{f}_1\right)(\xi) \right).
\label{def:G}\end{equation}
We solve the fixed point equation~\eqref{fixedpt} by the contraction mapping principle.
Once $\widehat{f}_1$ has been obtained, $\theta_1^2$ is determined using~\eqref{eq:theta}.

 Introduce
\begin{equation}\label{def:S}
\mathcal S \ = \ \left\{ \widehat{f} \in \mathcal Z_2\ :
 \ 
\big\Vert \widehat{f} \big\Vert_{\mathcal Z_2}\leq C_H \lambda^\alpha \right\},\ \ {\rm for\ some\ fixed}\ C_H>0\ .
\end{equation}
Note that $\mathcal{S}$ is a closed subset of the Banach space ${\mathcal Z_2}$. 
We next show that there exists $\lambda_0>0$ such that for all $0<\lambda<\lambda_0$:
$\mathcal{G}:\mathcal{S}\to\mathcal{S}$
and $\mathcal{G}$ is a contraction mapping.
As a consequence, it will follow that for $0<\lambda<\lambda_0$, there is a unique solution $\widehat{f}_1\in\mathcal{S}$ of the equation $\widehat{f}_1=\mathcal{G}(\widehat{f}_1)$ and therefore of~\eqref{f1eqn-projected}. Moreover, $\| \widehat{f}_1\|\lesssim \lambda^\alpha$ by definition of $\mathcal S$, and one can check:
\[ \int_\mathbb{R}\widehat{f}_1=\int_\mathbb{R}\mathcal{G}(\widehat{f}_1)= \int_\mathbb{R}\widehat{f}_0(\xi)\left( -\frac{\left\langle \widehat{f}_0,R\left(\widehat{f}_0+\widehat{f}_1\right)\right\rangle_{L^2}}{\left\langle \widehat{f}_0,\widehat{f}_0\right\rangle_{L^2} + \left\langle \widehat{f}_0,\widehat{f}_1\right\rangle_{L^2}} \left(\widehat{f}_0+\widehat{f}_1\right)(\xi) + R\left(\widehat{f}_0+\widehat{f}_1\right)(\xi) \right) d\xi =  0 .\]

It then remains to obtain an estimate of ${\theta_1}^2={\theta_0}^2-\theta^2$. From~\eqref{eq:theta}, one has
\[ \big| \theta_1^2\big| \ \leq \ \left| \left\langle \widehat{f}_0,R\left(\widehat{f}_0+\widehat{f}_1\right)\right\rangle_{L^2}\right|\left| \frac{1}{\left\langle \widehat{f}_0,\widehat{f}_0\right\rangle_{L^2} + \left\langle \widehat{f}_0,\widehat{f}_1\right\rangle_{L^2}} \right| \ \lesssim\ \lambda^\alpha,\]
where we used~\eqref{norms} and~\eqref{assumptionsR}, and the fact that for $\lambda$ sufficiently small, $\left\langle \widehat{f}_0,\widehat{f}_0\right\rangle_{L^2} \geq c>0$, where $c$ is independent of $\lambda$. Lemma~\ref{lem:technical} is proved.
\end{proof}

\noindent{\it Proof that $\mathcal{G}:\mathcal{S}\to\mathcal{S}$ is a contraction mapping}: 
The result will follow from the two following claims, proved below:
\begin{Claim} \label{clm:G0small} There exists $C_H=C\left(\theta_0,A,C_R,\big\| \widehat{f}_0\big\|_{\mathcal Z_2}\right)>0$ such that
$
\left\| \mathcal G(0) \right\|_{\mathcal{Z}_2} \leq \frac{1}{2} C_H\lambda^\alpha\ .
$
\end{Claim}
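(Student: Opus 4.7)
The plan is to substitute $\widehat{f}_1 = 0$ into the definition \eqref{def:G} of $\mathcal{G}$ and bound the $\mathcal{Z}_2$ norm of the resulting explicit expression by a constant multiple of $\lambda^\alpha$. Writing $\mathcal{G}(0)(\xi) = m(\xi)\, h(\xi)$ with
\[ m(\xi) \equiv \frac{\chi\!\left(|\xi|<\lambda^{-\beta}\right)}{4\pi^2 A \xi^2 + \theta_0^2}\,,\qquad h \equiv R\bigl(\widehat{f}_0\bigr) \;-\; \frac{\langle \widehat{f}_0, R(\widehat{f}_0)\rangle_{L^2}}{\langle \widehat{f}_0, \widehat{f}_0\rangle_{L^2}}\, \widehat{f}_0\,, \]
I would first observe that the denominator $\langle \widehat{f}_0, \widehat{f}_0\rangle_{L^2}$ is bounded below uniformly in $\lambda$: since $\widehat{f}_0$ is given by the explicit formula \eqref{eq:f0}, one has $\|\widehat{f}_0\|_{L^2}^2 \uparrow \int_\RR (4\pi^2 A \xi^2 + \theta_0^2)^{-2}\,d\xi$ as $\lambda\to 0$, so for all $\lambda$ sufficiently small, $\langle \widehat{f}_0, \widehat{f}_0\rangle_{L^2} \geq c(A,\theta_0) > 0$.

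Next I would decouple the multiplier $m$ from the $\mathcal{Z}_2$ estimate by factoring $m(\xi) = \left[(1+\xi^2)\, m(\xi)\right] \cdot (1+\xi^2)^{-1}$. The bracketed factor is bounded in $L^\infty$ by a constant depending only on $A$ and $\theta_0$ (for $|\xi|\leq 1$ one uses $2\theta_0^{-2}$, for $|\xi|\geq 1$ one uses $(2\pi^2 A)^{-1}$). Applying the second inequality in \eqref{norms} to extract the $L^\infty$ factor, then the third inequality in \eqref{norms} to pass from $(1+\xi^2)^{-1} h$ in $\mathcal{Z}_2$ to $h$ in $\mathcal{Z}_1$, yields
\[ \|\mathcal{G}(0)\|_{\mathcal{Z}_2} \;\lesssim\; C(A,\theta_0)\, \|h\|_{\mathcal{Z}_1}. \]

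Finally, the assumption $R_{\alpha,\beta}$ of \eqref{assumptionsR} gives $\|R(\widehat{f}_0)\|_{\mathcal{Z}_1}\leq C_R \lambda^\alpha \|\widehat{f}_0\|_{\mathcal{Z}_2}$, and the first inequality in \eqref{norms} then yields $|\langle \widehat{f}_0, R(\widehat{f}_0)\rangle_{L^2}| \lesssim \|\widehat{f}_0\|_{\mathcal{Z}_2}\|R(\widehat{f}_0)\|_{\mathcal{Z}_1} \leq C_R \lambda^\alpha \|\widehat{f}_0\|_{\mathcal{Z}_2}^2$. Combining these with the lower bound on $\langle \widehat{f}_0, \widehat{f}_0\rangle_{L^2}$ and the uniform bound $\|\widehat{f}_0\|_{\mathcal{Z}_1}\leq C(A,\theta_0)$ (immediate from the explicit formula for $\widehat{f}_0$ and the concrete identification of $\mathcal{Z}_1$ in the relevant applications) gives
\[ \|h\|_{\mathcal{Z}_1} \;\leq\; \frac{|\langle \widehat{f}_0, R(\widehat{f}_0)\rangle_{L^2}|}{\langle \widehat{f}_0, \widehat{f}_0\rangle_{L^2}}\, \|\widehat{f}_0\|_{\mathcal{Z}_1} \;+\; \|R(\widehat{f}_0)\|_{\mathcal{Z}_1} \;\leq\; C\, \lambda^\alpha, \]
with $C = C(A, \theta_0, C_R, \|\widehat{f}_0\|_{\mathcal{Z}_2})$. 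Setting $C_H := 2C$ proves the claim. There is no real obstacle here; the only subtlety is careful bookkeeping so that every constant depends only on the parameters listed in the statement.
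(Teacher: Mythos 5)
Your proof is correct and uses the same two basic tools as the paper --- the three norm compatibility inequalities in \eqref{norms} and the smallness assumption $R_{\alpha,\beta}$ in \eqref{assumptionsR} --- but organizes the estimate slightly differently. The paper splits $\mathcal{G}(0) = m\cdot(-c\widehat{f}_0 + R(\widehat{f}_0))$ and bounds the two pieces with different weights: for the $\widehat{f}_0$ term it only needs $\|m\|_{L^\infty}\le\theta_0^{-2}$ together with $\|\widehat{f}_0\|_{\mathcal Z_2}$, while it reserves the $(1+\xi^2)$-factorization trick for the $R(\widehat{f}_0)$ term alone. You instead bundle everything into $h$ and apply the factorization uniformly, which is cleaner to read but forces you to also invoke a bound on $\|\widehat{f}_0\|_{\mathcal Z_1}$. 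That bound is indeed available (the paper's standing assumption before the lemma is $\widehat{f}_0\in\mathcal Z_1\cap\mathcal Z_2$, and in both concrete instantiations $\|\widehat{f}_0\|_{\mathcal Z_1}\le C(A,\theta_0)$), but it makes your constant nominally depend on one more quantity than the one listed in the claim; the paper's decomposition avoids this by never putting the $\widehat{f}_0$ term through the weighted estimate. On the other hand, you are more careful than the paper's written proof in one respect: you retain the normalizing denominator $\langle\widehat{f}_0,\widehat{f}_0\rangle_{L^2}$ coming from \eqref{def:G} and justify its lower bound, whereas the display in the paper's proof silently drops it (harmlessly, since it is $\gtrsim 1$ for small $\lambda$). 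One small imprecision: since $\theta_0=\theta_0(\lambda)$ also varies, the expression $\int_\RR(4\pi^2A\xi^2+\theta_0^2)^{-2}d\xi$ is not literally the $\lambda\to0$ limit of $\|\widehat{f}_0\|_{L^2}^2$, and the monotone convergence ``$\uparrow$'' is not exactly right; but the uniform lower bound you actually need follows from \eqref{eq:esttheta0} regardless, so the conclusion is unaffected.
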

\begin{Claim} \label{clm:Gcontractant} There exists $\lambda_0>0$ such that if $0\leq\lambda<\lambda_0$, then
$\left\| \mathcal G(\widehat{f}_1) - \mathcal G(\widehat{f}_2) \right\|_{\mathcal{Z}_2} \leq \frac{1}{2} \left\| \widehat{f}_1 - \widehat{f}_2 \right\|_{\mathcal{Z}_2}$.
\end{Claim}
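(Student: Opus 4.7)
The plan is to expand the difference $\mathcal G(\widehat f_1)-\mathcal G(\widehat f_2)$ into three pieces, each of which carries an explicit factor of $\lambda^{\alpha}$, so that contraction with constant $1/2$ follows once $\lambda$ is taken small enough. Introducing the shorthands $A_j\equiv \widehat f_0+\widehat f_j$, $N_j\equiv \langle \widehat f_0,R(A_j)\rangle_{L^2}$ and $D_j\equiv \langle \widehat f_0,\widehat f_0\rangle_{L^2}+\langle \widehat f_0,\widehat f_j\rangle_{L^2}$ for $j=1,2$, the definition \eqref{def:G} reads
\[
\mathcal G(\widehat f_j)(\xi)\ =\ \frac{\chi(|\xi|<\lambda^{-\beta})}{4\pi^2 A\xi^2+\theta_0^2}\Bigl(-\tfrac{N_j}{D_j}A_j(\xi)\ +\ R(A_j)(\xi)\Bigr),
\]
and the linearity of $R$ gives $R(A_1)-R(A_2)=R(\widehat f_1-\widehat f_2)$, so that
\[
\mathcal G(\widehat f_1)-\mathcal G(\widehat f_2)\ =\ \frac{\chi(|\xi|<\lambda^{-\beta})}{4\pi^2 A\xi^2+\theta_0^2}\Bigl[\,R(\widehat f_1-\widehat f_2)\ -\ \tfrac{N_1}{D_1}(\widehat f_1-\widehat f_2)\ -\ \bigl(\tfrac{N_1}{D_1}-\tfrac{N_2}{D_2}\bigr)A_2\Bigr].
\]

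I would then estimate the $\mathcal Z_2$-norm of each piece by exploiting the hypotheses \eqref{norms} and \eqref{assumptionsR}. The prefactor $\chi(|\xi|<\lambda^{-\beta})/(4\pi^2 A\xi^2+\theta_0^2)$ is bounded in $L^\infty$ (by $1/\theta_0^2$) and dominated pointwise by $C/(1+\xi^2)$, so it acts boundedly both as $\mathcal Z_2\to\mathcal Z_2$ (via the multiplication rule) and as $\mathcal Z_1\to\mathcal Z_2$ (via the weighted comparison). Thus the first piece is controlled by $\|R(\widehat f_1-\widehat f_2)\|_{\mathcal Z_1}\le C_R\lambda^{\alpha}\|\widehat f_1-\widehat f_2\|_{\mathcal Z_2}$. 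For the second piece, the pairing bound yields $|N_1|\le \|\widehat f_0\|_{\mathcal Z_2}\|R(A_1)\|_{\mathcal Z_1}\lesssim \lambda^{\alpha}$, while the uniform lower bound $|D_1|\ge \langle\widehat f_0,\widehat f_0\rangle_{L^2}-\|\widehat f_0\|_{\mathcal Z_1}\|\widehat f_1\|_{\mathcal Z_2}\ge c/2$ (valid for $\lambda$ small, since $\widehat f_1\in\mathcal S$) gives $|N_1/D_1|\lesssim\lambda^{\alpha}$, and hence this piece contributes $\lesssim \lambda^{\alpha}\|\widehat f_1-\widehat f_2\|_{\mathcal Z_2}$.

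For the third piece, the identity
\[
\frac{N_1}{D_1}-\frac{N_2}{D_2}\ =\ \frac{(N_1-N_2)D_2+N_2(D_2-D_1)}{D_1 D_2}
\]
combined with $|N_1-N_2|\le \|\widehat f_0\|_{\mathcal Z_2}\|R(\widehat f_1-\widehat f_2)\|_{\mathcal Z_1}\lesssim \lambda^{\alpha}\|\widehat f_1-\widehat f_2\|_{\mathcal Z_2}$, $|N_2|\lesssim\lambda^{\alpha}$, $|D_2-D_1|\le \|\widehat f_0\|_{\mathcal Z_1}\|\widehat f_1-\widehat f_2\|_{\mathcal Z_2}$, and the uniform bounds $|D_j|\ge c/2$, $|D_2|\lesssim 1$, gives $|N_1/D_1-N_2/D_2|\lesssim \lambda^{\alpha}\|\widehat f_1-\widehat f_2\|_{\mathcal Z_2}$. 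Since $\|A_2\|_{\mathcal Z_2}\le \|\widehat f_0\|_{\mathcal Z_2}+C_H\lambda^{\alpha}\lesssim 1$, the third piece is likewise bounded by $\lesssim \lambda^{\alpha}\|\widehat f_1-\widehat f_2\|_{\mathcal Z_2}$. Summing the three contributions yields $\|\mathcal G(\widehat f_1)-\mathcal G(\widehat f_2)\|_{\mathcal Z_2}\le C\lambda^{\alpha}\|\widehat f_1-\widehat f_2\|_{\mathcal Z_2}$ with $C$ depending only on $\theta_0$, $A$, $B$, $C_R$, $C_H$ and $\|\widehat f_0\|_{\mathcal Z_1\cap \mathcal Z_2}$, and choosing $\lambda_0$ so small that $C\lambda_0^{\alpha}\le 1/2$ proves the claim. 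The main bookkeeping obstacle is the third piece: the scalar $N_1/D_1-N_2/D_2$ has no visible $\lambda^{\alpha}$ factor, and one must be careful to extract one either from $N_1-N_2$ (via the decay of $R$) or from $N_2$ (which is itself $\O(\lambda^{\alpha})$), while simultaneously keeping the denominators safely bounded away from zero uniformly for $\widehat f_1,\widehat f_2\in\mathcal S$.
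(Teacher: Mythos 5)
Your proof is correct and follows essentially the same approach as the paper: split off the $R(\widehat{f}_1-\widehat{f}_2)$ term using linearity, then algebraically decompose the scalar quotient $N_j/D_j$ so that each piece visibly carries a factor $\lambda^{\alpha}$, using the pairing bound, the smallness assumption $R_{\alpha,\beta}$, and the uniform lower bound on $D_j$ valid for $\widehat{f}_j\in\mathcal{S}$ with $\lambda$ small. Your algebraic bookkeeping is a mild variant of the paper's: where the paper adds and subtracts $\tfrac{N_2}{D_1}A_1$ and $\tfrac{N_2}{D_1}A_2$ to produce its three pieces $I,II,III$, you add and subtract $\tfrac{N_1}{D_1}A_2$ and then expand the scalar $\tfrac{N_1}{D_1}-\tfrac{N_2}{D_2}$ via the standard cross-multiplication identity. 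The two splittings lead to identical estimates, and the ingredients you rely on (linearity of $R$, $|N_j|\lesssim\lambda^\alpha$, $|N_1-N_2|\lesssim\lambda^\alpha\|\widehat f_1-\widehat f_2\|_{\mathcal Z_2}$, $|D_j|\ge c/2$, boundedness of the prefactor as an operator $\mathcal Z_1\to\mathcal Z_2$ and $\mathcal Z_2\to\mathcal Z_2$) are exactly those the paper uses.
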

It follows that $\mathcal{G}$ maps $\mathcal{S}\equiv \left\{ f\in \mathcal Z_2\ :
 \ 
\big\Vert \widehat{f} \big\Vert_{\mathcal Z_2}\leq C_H \lambda^\alpha \right\}$  into $\mathcal{S}$ since
\[
\big\Vert \mathcal{G}(f) \big\Vert_{\mathcal{Z}_2}\ \leq \ \big\Vert \mathcal{G}(f)-\mathcal{G}(0) \big\Vert_{\mathcal{Z}_2}+\big\Vert \mathcal{G}(0) \big\Vert_{\mathcal{Z}_2}\
\le \ \frac{1}{2}\big\Vert f-0 \big\Vert_{\mathcal Z_2}+\frac{1}{2}C_H\lambda^\alpha \ \leq\ C_H\lambda^\alpha\ .\]
Therefore, by Claim~\ref{clm:Gcontractant}, $\mathcal{G}:\mathcal{S}\to\mathcal{S}$ is a contraction mapping.

\begin{proof}[Proof of Claim~\ref{clm:G0small}:]
By definition, one has
\[
\mathcal{G}(0)(\xi)\ \equiv \ \frac{\chi\left(|\xi|<\lambda^{-\beta}\right)}{4\pi^2A\xi^2 + \theta_0^2}\left( -\left\langle \widehat{f}_0,R\left(\widehat{f}_0\right)\right\rangle_{L^2}\widehat{f}_0(\xi) + R\left(\widehat{f}_0\right)(\xi) \right).
 \]
 It follows, from our assumptions~\eqref{norms} on functional spaces $(\mathcal Z_1,\mathcal Z_2)$:
\begin{align}
\left\| \mathcal G(0) \right\|_{\mathcal Z_2} 
& \lesssim \left\| \frac{\chi\left(|\xi|<\lambda^{-\beta}\right)}{4\pi^2A\xi^2 + \theta_0^2} \right\|_{L^{\infty}} \left\langle \widehat{f}_0,R\left(\widehat{f}_0\right)\right\rangle_{L^2} \big\| \widehat{f}_0 \big\|_{\mathcal Z_2} + \left\| \frac{\chi\left(|\xi|<\lambda^{-\beta}\right) (1+\xi^2)}{4\pi^2A\xi^2 + \theta_0^2} \right\|_{L^\infty} \left\| \frac{ R\left(\widehat{f}_0\right) }{1+|\cdot|^2}\right\|_{\mathcal Z_2} \nn \\
&\lesssim \ \left\| R\left(\widehat{f}_0\right)\right\|_{\mathcal Z_1} \big\| \widehat{f}_0\big\|_{\mathcal Z_2}^2 + \left\| R\left(\widehat{f}_0\right)\right\|_{\mathcal Z_1} . \label{est:S0}
\end{align}
Claim~\ref{clm:G0small} is now obvious, using the smallness hypothesis on the operator $R$,~\eqref{assumptionsR} .
\end{proof}

\begin{proof}[Proof of Claim~\ref{clm:Gcontractant}:]
Let us decompose the mapping $\mathcal G$ as follows:
\begin{align*}
 \mathcal G(\widehat{f}_1 - \widehat{f}_2) &= \frac{\chi\left(|\xi|<\lambda^{-\beta}\right)}{4\pi^2A\xi^2 + \theta_0^2} \Bigg( -\frac{\left\langle \widehat{f}_0,R\left(\widehat{f}_0+\widehat{f}_1\right)\right\rangle_{L^2}}{\left\langle \widehat{f}_0,\widehat{f}_0\right\rangle_{L^2} + \left\langle \widehat{f}_0,\widehat{f}_1\right\rangle_{L^2}}\left(\widehat{f}_0+\widehat{f}_1\right)(\xi)
\\
&\qquad\qquad\qquad\qquad+\frac{\left\langle \widehat{f}_0,R\left(\widehat{f}_0+\widehat{f}_2\right)\right\rangle_{L^2}}{\left\langle \widehat{f}_0,\widehat{f}_0\right\rangle_{L^2} + \left\langle \widehat{f}_0,\widehat{f}_2\right\rangle_{L^2}}\left(\widehat{f}_0+\widehat{f}_2\right)(\xi) 
\ +\ R\left(\widehat{f}_1-\widehat{f}_2\right)(\xi)\Bigg)  \nonumber\\
& \equiv \ \frac{S_1[\widehat{f}_1](\xi)-S_1[\widehat{f}_2](\xi)}{4\pi^2A\xi^2 + \theta_0^2}+ \frac{\chi\left(|\xi|<\lambda^{-\beta}\right)\ R\left(\widehat{f}_1-\widehat{f}_2\right)(\xi)}{4\pi^2A\xi^2 + \theta_0^2} .
\end{align*}
The following estimate follows from our assumptions~\eqref{norms} on the spaces $(\mathcal Z_1,\mathcal Z_2)$:
\begin{align}
\left\| \mathcal G(\widehat{f}_1 - \widehat{f}_2) \right\|_{\mathcal Z_2} & \leq \ \left\| \frac{S_1[\widehat{f}_1](\xi)-S_1[\widehat{f}_2](\xi)}{4\pi^2A\xi^2 + \theta_0^2}\right\|_{\mathcal Z_2} + \left\| \frac{\chi\left(|\xi|<\lambda^{-\beta}\right)\ R\left(\widehat{f}_1-\widehat{f}_2\right)(\xi)}{4\pi^2A\xi^2 + \theta_0^2} \right\|_{\mathcal Z_2} \nn \\
& \lesssim \left\| \frac{\chi\left(|\xi|<\lambda^{-\beta}\right)}{4\pi^2A\xi^2 + \theta_0^2} \right\|_{L^{\infty}} \Big\| S_1[\widehat{f}_1]-S_1[\widehat{f}_2] \Big\|_{\mathcal Z_2} + \left\| \frac{\chi\left(|\xi|<\lambda^{-\beta}\right) (1+\xi^2)}{4\pi^2A\xi^2 + \theta_0^2} \right\|_{L^\infty} \left\| \frac{ R\left(\widehat{f}_1-\widehat{f}_2\right) }{1+|\cdot|^2}\right\|_{\mathcal Z_2} \nn \\
& \lesssim \ \left\| S_1[\widehat{f}_1]-S_1[\widehat{f}_2] \right\|_{\mathcal Z_2} + \left\| R\left(\widehat{f}_1-\widehat{f}_2\right)\right\|_{\mathcal Z_1} \ . \label{est:S}
\end{align}

The second term in~\eqref{est:S} is estimated using assumptions $R_{\alpha,\beta}$,~\eqref{assumptionsR}:
\begin{equation}
 \left\| R\left(\widehat{f}_1-\widehat{f}_2\right) \right\|_{\mathcal{Z}_1} \ \leq\ C_R \lambda^\alpha\left\| \widehat{f}_1-\widehat{f}_2\right\|_{\mathcal{Z}_2} . \label{est:diffS2}
\end{equation}

Let us now turn to the first term in~\eqref{est:S}.
\begin{align}
S_1[\widehat{f}_1]-S_1[\widehat{f}_2] &= - \frac{\left\langle \widehat{f}_0,R\left(\widehat{f}_0+\widehat{f}_1\right)\right\rangle_{L^2}}{\left\langle \widehat{f}_0,\widehat{f}_0\right\rangle_{L^2} + \left\langle \widehat{f}_0,\widehat{f}_1\right\rangle_{L^2}} \left(\widehat{f}_0+\widehat{f}_1\right) +\frac{\left\langle \widehat{f}_0,R\left(\widehat{f}_0+\widehat{f}_2\right)\right\rangle_{L^2}}{\left\langle \widehat{f}_0,\widehat{f}_0\right\rangle_{L^2} + \left\langle \widehat{f}_0,\widehat{f}_2\right\rangle_{L^2}} \left(\widehat{f}_0+\widehat{f}_2\right)\nonumber \\
&=-\ \frac{\left\langle \widehat{f}_0,R\left(\widehat{f}_1-\widehat{f}_2\right)\right\rangle_{L^2} \left(\widehat{f}_0+\widehat{f}_1\right) }{\left\langle \widehat{f}_0,\widehat{f}_0\right\rangle_{L^2} + \left\langle \widehat{f}_0,\widehat{f}_1\right\rangle_{L^2}}
\ - \ \frac{\left\langle \widehat{f}_0,R\left(\widehat{f}_0+\widehat{f}_2\right)\right\rangle_{L^2} \left(\widehat{f}_1-\widehat{f}_2\right)}{\left\langle \widehat{f}_0,\widehat{f}_0\right\rangle_{L^2} + \left\langle \widehat{f}_0,\widehat{f}_1\right\rangle_{L^2}} \nn \\
& \quad - \ \left\langle \widehat{f}_0,R\left(\widehat{f}_0+\widehat{f}_2\right)\right\rangle_{L^2} \left(\widehat{f}_0+\widehat{f}_2\right)\left( \frac{1}{\left\langle \widehat{f}_0,\widehat{f}_0\right\rangle_{L^2} + \left\langle \widehat{f}_0,\widehat{f}_1\right\rangle_{L^2}}-\frac{1}{\left\langle \widehat{f}_0,\widehat{f}_0\right\rangle_{L^2} + \left\langle \widehat{f}_0,\widehat{f}_2\right\rangle_{L^2}}\right) \nn \\
&= I+II+III. \label{eq:decompS2}
\end{align}

The result is a consequence of the following estimates:
\[
\left\langle \widehat{f}_0,g\right\rangle_{L^2}  \leq  C \big\| \widehat{f}_0\big\|_{\mathcal Z_1} \left\| g \right\|_{\mathcal Z_2}  \leq  C_1\left\| g \right\|_{\mathcal Z_2} ,\ \ 
\left\langle \widehat{f}_0,R\left(g\right)\right\rangle_{L^2}\leq  C \big\| \widehat{f}_0\big\|_{\mathcal Z_2} \left\| R\left(g\right) \right\|_{\mathcal Z_1}  \leq  C_2\lambda^\alpha \left\| g \right\|_{\mathcal Z_2}.
\]
with $C_1=C\big(\big\| \widehat{f}_0\big\|_{\mathcal Z_1}\big)$ and $C_2=C_2\big(\big\| \widehat{f}_0\big\|_{\mathcal Z_2},C_R\big)$.
Using the above, one checks that for sufficiently small $\lambda$,
\begin{align*}
\left\| I \right\|_{\mathcal Z_2}\ &\lesssim \ C_2\lambda^\alpha \left\| \widehat{f}_1-\widehat{f}_2 \right\|_{\mathcal Z_2}(\big\| \widehat{f}_0\big\|_{\mathcal Z_2}+C_H \lambda^\alpha), \\ 
\left\| II \right\|_{\mathcal Z_2}\ &\lesssim \ C_2\lambda^\alpha(\big\| \widehat{f}_0\big\|_{\mathcal Z_2}+C_H \lambda^\alpha) \left\| \widehat{f}_1-\widehat{f}_2 \right\|_{\mathcal Z_2},\\ 
\left\| III \right\|_{\mathcal Z_2}\ &\lesssim \  C_1 \left\| \widehat{f}_1-\widehat{f}_2 \right\|_{\mathcal Z_2} C_2 \lambda^\alpha(\big\| \widehat{f}_0\big\|_{\mathcal Z_2}+C_H \lambda^\alpha)^2.
\end{align*}
Thus if $ C_1\lambda^\alpha<1/2$, one has
\begin{equation}\label{est:diffS1}
\left\| S_1[\widehat{f}_1]-S_1[\widehat{f}_2] \right\|_{\mathcal Z_2} \ \leq \ \left\| I \right\|_{\mathcal Z_2}+\left\| II \right\|_{\mathcal Z_2}+\left\| III \right\|_{\mathcal Z_2} \ \lesssim\ \lambda^\alpha \left\| \widehat{f}_1-\widehat{f}_2 \right\|_{\mathcal Z_2} .
\end{equation}
Plugging~\eqref{est:diffS2} and~\eqref{est:diffS1} into~\eqref{est:S}, it follows the existence of a constant, $C_0>0$, such that $\| \mathcal G(\widehat{f}_1) - \mathcal G(\widehat{f}_2) \|_{\mathcal{Z}_2} \leq C_0\lambda^\alpha \| \widehat{f}_1 - \widehat{f}_2 \|_{\mathcal{Z}_2}$. Thus for $0<\lambda<\lambda_0\leq C_0^{-\frac1\alpha}$, we obtain a contraction and Claim~\ref{clm:Gcontractant} is proved. 
\end{proof}

\section{Proof of Th'm~\ref{thm:Qzero}; 
 Edge bifurcations for $-\partial_x^2+\lambda V(x)$ }\label{sec:nonper}

In this section we prove Theorem~\ref{thm:Qzero},  the special case: $Q\equiv0$  of our main result, Theorem~\ref{thm:per_result}.
In this case we study the bifurcation of solutions to the eigenvalue problem
\begin{equation}\label{eq:orig_eqn_nonper}
\left(-\partial_x^2+\lambda V(x)\right)\psi^\lambda(x)\ =\ E^\lambda\psi^\lambda(x),\ \ \psi^\lambda\in L^2(\mathbb R)
\end{equation}
into the interval $(-\infty,0)$, the semi-infinite spectral gap of $H_0\equiv-\partial_x^2$, for $V$ a spatially localized potential, and $\lambda>0$ sufficiently small. 
 Here, the Floquet-Bloch eigenfunctions are exponentials. Hence, calculations are more straightforward and error bounds on the approximations are sharper.

\subsection{Near and far frequency decomposition}
Taking the Fourier transform of~\eqref{eq:orig_eqn_nonper} yields
\begin{equation} \label{eq:FT_orig_eqn_nonper}
\left(4\pi^{2}\xi^2-E^\lambda\right)\widehat{\psi^\lambda}(\xi)+\lambda\int_{\mathbb R}\widehat{V}\left(\xi-\zeta\right)\widehat{\psi^\lambda}(\zeta)\ d\zeta = 0.
\end{equation}
We shall study~\eqref{eq:FT_orig_eqn_nonper} via the equivalent system for the
\begin{align*}
 &\textrm{{\em near frequency components}: } \{\widehat{\psi^\lambda}(\xi): |\xi|<\lambda^{r}\ \}\ \textrm{  and} \\
& \textrm{ {\em far frequency components}: } \{\widehat{\psi^\lambda}(\xi): |\xi|\geq \lambda^{r}\ \}
\  \textrm{of}\  \psi^\lambda.
\end{align*}
 
Let $r$ be a parameter, chosen to satisfy: $0<r<1$. 
Recall the cutoff functions, $\chi$ and $\overline{\chi}$, introduced in~\eqref{chi-def} and 
$1={\vphantom{\overline{\chi}}\chi}_{_{\lambda^r}}(\xi)+{\overline{\chi}}_{_{\lambda^r}}(\xi)$.
 Multiplying~\eqref{eq:FT_orig_eqn_nonper} by this identity we get
\begin{multline*}
0 =  \left(4\pi^{2}|\xi|^2-E^\lambda\right)\left({\vphantom{\overline{\chi}}\chi}_{_{\lambda^r}}+\overline{\chi}_{_{\lambda^r}}\right)(\xi)\ \widehat{\psi^\lambda}(\xi)
\\
 +\lambda\int_{-\infty}^{\infty}\left({\vphantom{\overline{\chi}}\chi}_{_{\lambda^r}}+\overline{\chi}_{_{\lambda^r}}\right)(\xi)\ \widehat{V}(\xi-\zeta)\left({\vphantom{\overline{\chi}}\chi}_{_{\lambda^r}}+
 \overline{\chi}_{_{\lambda^r}}\right)(\zeta)\ \widehat{\psi^\lambda}(\zeta)d\zeta.
\end{multline*}
Introduce notation for the near- and far- frequency components of $\psi^\lambda$:
\begin{equation}\label{eqn:def-near-far_nonper} 
\widehat{\psi}_\nr(\xi)\ \equiv\ {\vphantom{\overline{\chi}}\chi}_{_{\lambda^r}}(\xi)\ \widehat{\psi^\lambda}(\xi) 
\quad \text{and} \quad \widehat{\psi}_\fr(\xi)\equiv \overline{\chi}_{_{\lambda^r}}(\xi)\ \widehat{\psi^\lambda}(\xi).\end{equation}
Then, the eigenvalue equation is equivalent to the following coupled system: 
\begin{align}\label{eq:near_nonper}
\left(4\pi^{2}|\xi|^2-E^\lambda\right)\widehat{\psi}_\nr(\xi)+\lambda{\vphantom{\overline{\chi}}\chi}_{\lambda^{r}}(\xi)\int_{-\infty}^{\infty}\widehat{V}(\xi-\zeta)\big( \widehat{\psi}_\nr(\zeta)+\widehat{\psi}_\fr(\zeta)\big)d\zeta&=0,\\
\label{eq:far_nonper}
\left(4\pi^{2}|\xi|^2-E^\lambda\right)\widehat{\psi}_\fr(\xi)+\lambda\overline{\chi}_{_{\lambda^r}}(\xi)\int_{-\infty}^{\infty}\widehat{V}(\xi-\zeta)\big( \widehat{\psi}_\nr(\zeta)+\widehat{\psi}_\fr(\zeta)\big)d\zeta&=0.
\end{align}

In what follows we shall set $E^\lambda=-\lambda^2\theta^2$, where $\theta=\theta(\lambda)$ is expected to be $\mathcal{O}(1)$ as $\lambda\downarrow0$. This anticipates that the bifurcating eigenvalue, $E^\lambda$, will be real, negative and $\mathcal{O}(\lambda^2)$. 

\subsection{Analysis of the far frequency components}\label{subsec:far_nonper}

We view~\eqref{eq:far_nonper} as an equation for $\widehat{\psi}_\fr$, depending on ``parameters'' $(\widehat{\psi}_\nr;\lambda) $. The following proposition studies the mapping $(\widehat{\psi}_\nr;\lambda) \mapsto \widehat{\psi}_\fr$.

\begin{Proposition}\label{prop:far_nonper_intermsof_near_nonper}
Let $\widehat{\psi}_\nr\in L^1$. There exists $\lambda_0>0$, such that for $0<\lambda<\lambda_0$, the following holds.
Set
$ E^\lambda\ \equiv\ -\lambda^2\theta^2,\ \ \textrm{with}\ \ |\theta|\leq \pi \lambda^{r-1},\ \ r\in(0,1)\ .
$
There is a unique solution $\widehat{\psi}_\fr\ =\ \widehat{\psi}_\fr\left[\widehat{\psi}_\nr;\lambda\right] $ of the far frequency equation~\eqref{eq:far_nonper}. The mapping
$
(\widehat{\psi}_\nr;\lambda) \mapsto \widehat{\psi}_\fr\left[\widehat{\psi}_\nr;\lambda\right]
$ 
maps $L^1(\mathbb{R}) \times \mathbb{R}$ to $ L^1(\mathbb{R})$ and satisfies the bound:
\begin{equation}
\big\| \widehat{\psi}_\fr\big\|_{L^1} \leq C(\|\widehat V\|_{L^\infty})\ \lambda^{1-r }\ \| \widehat{\psi}_\nr\|_{L^1}.
\label{1stbound}\end{equation}
\end{Proposition}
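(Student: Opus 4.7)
\textbf{Proof plan for Proposition~\ref{prop:far_nonper_intermsof_near_nonper}.} Since $E^\lambda = -\lambda^2\theta^2 \le 0$, the multiplier $4\pi^2|\xi|^2 - E^\lambda = 4\pi^2|\xi|^2 + \lambda^2\theta^2$ is strictly positive for $|\xi|\ge\lambda^r$. My plan is to divide equation~\eqref{eq:far_nonper} by this multiplier, recasting it as the fixed-point problem
\[
\widehat{\psi}_\fr = \mathcal{K}[\widehat{\psi}_\fr] + s[\widehat{\psi}_\nr],
\]
where
\[
\mathcal{K}[\widehat{g}](\xi) \equiv -\lambda\,\frac{\overline{\chi}_{\lambda^r}(\xi)}{4\pi^2|\xi|^2 + \lambda^2\theta^2}\int_{\RR} \widehat{V}(\xi-\zeta)\,\widehat{g}(\zeta)\,d\zeta,
\qquad
s[\widehat{\psi}_\nr] \equiv \mathcal{K}[\widehat{\psi}_\nr],
\]
and then to solve it in $L^1(\RR)$ by the Banach contraction principle (equivalently, a Neumann series).

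The crux is to control $\mathcal{K}$ as a bounded operator on $L^1$. Since $\widehat{V}\in W^{1,\infty}\subset L^\infty$, Young's inequality gives $\| \widehat{V} * \widehat{g}\|_{L^\infty} \le \|\widehat{V}\|_{L^\infty} \|\widehat{g}\|_{L^1}$, so
\[
\bigl\|\mathcal{K}[\widehat{g}\,]\bigr\|_{L^1} \ \le\ \lambda \,\|\widehat{V}\|_{L^\infty}\,\|\widehat{g}\|_{L^1}
\int_{|\xi|\ge\lambda^r} \frac{d\xi}{4\pi^2|\xi|^2 + \lambda^2\theta^2}
\ \le\ \lambda\,\|\widehat{V}\|_{L^\infty}\,\|\widehat{g}\|_{L^1}\int_{|\xi|\ge\lambda^r}\frac{d\xi}{4\pi^2|\xi|^2}
\ \le\ \frac{\|\widehat{V}\|_{L^\infty}}{2\pi^2}\,\lambda^{1-r}\,\|\widehat{g}\|_{L^1}.
\]
Note that the estimate is uniform in $\theta$ (in particular it makes no use of the bound $|\theta|\le\pi\lambda^{r-1}$), which is what makes the argument robust. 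Thus for $\lambda$ small enough that $C(\|\widehat V\|_{L^\infty})\lambda^{1-r}\le 1/2$, the operator $\mathcal{K}$ is a strict contraction on $L^1(\RR)$ and $I - \mathcal{K}$ is invertible via Neumann series.

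Applying the same estimate to $s[\widehat{\psi}_\nr] = \mathcal{K}[\widehat{\psi}_\nr]$ gives $\|s[\widehat{\psi}_\nr]\|_{L^1}\lesssim \lambda^{1-r}\|\widehat{\psi}_\nr\|_{L^1}$. Hence the unique solution
\[
\widehat{\psi}_\fr\bigl[\widehat{\psi}_\nr;\lambda\bigr] \ =\ (I - \mathcal{K})^{-1}\,s[\widehat{\psi}_\nr]
\]
exists in $L^1(\RR)$, is automatically supported in $\{|\xi|\ge\lambda^r\}$ because $\mathcal{K}$ and $s$ carry the factor $\overline{\chi}_{\lambda^r}$, and satisfies
\[
\bigl\|\widehat{\psi}_\fr\bigr\|_{L^1}\ \le\ 2\,\|s[\widehat{\psi}_\nr]\|_{L^1}\ \le\ C(\|\widehat V\|_{L^\infty})\,\lambda^{1-r}\,\|\widehat{\psi}_\nr\|_{L^1},
\]
which is the claimed bound~\eqref{1stbound}. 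Linearity of $\widehat{\psi}_\nr\mapsto \widehat{\psi}_\fr$ (for fixed $\lambda$ and $\theta$) follows from linearity of $s$ and $(I-\mathcal K)^{-1}$.

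The only subtle point is ensuring that the multiplier $(4\pi^2|\xi|^2 + \lambda^2\theta^2)^{-1}$ does not blow up the $L^1\to L^1$ norm of $\mathcal{K}$; this is handled by restricting to $|\xi|\ge\lambda^r$, where the lower bound $4\pi^2\lambda^{2r}$ combined with the tail integral $\int_{|\xi|\ge\lambda^r}\xi^{-2}d\xi = O(\lambda^{-r})$ produces exactly the $\lambda^{1-r}$ smallness needed. No other step presents a real obstacle.
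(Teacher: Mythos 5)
Your proposal is correct and follows essentially the same route as the paper: you divide the far-frequency equation by the positive multiplier $4\pi^2\xi^2+\lambda^2\theta^2$, obtain the fixed-point/Neumann-series formulation $\widehat{\psi}_\fr = \mathcal K[\widehat{\psi}_\fr]+\mathcal K[\widehat{\psi}_\nr]$, and bound the operator on $L^1$ by the same estimate $\|\mathcal K\|_{L^1\to L^1}\lesssim \lambda^{1-r}\|\widehat V\|_{L^\infty}$ coming from the tail integral $\int_{|\xi|\ge\lambda^r}\xi^{-2}\,d\xi$. Your side remark that the constraint $|\theta|\le\pi\lambda^{r-1}$ is not needed for the lower bound is valid precisely because $\theta$ is taken real so $\theta^2\ge0$; the paper keeps the constraint so that the same statement extends to the case $E^\lambda>0$ (i.e.\ $\theta^2<0$), but this does not change the argument.
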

\begin{proof}
We seek to solve~\eqref{eq:far_nonper} for $\widehat{\psi}_\fr$ as a functional of 
 $\widehat{\psi}_\nr$. 
Since $|\xi|\geq \lambda^r$, with $0<r<1$, and $|\theta|\le\pi\lambda^{r-1}$, we have $\left|4\pi^{2}\xi^2-E^\lambda\right|=\left|4\pi^{2}\xi^2+\lambda^{2}\theta^{2}\right|\ge 3\pi^2\lambda^{2r}$, which is bounded away from zero for any fixed $\lambda>0$.
Dividing~\eqref{eq:far_nonper} by $4\pi^{2}\xi^2-E^\lambda=4\pi^{2}\xi^2+\lambda^{2}\theta^{2}$, we obtain that 
\eqref{eq:far_nonper} is equivalent to the equation:
 \begin{equation}
\left(I+\mathcal{\widehat{T}}_{\lambda}\right)\widehat{\psi}_\fr(\xi)=-\left(\mathcal{\widehat{T}}_{\lambda}\widehat{\psi}_\nr\right)(\xi)\ ,
\label{ieqnfar}\end{equation}
where
\[
 \left(\mathcal{\widehat{T}}_{\lambda}\ \widehat{g}\right)(\xi)\ \equiv
 \int_{\zeta}\mathcal{K}_{\lambda}(\xi,\zeta)\ \widehat{g}(\zeta)\ d\zeta
\qquad \text{ and } \qquad
 \mathcal{K}_{\lambda}(\xi,\zeta)\ \equiv\ \lambda\frac{\overline{\chi}_{\lambda^{r}}(\xi)}{4\pi^{2}\xi^2+\lambda^{2}\theta^{2}}\widehat{V}(\xi-\zeta) \ .
\]

We next show that the integral operator $\mathcal{\widehat{T}}_{\lambda}$, viewed as an operator from $L^1$ to $L^1$ has small norm, for $\lambda$ small. This implies the invertibility of 
 $I+\mathcal{\widehat{T}}_{\lambda}$ and the assertions of Proposition~\ref{prop:far_nonper_intermsof_near_nonper}.

Let $ \widehat g\in L^1$. One has
\[
\big\|\mathcal{\widehat{T}}_{\lambda} \widehat g\big\|_{L^1}
\ \leq \ C\ \lambda \int_{|\xi|\ge\lambda^r}\ \frac{1}{4\pi^{2}\xi^2+\lambda^{2}\theta^{2}} \ d\xi \ \ \big\| \widehat{V}\big\|_{L^\infty}\big\| \widehat g \big\|_{L^1}\ 
\lesssim\ \lambda^{1-r}\ \big\| \widehat{V}\big\|_{L^\infty}\big\| \widehat g\big\|_{L^1}.
\]
Thus  $\mathcal{\widehat{T}}_{\lambda}$ is bounded from $L^1$ to $L^1$ with norm bound:
$ \big\| \mathcal{\widehat{T}}_{\lambda}\big\|_{L^1\to L^1} \ \lesssim\ \lambda^{1- r}\ \big\| \widehat{V}\big\|_{L^\infty}$.

 For $r\in\left(0,1\right)$, $\big\Vert \mathcal{\widehat{T}}_{\lambda}\big\Vert _{L^{1}\rightarrow L^{1}}\to 0$ as $\lambda\to0$. Therefore $I+\mathcal{\widehat{T}}_{\lambda}$ is invertible, for $\lambda $ sufficiently small.
 Moreover, 
 \[
\left\| \widehat{\psi}_\fr\right\|_{L^1}=\left\|\big(I+\mathcal{\widehat{T}}_{\lambda}\big)^{-1}\big(\mathcal{\widehat{T}}_{\lambda}\widehat{\psi}_\nr\big) \right\|_{L^1}\leq\big\| \big(I+\mathcal{\widehat{T}}_{\lambda}\big)^{-1}\big\|_{L^1\to L^1}\big\| \mathcal{\widehat{T}}_{\lambda}\big\|_{L^1\to L^1}\big\| \widehat{\psi}_\nr \big\|_{L^1} ,\]
which implies the bound~\eqref{1stbound}. Proposition~\ref{prop:far_nonper_intermsof_near_nonper} is proved.
\end{proof}

\subsection{Analysis of the near frequency components}\label{sec:near_nonper}

Now that we have constructed $\widehat{\psi}_\fr$ as a functional of $\widehat{\psi}_\nr$ and $\lambda$ (Proposition~\ref{prop:far_nonper_intermsof_near_nonper}), it is possible to treat~\eqref{eq:near_nonper}, for $\lambda$ small, as a {\it closed equation} for a {\it low frequency projected eigenstate}, $\widehat{\psi}_\nr(\xi;\lambda)$, and corresponding eigenvalue $E^\lambda$.
Substitution of $\widehat{\psi}_\fr= \widehat{\psi}_\fr[\widehat{\psi}_\nr,\lambda]$ into~\eqref{eq:near_nonper} yields:
\begin{equation}\label{eq:near_nonper_substituted}
\left(4\pi^{2}|\xi|^2-E^\lambda\right)\widehat{\psi}_\nr(\xi)+\lambda{\vphantom{\overline{\chi}}\chi}_{\lambda^{r}}(\xi)\int_{\zeta}\widehat{V}(\xi-\zeta)\widehat{\psi}_\nr(\zeta)d\zeta +\lambda{\vphantom{\overline{\chi}}\chi}_{\lambda^{r}}(\xi)\widehat R(\xi) = 0,
\end{equation}
where $\widehat R$ is defined by 
\begin{equation}
\label{Rdef}
\widehat R(\xi)\ \equiv\ \int_{\zeta} \widehat{V}(\xi-\zeta)\
 \widehat{\psi}_\fr[\widehat{\psi}_\nr,\lambda](\zeta)\ d\zeta\ .
 \end{equation}
Recall that $ \widehat{\psi}_\fr[\widehat{\psi}_\nr,\lambda]$ is in $L^1$, and of size $\mathcal{O}\left(\lambda^{1-r}\ \|\widehat{\psi}_\nr\|_{L^1}\right)$
 by Proposition~\ref{prop:far_nonper_intermsof_near_nonper}. 
 \

Our next goal is, via appropriate expansion, reorganization and scaling, to re-express~\eqref{eq:near_nonper_substituted} as a simple leading order asymptotic equation plus controllable corrections. 
 The terms in~\eqref{eq:near_nonper_substituted} are supported in the near (low) frequency regime.
 Note that for $|\xi|<\lambda^r$ and $|\zeta|<\lambda^r$ we have $|\xi-\zeta| \leq |\xi| + |\zeta| < 2\lambda^r$. Taylor expansion of $\widehat{V}(\xi-\zeta)$ gives
$
\widehat{V}(\xi-\zeta) = \widehat{V}(0) + (\xi - \zeta) \widehat{V}'(\eta),
$
for some $\eta=\eta(\zeta,\xi)$ such that $|\eta|<2\lambda^r$. Using this expansion in the second term of~\eqref{eq:near_nonper_substituted} yields 
\begin{equation}\label{eq:near_nonper_2}
\left(4\pi^{2}|\xi|^2-E^\lambda\right)\widehat{\psi}_\nr(\xi)+\lambda{\vphantom{\overline{\chi}}\chi}_{\lambda^{r}}(\xi)\widehat{V}(0)\int_{\zeta} \widehat{\psi}_\nr(\zeta)d\zeta =  \lambda{\vphantom{\overline{\chi}}\chi}_{\lambda^{r}}(\xi)\mathcal{R}\left[\widehat{\psi}_\nr;\lambda\right](\xi),
\end{equation}
where $\mathcal{R}\left[\widehat{\psi}_\nr;\lambda\right]\ \equiv\ \mathcal{R}_1\ +\ \mathcal{R}_2$, with
\begin{align*}
&\mathcal{R}_1(\xi)\ \equiv\ -\widehat{R}(\xi) \ = \ - \int_{\zeta} \widehat{V}(\xi-\zeta)\
 \widehat{\psi}_\fr[\widehat{\psi}_\nr,\lambda](\zeta)\ d\zeta \ ,\ \ 
\mathcal{R}_2(\xi)\ \equiv\ -\int_{\zeta} (\xi - \zeta)\widehat{V}'(\eta)  \widehat{\psi}_\nr(\zeta)\ d\zeta\ .
\end{align*}

We now introduce the scaled near-frequency Fourier component, $ \widehat{\Phi}_{\lambda}$, by
\begin{equation}
\widehat{\psi}_\nr(\xi;\lambda)=\frac{1}{\lambda} \widehat{\Phi}_{\lambda}\left(\frac{\xi}{\lambda}\right),
\label{xi-theta-defs}
\end{equation}
Note that
\begin{equation}
\left\| \widehat{\psi}_\nr(\cdot;\lambda) \right\|_{L^1} \ = \ \left\| \frac{1}{\lambda} \widehat{\Phi}_{\lambda}\left(\frac{\cdot}{\lambda}\right) \right\|_{L^1} \ = \ \left\| \widehat{\Phi}_{\lambda}\right\|_{L^1} . 
\label{scaleL1}
\end{equation}
We also denote $E^\lambda=-\lambda^2\theta^2$, and restrict to $\theta=\theta(\lambda)$ satisfying the constraint in the hypotheses of Proposition~\ref{prop:far_nonper_intermsof_near_nonper}. 
Substitution of~\eqref{xi-theta-defs} into~\eqref{eq:near_nonper_2}, defining $\xi' = \lambda \xi$
and dividing by $\lambda$ yields the following rescaled near-frequency equation:
\begin{equation} \label{eq:near_nonper_rescaled}
\left(4\pi^{2}|\xi'|^2 + \theta^2\right) \widehat{\Phi}_{\lambda}(\xi')+{\vphantom{\overline{\chi}}\chi}_{\lambda^{r-1}}(\xi')\widehat{V}(0) \int_{\zeta'} \widehat{\Phi}_{\lambda}(\zeta')d{\zeta'} = {\vphantom{\overline{\chi}}\chi}_{\lambda^{r-1}}(\xi') \mathcal{R}'\big(\widehat{\Phi}_{\lambda}\,\big)(\xi')
\end{equation}
where $\mathcal{R}'\big(\widehat{\Phi}_{\lambda}\,\big)(\xi')\equiv \mathcal{R}\left[\widehat{\psi}_\nr;\lambda\right](\lambda\xi')\equiv\mathcal{R}_1'(\xi')+\mathcal{R}_2'(\xi')$, with
\begin{align}
&\mathcal{R}_1'(\xi')\ \equiv\ - \int_{\zeta} \widehat{V}(\lambda\xi'-\zeta)\
 \widehat{\psi}_\fr[\widehat{\psi}_\nr,\lambda](\zeta)\ d\zeta \ , \label{R1'}\\
&\mathcal{R}_2'(\xi')\ \equiv\ -\int_{\zeta} (\lambda\xi' - \zeta)\widehat{V}'(\eta)  \widehat{\psi}_\nr(\zeta)\ d\zeta\ \ = \ -\lambda \int_{\zeta} (\xi' - \zeta')  \widehat{V}'(\eta)\widehat{\Phi}_{\lambda}(\zeta')d\zeta' \ .
\label{R2'}\end{align}

Equation~\eqref{eq:near_nonper_rescaled} is in the form of the class of equations 
to which Lemma~\ref{lem:technical} applies. We shall use Lemma~\ref{lem:technical} to obtain a non-trivial eigenpair solution $(\ \widehat\Phi_\lambda,\theta(\lambda)\ )$ of~\eqref{eq:near_nonper_rescaled}. Toward verification of the hypotheses of Lemma~\ref{lem:technical}, we next bound the right hand side of~\eqref{eq:near_nonper_rescaled}.

\begin{Proposition} \label{prop:R_nonper_bound}
Let $V$ be such that 
$\big\Vert\widehat{V}\big\Vert_{W^{1,\infty}}  \ \equiv \ \big\Vert\widehat{V}\big\Vert_{L^{\infty}} + \big\Vert\widehat{V}'\big\Vert_{L^{\infty}} < \infty\ .$ 
Then, the right hand side of the rescaled near-frequency equation~\eqref{eq:near_nonper_rescaled} satisfies the bound
\begin{equation}
\left\|{\vphantom{\overline{\chi}}\chi}_{\lambda^{r-1}}(\xi)\mathcal{R}'\big(\widehat{\Phi}_{\lambda}\,\big)\right\|_{L^{\infty}}\ \leq\ 
 C\big(\big\Vert\widehat{V}\big\Vert_{W^{1,\infty}}\big)\ \left( \lambda^{1-r} + \lambda^r \right) \ \left\|\widehat{\Phi}_{\lambda}\right\|_{L^1}.
\end{equation}
\end{Proposition}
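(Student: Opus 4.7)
The plan is to bound $\mathcal{R}_1'$ and $\mathcal{R}_2'$ separately, using very direct estimates. Both terms are of convolution/integral type, so the natural approach is to pull the $L^\infty$ norm of $\widehat{V}$ (or $\widehat{V}'$) outside the integral, and control the remaining integral by the $L^1$ norm of an appropriate function. The key observation is that after rescaling, the relevant supports live in $|\xi'|, |\zeta'| < \lambda^{r-1}$, which will produce the factors $\lambda^{1-r}$ and $\lambda^r$ in the final bound.

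For $\mathcal{R}_1'$, I would estimate
\[
|\mathcal{R}_1'(\xi')| \ \leq \ \|\widehat{V}\|_{L^\infty} \left\| \widehat{\psi}_{\fr}[\widehat{\psi}_\nr,\lambda] \right\|_{L^1}
\]
and then invoke Proposition~\ref{prop:far_nonper_intermsof_near_nonper} to bound $\|\widehat{\psi}_\fr\|_{L^1} \lesssim \lambda^{1-r} \|\widehat{\psi}_\nr\|_{L^1}$. Combined with the scaling identity~\eqref{scaleL1}, $\|\widehat{\psi}_\nr\|_{L^1} = \|\widehat{\Phi}_\lambda\|_{L^1}$, this yields $\|\mathcal{R}_1'\|_{L^\infty} \lesssim \lambda^{1-r} \|\widehat{V}\|_{L^\infty} \|\widehat{\Phi}_\lambda\|_{L^1}$, which gives the first term in the bound.

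For $\mathcal{R}_2'$, I would first note that $\widehat{\Phi}_\lambda(\zeta')$ is supported in $|\zeta'|<\lambda^{r-1}$ (because $\widehat{\psi}_\nr(\zeta)$ is supported in $|\zeta|<\lambda^r$ and $\zeta = \lambda\zeta'$). Restricting to $|\xi'|<\lambda^{r-1}$ via the cutoff ${\vphantom{\overline{\chi}}\chi}_{\lambda^{r-1}}(\xi')$, one has $|\xi'-\zeta'|\leq 2\lambda^{r-1}$ on the region of integration. Pulling $\widehat{V}'(\eta)$ out in $L^\infty$ then gives
\[
|{\vphantom{\overline{\chi}}\chi}_{\lambda^{r-1}}(\xi')\mathcal{R}_2'(\xi')| \ \leq \ \lambda \cdot 2\lambda^{r-1} \cdot \|\widehat{V}'\|_{L^\infty} \cdot \|\widehat{\Phi}_\lambda\|_{L^1} \ = \ 2\lambda^r \|\widehat{V}'\|_{L^\infty} \|\widehat{\Phi}_\lambda\|_{L^1},
\]
which is the second term in the claimed bound.

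There is no real obstacle here; the estimate is essentially bookkeeping once the correct supports are identified. The main thing to be careful about is to track the effect of the rescaling $\xi = \lambda\xi'$: it converts the near-frequency support $|\xi|<\lambda^r$ into $|\xi'|<\lambda^{r-1}$, which is large as $\lambda\downarrow 0$, but is compensated by the $\lambda$ factor in front of $\mathcal{R}_2'$ to produce $\lambda^r$. Summing the two contributions and taking $C = 2\|\widehat{V}\|_{W^{1,\infty}}$ (absorbing constants) yields the stated estimate.
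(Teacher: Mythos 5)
Your proposal is correct and matches the paper's proof essentially line for line: both estimate $\mathcal{R}_1'$ via the $L^\infty$ bound on $\widehat{V}$ together with Proposition~\ref{prop:far_nonper_intermsof_near_nonper} and the scaling identity~\eqref{scaleL1}, and both estimate $\mathcal{R}_2'$ by pulling out $\|\widehat{V}'\|_{L^\infty}$ and using the support constraint $|\xi'-\zeta'|\leq 2\lambda^{r-1}$ to extract the factor $\lambda^r$.
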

\begin{proof}
We proceed by estimating each term individually. 

\noindent{\em Estimation of $\mathcal{R}'_1(\xi')$, given by~\eqref{R1'}:} 
By Proposition~\ref{prop:far_nonper_intermsof_near_nonper}, one has
\begin{align} 
&\big\| \widehat{\psi}_\fr[\widehat{\psi}_\nr,\lambda] \big\|_{L^1(\mathbb{R})}
\leq \ C\big(\|\widehat V\|_{L^\infty}\big)\ \lambda^{1-r}\ \big\|\widehat{\psi}_\nr\big\|_{L^1(\mathbb{R})}\ .\label{error-bound1}
\end{align}
Plugging~\eqref{error-bound1} into~\eqref{R1'}, and making use of~\eqref{scaleL1}, we have 
\begin{align*}
\left\| \mathcal{R}'_1 \right\|_{L^\infty}\ &= \ 
\left\|  \int_\RR\widehat{V}(\lambda\xi'- \zeta) \widehat{\psi}_\fr[\widehat{\psi}_\nr,\lambda](\zeta)\ d\zeta \right\|_{L^\infty_{\xi'}} \\
&
\leq \ \big\|\widehat V\big\|_{L^\infty}\ \big\|\ \widehat{\psi}_\fr[\widehat{\psi}_\nr,\lambda]\ \big\|_{L^1} \
 \leq \ C(\|\widehat V\|_{L^\infty})\ \lambda^{1-r}\ \big\|\widehat{\Phi}_{\lambda} \big\|_{L^1}\ .
\end{align*}

\noindent {\em Estimation of $\mathcal{R}'_2(\xi')$, given by~\eqref{R2'}:}\
We have the bound
\[
\big\|  {\vphantom{\overline{\chi}}\chi}_{\lambda^{r-1}}(\xi')\mathcal{R}'_2 \big\|_{L^\infty} =  \big\| {\vphantom{\overline{\chi}}\chi}_{\lambda^{r-1}}(\xi')\int_{\zeta'}\lambda(\xi'-\zeta') \widehat{V}'(\eta)\widehat{\Phi}_{\lambda}(\zeta')d\zeta'\big\|_{L^\infty_{\xi'}}\ \leq\ 2 \lambda^r \big\|\widehat V'\big\|_{L^\infty} \big\| \widehat{\Phi}_{\lambda}\big\|_{L^1},
\]
using that $\widehat{\Phi}_{\lambda}(\zeta')= {\vphantom{\overline{\chi}}\chi}_{\lambda^{r-1}}(\zeta')\widehat{\Phi}_{\lambda}(\zeta')$, so that $|\xi'-\zeta'|\leq 2\lambda^{r-1}$.
 Proposition~\ref{prop:R_nonper_bound} is proved.
\end{proof}

\begin{Remark}
We expect that by using a higher order Taylor approximation of $\widehat{V}(\xi-\zeta)$ in the second term of equation~\eqref{eq:near_nonper_substituted}, it should be possible to obtain a variant of Proposition~\ref{prop:R_nonper_bound} with a bound which is higher order in $\lambda$. This would require a higher order  variant of Lemma~\ref{lem:technical}.
\end{Remark}

\subsection{Completion of the proof}
We now prove Theorem~\ref{thm:Qzero} by an application of Lemma~\ref{lem:technical} to equation~\eqref{eq:near_nonper_rescaled}, using the remainder estimate of Proposition~\ref{prop:R_nonper_bound}.

\begin{proof}[Proof of Theorem~\ref{thm:Qzero}] 
We construct $\psi^\lambda$, solution to~\eqref{eq:FT_orig_eqn_nonper} as
$ \widehat{\psi^\lambda} \ = \ \widehat{\psi}_\fr \ + \ \widehat{\psi}_\nr, $
where $\widehat{\psi}_\fr,\widehat{\psi}_\nr$ satisfy~\eqref{eq:near_nonper}--\eqref{eq:far_nonper}. The far-frequency component, $\widehat{\psi}_\fr$, is uniquely determined by $\widehat{\psi}_\nr$ and $\lambda$ sufficiently small; see Proposition~\ref{prop:far_nonper_intermsof_near_nonper}. Now set $\widehat{\psi}_\nr(\xi) \ \equiv \ \frac{1}{\lambda} \widehat{\Phi}_{\lambda}\left(\frac{\xi}{\lambda}\right)$. Since $\widehat{V}\in W^{1,\infty}$, Proposition~\ref{prop:R_nonper_bound} implies that the rescaled near-frequency equation~\eqref{eq:near_nonper_rescaled} can be written as
\begin{equation}\label{eq:near_nonper_rescaled_2}
\left(4\pi^{2}|\xi'|^2 + \theta^2\right) \widehat{\Phi}_{\lambda}(\xi')+{\vphantom{\overline{\chi}}\chi}_{\lambda^{r-1}}(\xi')\widehat{V}(0) \int_{\zeta'} \widehat{\Phi}_{\lambda}(\zeta')d{\zeta'} = {\vphantom{\overline{\chi}}\chi}_{\lambda^{r-1}}(\xi')\mathcal R(\widehat{\Phi}_{\lambda})(\xi'),
\end{equation}
with $
\left\| \mathcal R(u) \right\|_{L^{\infty}} \leq \ C \ \lambda^\alpha\ \| u \|_{L^1},
$
where $\alpha=\min(1-r,r)$ and $
C= C(\|\widehat{V}\|_{W^{1,\infty}})$. From now on, we set \[r=1/2=\alpha\]
 as this yields optimal estimates.
 Applying Lemma~\ref{lem:technical} to~\eqref{eq:near_nonper_rescaled_2} with 
$A = 1$, $-B = \widehat{V}(0)=\int_{\mathbb R}V$ (assumed to be negative),
 we deduce that there exists a solution $\left( \theta^2, \widehat{\Phi}_{\lambda} \right)$ of the rescaled near-frequency equation~\eqref{eq:near_nonper_rescaled_2}, satisfying
\begin{equation}\label{eq:bounds_rescaled_near}
\|\widehat{\Phi}_{\lambda} - \widehat{f}_0\|_{L^1}\ \leq\ C\ \lambda^{\frac12} \qquad \text{ and } \qquad |\theta^2 - \theta_0^2|\ \leq\ C\ \lambda^{\frac12}\ . 
\end{equation} 
Here $\left( \theta_0^2(\lambda), \widehat{f}_0 \right)$ is the unique (normalized) solution of the homogeneous equation 
\[
\widehat{\mathcal{L}} _{0,\lambda}\left( \theta_0,\widehat{f}_0 \right) = (4\pi^2 \xi^2 + \theta^2) \widehat{f}_0 + \chi\left(|\xi|<\lambda^{-\frac{1}{2}}\right) \widehat{V}(0) \int_{\mathbb R}\chi\left(|\eta|<\lambda^{-\frac{1}{2}}\right)\widehat{f}_0(\eta)d\eta = 0,
\]
as described in Lemma~\ref{lem:homogeneous}. 
Thus $\widehat{\psi}_\nr(\xi) \ = \ \frac{1}{\lambda} \widehat{\Phi}_{\lambda}\left(\frac{\xi}{\lambda}\right)$ and $E^\lambda\ =\ -\lambda^2\theta^2(\lambda)$ are well-defined. 

In conclusion, the eigenpair solution to~\eqref{eq:FT_orig_eqn_nonper} ({\em i.e.}~\eqref{eq:nonper}) , $(E^\lambda,\psi^\lambda)$, is uniquely determined by 
\[ E^\lambda\ \equiv\ -\lambda^2\theta^2(\lambda),\quad \text{ and } \quad \psi^\lambda\equiv\mathcal F^{-1}(\widehat{\psi}_\nr+\widehat{\psi}_\fr).\]

Estimate~\eqref{Elambda}, the small $\lambda$ expansion of the eigenvalue $E^\lambda$, follows from~\eqref{eq:bounds_rescaled_near}. The approximation,~\eqref{psi-lambda}, of the corresponding eigenstate, $\psi^\lambda=\psi_\nr+\psi_\fr$, is obtained as follows. First, by~\eqref{eq:bounds_rescaled_near} we have
 \begin{equation}\label{est:A}
 \left\| \widehat{\psi}_\nr(\eta) \ - \ \lambda \dfrac{{\vphantom{\overline{\chi}}\chi}_{\lambda^{1/2}}(\eta)}{4\pi^2|\eta|^2+\lambda^2\theta_0^2} \right\|_{L^1} \ = \ \left\|\widehat{\Phi}_{\lambda} - \widehat{f}_0 \right\|_{L^1}\ \lesssim\ \lambda^{1/2}. 
 \end{equation}
The high frequency components are small, as is seen from the following calculation:
 \begin{equation}\label{est:B}
 \left\|\lambda\dfrac{\overline{\chi}_{\lambda^{1/2}}(\eta) }{4\pi^2|\eta|^2+\lambda^2\widehat V(0)^2} \right\|_{L^1}\ \leq\ \lambda \int_{|\eta|\geq \lambda^{1/2}} \dfrac{d\eta }{4\pi^2|\eta|^2} 
\lesssim\  \lambda^{1/2}.
 \end{equation}
 Finally, from Proposition~\ref{prop:far_nonper_intermsof_near_nonper}, one has (with $r=1/2$) 
 \begin{equation}\label{est:C}
 \big\| \widehat{\psi}_\fr\big\|_{L^1}\ \leq\ 
C\ (\|\widehat V\|_{L^\infty})\ \lambda^{1/2}\ \| \widehat{\psi}_\nr\|_{L^1}, 
 \end{equation}
 and $\big \| \widehat{\psi}_\nr\big\|_{L^1}= \big\|\widehat{\Phi}_{\lambda} \big\|_{L^1} \to\big\| \widehat{f}_0 \big\|_{L^1}$ (as $\lambda\to0$). 
Altogether,~\eqref{est:A},~\eqref{est:B} and~\eqref{est:C} yield
\[
\left\| \psi^\lambda \ - \ \mathcal F^{-1}\left\{\lambda \dfrac{1}{4\pi^2|\cdot|^2+\lambda^2\theta_0^2}\right\} \right\|_{L^\infty}\ \leq \ \left\| \widehat{\psi^\lambda} \ - \ \lambda \dfrac{1}{4\pi^2|\cdot|^2+\lambda^2\theta_0^2} \right\|_{L^1}\ \lesssim \ \lambda^{1/2}. 
\]
Note, by residue computation, that
$\mathcal F^{-1}\left\{ (4\pi^2|\cdot|^2+\lambda^2\theta_0^2)^{-1}\right\}
  =\ \frac12(\lambda\theta_0)^{-1}\exp(-\lambda\theta_0|x|)$,
with $\theta_0=-\frac12\int_\RR V>0$ .
Thus estimate~\eqref{psi-lambda} holds. This completes the proof of Theorem~\ref{thm:Qzero}.
\end{proof}

\section{Proof of Th'm~\ref{thm:per_result}; Edge bifurcations of $-\partial_x^2+Q+\lambda V$}\label{sec:per}

Let $Q(x)$ denote a  non-trivial, continuous, $1-$periodic function, $Q(x+1)=Q(x)$.
In this section we study the bifurcation of solutions to the eigenvalue problem
\begin{equation} \label{eq:orig_eqn_per}
\left(-\partial_x^2 + Q(x) + \lambda V(x) \right) \psi^\lambda(x) = E^\lambda \psi^\lambda(x),\ \ \psi\in L^2(\mathbb R) 
\end{equation}
 into the spectral gaps of $-\partial_x^2+Q(x)$.
We proceed by the same general approach of Section~\ref{sec:nonper}. That is,
 by appropriate spectral localization, in this case by applying the Gelfand-Bloch transform, we reduce~\eqref{eq:orig_eqn_per} to an equivalent {\it near-frequency} eigenvalue problem supported on frequencies lying near a spectral band edge of $-\partial_x^2 + Q(x)$.

\subsection{Near and far frequency components}\label{sec:far_near_per}

We take the Gelfand-Bloch transform of~\eqref{eq:orig_eqn_per} and get
\begin{equation}\label{eq:GB_of_orig}
-\left(\partial_x + 2\pi ik\right)^2 \widetilde{\psi^\lambda}(x;k) + Q(x) \widetilde{\psi^\lambda}(x;k) + \lambda \left(V\psi^\lambda\right)^{\sim}(x;k) = E^\lambda \widetilde{\psi^\lambda}(x;k),
\end{equation}
where
\[
\left(V\psi^\lambda\right)^{\sim}(x;k) = \sum_{n\in\mathbb{Z}} e^{2\pi i nx} \left(V\psi^\lambda\right)^{\wedge}(k+n) = \sum_{n\in\mathbb{Z}} e^{2\pi i nx} \left(\widehat{V}\star\widehat{\psi^\lambda}\right)(k+n).
\]
Here, the quasi-momentum, $k$, varies over the interval $(-1/2, 1/2]$.

As in Section~\ref{sec:nonper}, we express $\psi$ in terms of its near- and far-frequency components around a band edge $E_{b_*}(k_*)$, for fixed $b_*$ and $k_*$:
\begin{equation}\label{eq:psi_decomp}
\psi^\lambda = \psi_\nr + \psi_\fr = \mathcal{T}^{-1}\left\{\widetilde{\psi}_\nr(k)p_{b_*}(x;k)\right\} + \mathcal{T}^{-1}\left\{\sum_{b=0}^\infty\widetilde{\psi}_{\fr,b}(k)p_b(x;k)\right\} ,
\end{equation}
where we define, for $b=0,1,\dots$\ :
 \begin{align*}
 \widetilde{\psi}_\nr(k) &\equiv \chi\left(|k-k_*|<\lambda^r\right) \mathcal{T}_{b_*}\{\psi^\lambda\}(k)= \chi\left(|k-k_*|<\lambda^r\right) \left\langle p_{b_*}(\cdot,k), \widetilde{\psi^\lambda}(\cdot,k) \right\rangle_{L^2([0,1])} ,\\
 \widetilde{\psi}_{\fr,b}(k) &\equiv \chi\left(|k-k_*| \geq \lambda^r\delta_{b_*,b}\right) \mathcal{T}_{b}\{\psi^\lambda\}(k)=\chi\left(|k-k_*| \geq \lambda^r\delta_{b_*,b}\right) \left\langle p_{b}(\cdot,k), \widetilde{\psi^\lambda}(\cdot,k) \right\rangle_{L^2([0,1])},
\end{align*}
where $\delta_{i,j}$ denotes Kronecker's delta function and $r$ a parameter chosen to satisfy $r>0$.
 Equivalently, one has
 \[ \psi^\lambda(x) \ = \ \int_{-1/2}^{1/2} \left( \widetilde{\psi}_\nr(k)u_{b_*}(x;k)+\sum_{b=0}^\infty \widetilde{\psi}_{\fr,b}(k)u_b(x;k) \right) \ dk.\]
Recall that $\{p_b(x;k)\}_{b \geq 0}$  form a complete orthonormal set in $L_{\rm per}^2([0,1])$, and satisfy 
\begin{align}
&\left( -\left(\partial_x + 2\pi ik\right)^2 + Q(x) \right) p_b(x;k) = E_b(k)p_b(x;k), \quad x\in[0,1],
p_b(x+1;k)\ =\ p_b(x;k)\ .\label{bloch-transf-evp}
\end{align}

Therefore, taking the inner product of~\eqref{eq:GB_of_orig} with $p_b(x;k)$, and using self-adjointness of the operator $-\left(\partial_x + 2\pi ik\right)^2 + Q$ as well as the identity~\eqref{bloch-transf-evp}, yields
\begin{equation}\label{eq:kpsi}
\left(E_b(k) - E^\lambda \right) \left\langle p_{b}(\cdot,k), \widetilde{\psi^\lambda}(\cdot,k) \right\rangle_{L^2([0,1])} + \lambda\left\langle p_{b}(\cdot,k), \left(V\psi^\lambda\right)^{\sim}(\cdot,k) \right\rangle_{L^2([0,1])} = 0.
\end{equation}
or equivalently, using notation~\eqref{def:Tb},
\begin{equation}\label{eq:kpsi_prime}
\left(E_b(k) - E^\lambda \right) \mathcal{T}_b\left\{\psi^\lambda\right\}(k) + \lambda \mathcal{T}_b\left\{V\psi^\lambda\right\}(k) = 0.
\end{equation}

We can now decompose equation~\eqref{eq:kpsi} into near- and far-frequency equations, around $E_{b_*}(k_*)$, the edge of the $b_*$-th band of the continuous spectrum.
The coupled equations for $\psi_\nr$ and $\psi_\fr$ read:
\begin{align}\label{eq:near_per}
&\left( E_{b_*}(k) - E^\lambda \right) \chi\left(|k|<\lambda^r\right) \left\langle p_{b_*}(\cdot,k), \widetilde{\psi^\lambda}(\cdot,k) \right\rangle_{L^2([0,1])}\\
 &\qquad\qquad \ + \lambda \chi\left(|k|<\lambda^r\right) \big\langle p_{b_*}(\cdot,k),\left[V\left(\psi_\nr + \psi_\fr \right)\right]^{\sim}(\cdot,k) \big\rangle_{L^2([0,1])}=0,
\nn\end{align}
and for $b\in\NN$:
\begin{align}\label{eq:far_per}
&\left( E_{b}(k) - E^\lambda \right) \chi\left(|k| \geq \lambda^r\delta_{b_*,b}\right) \left\langle p_{b}(\cdot,k), \widetilde{\psi^\lambda}(\cdot,k) \right\rangle_{L^2([0,1])}\\
&\qquad\qquad\ \ +\ \lambda \chi\left(|k| \geq \lambda^r\delta_{b_*,b}\right) \big\langle p_{b}(\cdot,k),\left[V\left(\psi_\nr + \psi_\fr \right)\right]^{\sim}(\cdot,k) \big\rangle_{L^2([0,1])}=0.
\nn\end{align}

Equivalently, we write the near and far frequency equations in the form
\begin{equation}\label{eq:near_per_prime}
\left( E_{b_*}(k) - E^\lambda \right) \widetilde{\psi}_\nr(k) + \lambda \chi\left(|k|<\lambda^r\right)\left(\mathcal{T}_{b_*}\left\{V\psi_\nr\right\}(k) + \mathcal{T}_{b_*}\left\{V\psi_\fr\right\}(k) \right) = 0,
\end{equation}
\begin{equation}\label{eq:far_per_prime}
\left( E_b(k) - E^\lambda \right) \widetilde{\psi}_{\fr,b}(k) + \lambda \chi\left(|k|\geq\lambda^r\delta_{b_*,b}\right) \left(\mathcal{T}_b\left\{V\psi_\nr\right\}(k) + \mathcal{T}_b\left\{V\psi_\fr\right\}(k) \right) = 0.
\end{equation}
Equations~\eqref{eq:near_per_prime} and~\eqref{eq:far_per_prime} are, for the case of non-trivial periodic potentials, $Q(x)$, the analogues of~\eqref{eq:near_nonper}-\eqref{eq:far_nonper}.

\subsection{Analysis of the far frequency Floquet-Bloch components}

In this section we study the far frequency equation~\eqref{eq:far_per_prime}. We will show that we can write it in terms of the near frequency solution and will determine a bound on the far solution in terms of the near solution. The next result is therefore the analogue of Proposition~\ref{prop:far_nonper_intermsof_near_nonper} and facilitates the reduction of the eigenvalue problem to a closed equation for the near-frequency components of the eigenstate.

\noindent {\em For clarity of presentation and without any loss of generality, we assume henceforth that we are localizing near the lowermost end point of the $b_*$-th band and that $k_*=0$. Thus, by Lemma~\ref{lem:band-edge},
\[ b_*\ \ \textrm{is even},\ \ k_* = 0,\ \ {\rm with}\ \ E_{b_*}(0)=E_*.\]
N.B. For $k_*=0$, note that $p_b(x;k_*)=u_b(x;k_*)$ and we use these expressions interchangeably. For $k_*=1/2$ one has to distinguish between $p_b(x;k_*)$ and $u_b(x;k_*)$.
}

\begin{Proposition}\label{prop:far_per_intermsof_near_per}
Assume $b_*$ is even and consider $E_*=E_{b_*}(0)$ the lowermost edge of the $b_*$-th band. There exists $\lambda_0>0$, such that for $0<\lambda<\lambda_0$, the following holds. Set
\begin{equation}
E^\lambda = E_* - \lambda^2 \theta^2,\ \ \ \theta \leq \lambda^{r-1}\frac{1}{2}| \partial_k^2 E_{b_*}(0)|^{1/2} \ , \qquad 0<r<\frac{1}{2}\ .\label{far-per-E}
\end{equation}
 Then for any $\psi_\nr\in L^2(\mathbb{R})$, there is a unique solution $\psi_\fr\left[\psi_\nr,\lambda\right]\in L^2(\mathbb{R})$ of the
far-frequency system~\eqref{eq:far_per_prime}. The mapping
$\left(\psi_\nr;\lambda\right) \mapsto
\psi_\fr$
 maps
$ L^2(\mathbb{R})\times(0,\lambda_0)$ to $H^2(\mathbb{R})$ and $\psi_\fr$ satisfies the bound

\begin{equation}\label{eq:far_per_bound}
\left\| \psi_\fr\left[ \psi_\nr;\lambda \right] \right\|_{H^2(\mathbb{R})} \
\leq \ C\left( \|V\|_{L^{\infty}} \right) \lambda^{1-2r} \left\| \psi_\nr \right\|_{L^2(\mathbb{R})}.
\end{equation}
\end{Proposition}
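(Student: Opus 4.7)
The strategy mirrors Proposition \ref{prop:far_nonper_intermsof_near_nonper}, translated to the Floquet-Bloch setting. I would first rewrite the far-frequency system \eqref{eq:far_per_prime} by dividing through by $E_b(k) - E^\lambda$, yielding the fixed-point equation
\begin{equation*}
(I + \mathcal{K}_\lambda)\,\psi_\fr \;=\; -\,\mathcal{K}_\lambda \psi_\nr,
\end{equation*}
where $\mathcal{K}_\lambda$ is defined via
\begin{equation*}
\mathcal{T}_b\{\mathcal{K}_\lambda \phi\}(k) \;\equiv\; \frac{\lambda\,\chi(|k|\ge\lambda^{r}\delta_{b_*,b})}{E_b(k)-E^\lambda}\;\mathcal{T}_b\{V\phi\}(k).
\end{equation*}
Existence, uniqueness, and the bound \eqref{eq:far_per_bound} will follow from invertibility of $I+\mathcal{K}_\lambda$ on $L^2(\mathbb R)$ via Neumann series, combined with a gain-of-regularity estimate showing that $\mathcal{K}_\lambda : L^2 \to H^2$ has operator norm of size $O(\lambda^{1-2r})$.

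The core ingredient is a uniform lower bound on $|E_b(k)-E^\lambda|$ on the support of the respective cutoff. For $b=b_*$, Lemma \ref{lem:band-edge} gives $\partial_k E_{b_*}(0)=0$ and $\partial_k^2 E_{b_*}(0)>0$, so a Taylor expansion yields $E_{b_*}(k)-E_\star\ge \tfrac14\,\partial_k^2 E_{b_*}(0)\,k^2$ in a neighborhood of $0$; together with the fact that $k_*=0$ is the unique minimum of $E_{b_*}$ on $(-1/2,1/2]$ (a consequence of Lemma \ref{lem:band-edge} and the analyticity of Lemma \ref{lem:regularity-of-Eb}), one obtains $E_{b_*}(k)-E_\star\gtrsim k^2$ uniformly. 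Combined with $|k|\ge\lambda^r$ and the bound on $\theta$ in \eqref{far-per-E}, this gives $|E_{b_*}(k)-E^\lambda|\gtrsim \lambda^{2r}$. For $b\ne b_*$, the edge-bordering-a-gap hypothesis together with the disjointness of Floquet-Bloch bands in one dimension provides a constant $\delta>0$ such that $|E_b(k)-E_\star|\ge \delta$ for $k\in(-1/2,1/2]$; hence $|E_b(k)-E^\lambda|\ge \delta/2$ for $\lambda$ small. Finally, for large $b$, the Weyl asymptotics $E_b(k)\sim b^2$ (invoked in the proof of Proposition \ref{prop:norm_equivalence}) ensure that $(1+b^2)^2/|E_b(k)-E^\lambda|^2$ remains $O(1)$.

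To derive the operator bound, I would invoke the norm equivalence $\|\phi\|_{H^2}\sim\|\widetilde\phi\|_{\mathcal{X}^2}$ and estimate the pointwise multiplier
\begin{equation*}
(1+b^2)\,\frac{\lambda\,\chi(|k|\ge\lambda^r\delta_{b_*,b})}{|E_b(k)-E^\lambda|}\;\lesssim\;\lambda^{1-2r},
\end{equation*}
the worst case being $b=b_*$ (giving $\lambda\cdot\lambda^{-2r}$) while the remaining bands contribute $O(\lambda)$ by the estimates above. Since multiplication by $V\in L^\infty$ is bounded on $L^2$, it follows that $\|\mathcal{K}_\lambda\phi\|_{H^2}\lesssim \lambda^{1-2r}\|V\|_{L^\infty}\|\phi\|_{L^2}$. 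In particular $\|\mathcal{K}_\lambda\|_{L^2\to L^2}\to 0$, so $I+\mathcal{K}_\lambda$ is invertible on $L^2$ producing a unique $\psi_\fr\in L^2$; reinserting into the equation shows $\psi_\fr\in H^2$, and \eqref{eq:far_per_bound} follows by absorbing $\|\mathcal{K}_\lambda\psi_\fr\|_{H^2}\le\tfrac12\|\psi_\fr\|_{H^2}$ into the left-hand side.

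The main obstacle is the delicate balance in the $b=b_*$ band: the denominator is only of size $\lambda^{2r}$ on $|k|\ge\lambda^r$, much smaller than the $O(1)$ lower bound available for the other bands. Working in the $H^2$ target space is essential, both so that the Weyl factor $b^2$ absorbs the polynomial weight $(1+b^2)^2$ in the $\mathcal{X}^2$ norm, and so that $\psi_\fr$ carries enough regularity to re-enter the near-frequency equation in the subsequent analysis. The constraint $r<1/2$ is precisely what makes $\lambda^{1-2r}\to 0$.
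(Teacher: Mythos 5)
Your proof is correct and follows essentially the same approach as the paper: the same lower bounds $|E_{b_*}(k)-E^\lambda|\gtrsim\lambda^{2r}$ on $|k|\ge\lambda^r$ (via Taylor expansion at $k_*=0$ plus compactness away from it) and $|E_b(k)-E^\lambda|\gtrsim 1$ for $b\ne b_*$, the same use of Weyl asymptotics and the $\mathcal{X}^2$--$H^2$ norm equivalence to absorb the $(1+b^2)^2$ weight, and the same self-improvement from $L^2$ to $H^2$ by absorbing $\|\psi_\fr\|$ into the left-hand side. The only cosmetic difference is that you invert $I+\mathcal{K}_\lambda$ by Neumann series (matching the paper's own treatment of the $Q\equiv 0$ case in Proposition \ref{prop:far_nonper_intermsof_near_nonper}), whereas the paper invokes the Implicit Function Theorem for this step in the periodic case; since the equation is linear in $\psi_\fr$, the two are interchangeable and yours is arguably the cleaner packaging.
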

\begin{Remark}
 Recall that we have assumed $(1+|x|)V(x)\in L^1(\RR)$ and $V\in L^\infty$. It is in the proof of the bound 
~\eqref{eq:far_per_bound} that we have used  $V\in L^\infty$. We believe it possible to work under the milder assumption $(1+|x|)V(x)\in L^1(\RR)$. In this case, we would bound $\psi_\fr$ in $H^1(\RR)$ and the analysis that would follow would be a bit more technical. We leave this an exercise.
\end{Remark}
\begin{proof}
We begin by showing
that there exists $\lambda_0>0$ such that for all $0<\lambda<\lambda_0$, there is a constant $C_1>0$ such that 
\begin{align}
\left| E_{b_*}(k) - E_* \right|\ &\ge\ C_1 \lambda^{2r},\ \ \lambda^r \le |k|\le 1/2 \label{Eb*-E} \ ,\\
\left| E_b(k) - E_* \right| & \ge\ C_1, \qquad b \neq b_*,\ \ |k|\le1/2\ . \label{Enotb*-E}
\end{align}
Note first that~\eqref{Enotb*-E} is an immediate consequence of $E_*$ being the endpoint of the $(b_*)^{th}$ spectral gap. 
 To prove~\eqref{Eb*-E} recall, by Lemma~\ref{lem:band-edge} that
  $E_*=E_{b_*}(0)$, an eigenvalue at the edge of a spectral gap, is simple, and 
 $k\mapsto E_{b_*}(k)-E_*$ is continuous. 
Therefore, for any $\lambda_0$, such that $0<\lambda_0<1/2$
\begin{equation}
\min_{\lambda_0\le |k|\le 1/2} |E_{b_*}(k)-E_*|\ge C(\lambda_0)>0.
\label{cont-imp}
\end{equation}
For $|k|\le\lambda_0$, we approximate $E_{b_*}(k)$ by a Taylor expansion. 
 In particular, since $E_{b_*}(k)$ is smooth for $k$ near $k_*=0$, $\partial_k E_{b_*}(0)=0$ and $\partial^2_k E_{b_*}(0)\ne 0$, we have $E_{b_*}(k)-E_{b_*}(0) - \frac{1}{2}\partial^2_k E_{b_*}(0)\ k^2 = \mathcal{O}(|k|^3).$ Therefore, we can choose $\lambda_0>0$ sufficiently small so that for all $\lambda\le\lambda_0$ we have 
\begin{equation}
|E_{b_*}(k)-E_{b_*}(0)|\ge \frac{1}{3}\left|\partial^2_k E_{b_*}(0)\right|\lambda^{2r},\ \ 
\textrm{for all $ \lambda\le|k|\le\lambda_0$}\ .
\label{ts-imp}
\end{equation}
 It follows from~\eqref{cont-imp} and~\eqref{ts-imp} that for sufficiently small $\lambda_0>0$, 
\[
\frac{1}{2}\ge |k|\ge\lambda>0\ \ \implies\ \ |E_{b_*}(k)-E_*|\ge \min\left\{\frac{1}{3}\left|\partial^2_k E_{b_*}(0)\right|\lambda^{2r},C(\lambda_0)\right\}\ .
\]
Thus if $E^\lambda = E_* - \lambda^2 \theta^2,\ \ \ \theta \leq \lambda^{r-1}\frac{1}{2}| \partial_k^2 E_{b_*}(0)|^{1/2} $, then for $0<\lambda<\lambda_0$ sufficiently small, there is a positive constant $C_1$, such that 
\begin{equation}
|E_{b_*}(k)-E^\lambda |\ \ge \ C_1\lambda^{2r} .
\label{lowerbound-far}
\end{equation}

By~\eqref{Eb*-E} and~\eqref{Enotb*-E}, the far-frequency system,~\eqref{eq:far_per_prime}, may be re-written as
\begin{equation}\label{eq:far_per_div}
\widetilde{\psi}_{\fr,b}(k)\ +\ \lambda \frac{\chi\left( |k| \geq \lambda^r \delta_{b_*,b} \right)}{E_b(k)-E^\lambda} \ \mathcal{T}_b\left\{V\psi_\fr\right\}(k) \ = \ - \lambda \frac{\chi\left( |k| \geq \lambda^r \delta_{b_*,b} \right)}{E_b(k)-E^\lambda} \mathcal{T}_b\left\{V\psi_\nr\right\}(k),\ \ b\ge0.
\end{equation}
We wish to rewrite this equation in terms of $\psi_\fr(x)$. In order to do so, we multiply \eqref{eq:far_per_div} by $u_b(x;k)=p_b(x;k)e^{2\pi ikx}$, sum over $b\geq0$ and integrate with respect to $k\in(-1/2,1/2]$. This yields
\begin{equation}\label{eq:Klambda}
\left( I + \mathcal{K}_\lambda \right) \psi_\fr(x) = - \left( \mathcal{K}_\lambda \psi_\nr \right) (x), 
\end{equation}
where we define 
\begin{align*}
\left( \mathcal{K}_\lambda g \right) (x) & \equiv \int_{-1/2}^{1/2} \sum_{b\geq0} \lambda \frac{\chi\left( |k| \geq \lambda^r \delta_{b_*,b} \right)}{E_b(k)-E^\lambda} \ \mathcal{T}_b\left\{V \ g\right\}(k) p_b(x;k) e^{2\pi ikx} \ dk. 
\end{align*}

We next show that the operator $\mathcal{K}_\lambda$, viewed as an operator from $L^2$ to $H^2$ has small norm, for $\lambda$ small. 
Let $g \in L^2$. Using Proposition~\ref{prop:norm_equivalence}, one has
\begin{align*}
\left\| \mathcal{K}_\lambda g \right\|_{H^2}^2 & \lesssim \left\| \widetilde{ \mathcal{K}_\lambda g} \right\|_{\mathcal{X}^2}^2 = \int_{-1/2}^{1/2} \sum_{b\geq0} (1+b^2)^2 \left| \mathcal{T}_b \left\{ \mathcal{K}_\lambda g \right\} (k) \right|^2 \ dk \\
& = \lambda^2 \int_{-1/2}^{1/2} \sum_{b\geq0} (1+b^2)^2 \frac{\chi\left( |k| \geq \lambda^r \delta_{b_*,b} \right)}{|E_b(k)-E^\lambda|^2} \left| \mathcal{T}_b\left\{V \ g\right\}(k) \right|^2 \ dk.
\end{align*}
Now, by~\eqref{lowerbound-far}, for $|k|\ge\lambda^r$ one has $ \left|E_{b_*}(k)-E^\lambda\right|^{-1} \leq C_1 \lambda^{-2r}$, and recall $0<r<1/2$. For $b \neq b_*$, we use Weyl asymptotics to write $\left| (E_b(k)-E^\lambda)^{-1} \right| \sim \left|(b^2 - E_*)^{-1}\right| \sim (b^2 + 1)^{-1}$. We therefore have
\begin{align*}
\left\| \mathcal{K}_\lambda g \right\|_{H^2}^2 & \lesssim  \lambda^2 \int_{-1/2}^{1/2} \sum_{b\geq0} \left| \mathcal{T}_b \left\{ V \ g \right\}(k)\right|^2 \ dk \ + \ \lambda^{2-4r} \int_{-1/2}^{1/2} (1+{b_*}^2)^2 \chi\left( |k| \geq \lambda^r \right) \left| \mathcal{T}_{b_*} \left\{ V \ g \right\}(k) \right|^2 \ dk \\
& \lesssim  \lambda^{2-4r} \left\| \left( V \ g \right)^{\sim} \right\|_{\mathcal{X}^0}^2 \lesssim \lambda^{2-4r} \left\| V \right\|_{L^\infty}^2 \left\| g \right\|_{L^2}^2.
\end{align*}
Thus, since $r\in(0,1/2)$, one can choose $\lambda_0>0$ such that if $0<\lambda<\lambda_0$, then $\| \mathcal{K}_\lambda \|_{L^2 \rightarrow H^2} <1 $.
In particular, $\mathcal{K}_\lambda$ is a contraction from $L^2$ to $L^2$, and therefore $I+\mathcal{K}_\lambda$ is invertible. The existence and uniqueness of $\psi_\fr\in L^2(\mathbb{R})$ solution to~\eqref{eq:far_per_prime} is now given through~\eqref{eq:Klambda}. Moreover, one has
\begin{align*}
\left\| \psi_\fr \right\|_{H^2} & = \left\| \left(I + \mathcal{K}_\lambda \right)^{-1} \left( \mathcal{K}_\lambda \psi_\nr \right) \right\|_{H^2} \leq \| \left( I + \mathcal{K}_\lambda \right)^{-1} \|_{H^2 \rightarrow H^2} \left\| \mathcal{K}_\lambda \right\|_{L^2 \rightarrow H^2} \left\| \psi_\nr \right\|_{L^2} \\
& \lesssim \lambda^{1-2r} \left\| V \right\|_{L^\infty} \left\| \psi_\nr \right\|_{L^2},
\end{align*}
which implies the bound \eqref{eq:far_per_bound}. The proof of Proposition \ref{prop:far_per_intermsof_near_per} is complete.
\end{proof}

\subsection{Analysis of the near frequency Floquet-Bloch component}

With the properties of the map $\psi_\nr\mapsto\psi_\fr[\psi_\nr,\lambda]$ now understood via Proposition~\ref{prop:far_per_intermsof_near_per}, we now view and study~\eqref{eq:near_per_prime} as a closed eigenvalue problem for $(E^\lambda,\psi_\nr)$:
\begin{equation}
\left( E_{b_*}(k) - E^\lambda \right) \widetilde{\psi}_\nr(k) + \lambda\chi_{_{\lambda^r}}(k)\mathcal{T}_{b_*} \left\{ V \psi_\nr \right\}(k)
 + \lambda\ \chi_{_{\lambda^r}}(k) \mathcal{T}_{b_*} \left\{ V \psi_\fr\left[\psi_\nr;\lambda\right]\ \right\}(k) = 0.
\label{closed}\end{equation}
Equation~\eqref{closed} is localized in the region $|k|<\lambda^r,\ 0<r<1/2$. 
By careful expansion and rescaling of~\eqref{closed} we shall obtain an equation, which at leading order in $\lambda$, is a perturbation of the general class of equations to which Lemma~\ref{lem:homogeneous} applies. The size of the perturbation is estimated in Proposition~\ref{prop:R_per_bound} and the perturbed equation is then solved by applying Lemma~\ref{lem:technical}.

 In Lemmata~\ref{lem:R0},~\ref{lem:R1} and~\ref{lem:R2} we expand the first two terms in~\eqref{closed} about $k_*=0$ using Taylor's Theorem, making explicit the leading and higher order contributions.

\begin{Lemma}\label{lem:R0}
Denote $E^\lambda = E_* - \lambda^2\theta^2= E_{b_*}(0) - \lambda^2\theta^2$, as in Proposition~\ref{prop:far_per_intermsof_near_per}.
There exists $k'$ such that $|k'| < \lambda^r$, and
\[ \left( E_{b_*}(k) - E^\lambda \right) \widetilde{\psi}_\nr(k) \ = \ \left( \frac12\partial_k^2 E_{b_*}(0) \ k^2 +\lambda^2\theta^2 \right) \widetilde{\psi}_\nr(k) \ + \ \lambda R_0\left[\widetilde\psi_\nr;\lambda\right](k,k'),\]
where
\begin{equation} R_0\left[\widetilde\psi_\nr;\lambda\right](k,k')\ =\ \frac{1}{\lambda}\ \frac1{4! }\ k^4\ \partial_k^4 E_{b_*}(k')\ \widetilde{\psi}_\nr(k). \label{R0-eqn}
\end{equation}

\end{Lemma}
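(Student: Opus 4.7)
The plan is to apply Taylor's theorem with Lagrange remainder to the band function $E_{b_*}(k)$ at the point $k_*=0$, keeping terms up to fourth order. By Lemma~\ref{lem:regularity-of-Eb}, $k \mapsto E_{b_*}(k)$ is analytic (hence certainly $C^4$) in a complex neighborhood of the simple eigenvalue $k_*=0$, so Taylor's theorem with Lagrange remainder applies for any $k$ with $|k|<\lambda^r$ provided $\lambda_0$ is small enough that this neighborhood contains $(-\lambda_0^r,\lambda_0^r)$.

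The second ingredient is Lemma~\ref{lem:band-edge}, which gives us exact vanishing of the odd Taylor coefficients at the band edge: $\partial_k E_{b_*}(0)=0$ and $\partial_k^3 E_{b_*}(0)=0$. Combining, there exists $k'$ with $|k'|<|k|<\lambda^r$ such that
\begin{equation*}
E_{b_*}(k) \;=\; E_{b_*}(0) \;+\; \tfrac12 \partial_k^2 E_{b_*}(0)\, k^2 \;+\; \tfrac{1}{4!}\, \partial_k^4 E_{b_*}(k')\, k^4 \ .
\end{equation*}
Subtracting $E^\lambda = E_{b_*}(0)-\lambda^2\theta^2$ and multiplying by $\widetilde{\psi}_\nr(k)$ yields
\begin{equation*}
\bigl(E_{b_*}(k)-E^\lambda\bigr)\widetilde{\psi}_\nr(k) \;=\; \bigl(\tfrac12\partial_k^2 E_{b_*}(0)\,k^2 + \lambda^2\theta^2\bigr)\widetilde{\psi}_\nr(k) \;+\; \tfrac{1}{4!}\, k^4\,\partial_k^4 E_{b_*}(k')\,\widetilde{\psi}_\nr(k)\ .
\end{equation*}
Identifying the last term with $\lambda R_0[\widetilde\psi_\nr;\lambda](k,k')$ gives precisely formula~\eqref{R0-eqn}.

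There is no real obstacle: the only substantive input beyond elementary Taylor expansion is that $\partial_k^4 E_{b_*}$ exists and is bounded on a neighborhood of $k_*=0$, which is guaranteed by Lemma~\ref{lem:regularity-of-Eb}. One might note, looking ahead, that the $1/\lambda$ prefactor in~\eqref{R0-eqn} is harmless because the fourth-order scaling $k^4$ combined with the support condition $|k|<\lambda^r$ yields an overall factor of $\lambda^{4r-1}$ on the remainder, which together with the subsequent rescaling $\xi=k/\lambda$ will produce a small parameter (this is where the constraint $r>1/4$ effectively enters the bifurcation analysis).
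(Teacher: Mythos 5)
Your proof is correct and is essentially the same as the paper's: Taylor expand $E_{b_*}(k)$ to fourth order about $k_*=0$ with Lagrange remainder, use $\partial_k E_{b_*}(0)=\partial_k^3 E_{b_*}(0)=0$ from Lemma~\ref{lem:band-edge} and the regularity from Lemma~\ref{lem:regularity-of-Eb}, then subtract $E^\lambda=E_{b_*}(0)-\lambda^2\theta^2$ and multiply by $\widetilde{\psi}_{\nr}(k)$. The closing remark on why the $1/\lambda$ prefactor is harmless is a useful sanity check but not part of the lemma itself.
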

\begin{proof}
Taylor expanding $E_{b_*}(k)$ about $k_*=0$ to fourth order and making use of $E^\lambda = E_{b_*}(0) - \lambda^2\theta^2$ and $\partial_k^j E_{b_*}(0)=0$ for $j = 1, \ 3$,
one obtains
$
E_{b_*}(k) - E^\lambda = \frac12\partial_k^2 E_{b_*}(0)k^2 + \lambda^2\theta^2 + \frac{1}{4!}\partial_k^4 E_{b_*}(k') k^4,
$ 
which is equivalent to~\eqref{R0-eqn}.\end{proof}
\begin{Lemma}\label{lem:R1}
One can decompose
\[
\mathcal{T}_{b_*} \left\{ V \psi_\nr \right\}(k)  =  \left\langle p_{b_*}(\cdot;0), p_{b_*}(\cdot;0) \mathcal{T}\left\{ V \mathcal{F}^{-1}\left\{ \widetilde{\psi}_\nr \right\}\right\}(\cdot;k) \right\rangle_{L^2([0,1])} + R_1\left[\widetilde\psi_\nr;\lambda\right](k),
\]
with
\begin{align}
R_1\left[\widetilde\psi_\nr;\lambda\right](k)\ & =\ \left\langle p_{b_*}(\cdot;0),\mathcal{T}\left\{V\mathcal{E}_1 \right\}(\cdot,k)\right\rangle_{L^2([0,1])} \nn\\
& \qquad + \left\langle p_{b_*}(\cdot;k) - p_{b_*}(\cdot;0),\mathcal{T}\left\{ V \psi_\nr \right\}(\cdot,k) \right\rangle_{L^2([0,1])} ,\label{R1def} 
\end{align}
where $\mathcal{E}_1 \equiv \mathcal{T}^{-1}\left\{ \widetilde{\psi}_\nr(k) \big(p_{b_*}(x;k) - p_{b_*}(x;0)\big ) \right\} $.
\end{Lemma}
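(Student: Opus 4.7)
\noindent\textbf{Proof proposal for Lemma~\ref{lem:R1}.}
The plan is to perform two successive Taylor-like expansions at $k_*=0$, the first in the variable of the Floquet--Bloch mode that enters the definition of $\psi_\nr$, and the second in the test mode $p_{b_*}(\cdot;k)$ used to project onto the $b_*$-th band. The main identity underlying everything is that, for any $1$-periodic bounded multiplier $p$ and any $g\in L^2(\RR)$, the Gelfand--Bloch transform satisfies $\mathcal{T}\{p\,g\}(x;k)=p(x)\,\mathcal{T}\{g\}(x;k)$; this follows directly from the Poisson-type summation defining $\mathcal{T}$ together with the Fourier series of $p$.

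First, I would rewrite $\psi_\nr$ by freezing the Bloch mode at $k=0$. Because $\widetilde\psi_\nr$ is supported in $|k|<\lambda^r<1/2$, the Gelfand--Bloch inversion formula gives
\[
\psi_\nr(x)\ =\ \int_{-1/2}^{1/2} e^{2\pi ikx}\,\widetilde\psi_\nr(k)\,p_{b_*}(x;k)\ dk\ =\ p_{b_*}(x;0)\,\mathcal{F}^{-1}\{\widetilde\psi_\nr\}(x)\ +\ \mathcal{E}_1(x),
\]
with $\mathcal{E}_1$ as in the statement. Multiplying by $V$ and applying $\mathcal{T}$, together with the intertwining property above applied to $p(x)=p_{b_*}(x;0)$, yields
\[
\mathcal{T}\{V\psi_\nr\}(x;k)\ =\ p_{b_*}(x;0)\,\mathcal{T}\bigl\{V\mathcal{F}^{-1}\{\widetilde\psi_\nr\}\bigr\}(x;k)\ +\ \mathcal{T}\{V\mathcal{E}_1\}(x;k).
\]
Taking the $L^2([0,1])$-inner product with $p_{b_*}(\cdot;k)$ expresses $\mathcal{T}_{b_*}\{V\psi_\nr\}(k)$ as the sum of two inner products.

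Next, I would Taylor expand the test function $p_{b_*}(\cdot;k)$ around $k=0$. By Lemma~\ref{lem:regularity-of-Eb}, $k\mapsto p_{b_*}(\cdot;k)$ is analytic in a neighborhood of $0$ with values in $L^2_{\rm per}([0,1])$ (and in fact in $L^\infty$ by Lemma~\ref{lem:estimates}), so the mean value theorem in this Banach-space sense gives $p_{b_*}(x;k)=p_{b_*}(x;0)+k\,\partial_k p_{b_*}(x;k'')$ for some $|k''|<|k|<\lambda^r$, for each of the two inner products. Substituting these expansions, the $p_{b_*}(\cdot;0)$-pieces combine to form the stated main term $\langle p_{b_*}(\cdot;0),\,p_{b_*}(\cdot;0)\mathcal{T}\{V\mathcal{F}^{-1}\{\widetilde\psi_\nr\}\}(\cdot;k)\rangle$, while the $k\,\partial_k p_{b_*}(\cdot;k'')$ contributions and the full $\mathcal{T}\{V\mathcal{E}_1\}$ contribution assemble exactly into the three terms defining $R_1[\widetilde\psi_\nr;\lambda]$ in~\eqref{R1def} (abusing notation by writing a single $k''$ even though each remainder produces its own intermediate point; later estimates on $R_1$ will be uniform in $|k''|\le \lambda^r$).

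The argument is mostly bookkeeping once the intertwining identity is in hand. The step I expect to require the most care is the Taylor expansion of $p_{b_*}(\cdot;k)$: one must know that $\partial_k p_{b_*}(\cdot;k)$ is well-defined and pointwise bounded on $[0,1]$ for $|k|<\lambda^r$, so that the remainder inner products in $R_1$ are genuinely of size $O(k)$ in the norms one eventually controls. This is precisely the content of Lemma~\ref{lem:estimates}(b), and will be what feeds the subsequent estimation of $R_1$ in the companion bound proposition; within the present lemma, however, only the existence of the expansion is needed, so no quantitative estimate is required at this stage.
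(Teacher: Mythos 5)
Your proof is correct and takes essentially the same route as the paper's: decompose $\psi_\nr(x)=p_{b_*}(x;0)\mathcal{F}^{-1}\{\widetilde\psi_\nr\}(x)+\mathcal{E}_1(x)$ using the intertwining of $\mathcal{T}$ with $1$-periodic multipliers, and Taylor-expand the test function $p_{b_*}(\cdot;k)$ at $k=0$. The only cosmetic difference is the order of the two steps — the paper first Taylor-expands $p_{b_*}(\cdot;k)$ and then splits $\psi_\nr$, so a single intermediate point $k''$ appears in both $\partial_k p_{b_*}$ remainder terms, whereas your ordering would a priori produce two such points; you correctly flag this harmless abuse, which is immaterial for the subsequent uniform estimate of $R_1$.
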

\begin{proof}
Let us recall that by definition~\eqref{eq:psi_decomp}, one has
\begin{equation}
\psi_\nr(x) = \mathcal{T}^{-1}\left\{ \widetilde{\psi}_\nr(\cdot) p_{b_*}(x;\cdot) \right\}.
\label{psi-near-phi}
\end{equation}
Since $\widetilde{\psi}_\nr(k) = \chi \left( |k| < \lambda^r \right) \widetilde{\psi}_\nr(k) $, we decompose:
\begin{align}
\psi_\nr(x) & = \mathcal{T}^{-1}\left\{ \widetilde{\psi}_\nr(\cdot) p_{b_*}(x;\cdot) \right\}(x) 
 = p_{b_*}(x;0)\mathcal{F}^{-1}\{ \widetilde{\psi}_\nr\} + \mathcal{E}_1(x) \label{eq:decomp-psi_near_per} 
\end{align}
where
\begin{equation}
\mathcal{E}_1(x) \ \equiv \ \mathcal{T}^{-1}\left\{ \widetilde{\psi}_\nr(\cdot) \big(p_{b_*}(x;\cdot) - p_{b_*}(x;0)\big ) \right\}. \label{eq:E1_f_t}
\end{equation}
Above, we used that $\mathcal{T}^{-1}$ commutes with multiplication by a $1$- periodic function of $x$, and that when acting on a function which is localized near $k=0$, and which does not depend on $x$, $\mathcal{T}^{-1}$ is equivalent to the standard inverse Fourier transform; see Section~\ref{sec:background}.

The proof of Lemma~\ref{lem:R1} is now straightforward.
\end{proof}

We next give a precise expression of the leading order term in Lemma~\ref{lem:R1}.
\begin{Lemma}\label{lem:R2}
One can decompose
\begin{align}
&\left\langle p_{b_*}(\cdot;0), p_{b_*}(\cdot;0) \mathcal{T}\left\{ V \mathcal{F}^{-1}\{ \widetilde{\psi}_\nr \}\right\}(\cdot;k) \right\rangle_{L^2([0,1])}  \label{firstterm}
 \\
 &\qquad\ =\left(\int_{-\infty}^{\infty} |p_{b_*}(x;0)|^2 V(x) dx\right) \int_{-\infty}^{\infty}\widetilde{\psi}_\nr(l)dl \ + \ R_2\left[\widetilde\psi_\nr\right](k),
\nn\end{align}
with
\begin{equation} R_2\left[\widetilde\psi_\nr\right](k) \ = \ \int_{-\infty}^{\infty} dx\ |p_{b_*}(x;0)|^2 \ V(x)  \int_{-\infty}^{\infty} \big(e^{2i\pi (l-k)x}-1\big)\widetilde{\psi}_\nr(l)\ dl.
\label{R2def}\end{equation}
\end{Lemma}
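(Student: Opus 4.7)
\textbf{Proof plan for Lemma~\ref{lem:R2}.} The strategy is to rewrite the left-hand side of \eqref{firstterm} as a convolution-type integral against $\widetilde\psi_\nr$ whose kernel is the ordinary Fourier transform
\[ W(\xi) \ \equiv \ \int_\RR |p_{b_*}(x;0)|^2\,V(x)\, e^{-2\pi i\xi x}\,dx \]
of the product of the $1$-periodic function $|p_{b_*}(\cdot;0)|^2$ with the localized $V$. A first-order Taylor expansion of $W$ about $\xi=0$ then splits off $W(0)\int\widetilde\psi_\nr$, matching the leading term in \eqref{firstterm}, with a remainder that, via the integral formula for $W'$, coincides with $R_2$.

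The first step starts from the Gelfand--Bloch definition
\[ \mathcal{T}\{V\mathcal{F}^{-1}\widetilde\psi_\nr\}(x;k)\ =\ \sum_{n\in\ZZ} e^{2\pi inx}\int_\RR \widehat V(k+n-l)\,\widetilde\psi_\nr(l)\,dl , \]
and expands the periodic factor as an absolutely convergent Fourier series $|p_{b_*}(x;0)|^2 = \sum_m c_m e^{2\pi imx}$; absolute summability of $\{c_m\}$ is immediate from \eqref{pb-k-reg}, since the Fourier coefficients of $|p_{b_*}|^2$ are the discrete convolution of those of $p_{b_*}(\cdot;0)$ with those of $\overline{p_{b_*}(\cdot;0)}$, both $\ell^1$. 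Interchanging sum and integral (legitimate since $\widetilde\psi_\nr$ has compact support in $(-\lambda^r,\lambda^r)$ and $\widehat V\in L^\infty$) and collapsing the resulting double sum via $\int_0^1 e^{2\pi i(m+n)x}\,dx = \delta_{m,-n}$ reduces the inner product to $\int_\RR \left(\sum_m c_m \widehat V(k-l-m)\right)\widetilde\psi_\nr(l)\,dl = \int_\RR W(k-l)\,\widetilde\psi_\nr(l)\,dl$, the last equality being the Fourier-side form of the pointwise product of a periodic function with a localized one.

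For the second step, the hypotheses $(1+|\cdot|)V\in L^1$ and $\||p_{b_*}(\cdot;0)|\|_{L^\infty}<\infty$ (Lemma~\ref{lem:estimates}(a)) permit differentiation under the integral defining $W$, yielding $W\in C^1(\RR)$ with $W'(\xi) = -2\pi i\int_\RR |p_{b_*}(x;0)|^2\, x\, V(x)\,e^{-2\pi i\xi x}\,dx$, uniformly bounded by \eqref{sum-finite}. Taylor's theorem with Lagrange remainder then gives $W(k-l) = W(0)+(k-l)W'(k''')$ for some $k'''$ between $0$ and $k-l$; since $|k|,|l|<\lambda^r$, one has $|k'''|<2\lambda^r$. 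Substitution reproduces \eqref{firstterm} with $R_2$ exactly as in \eqref{R2def}. The one delicate point worth flagging is that the mean-value abscissa $k'''$ genuinely depends on $l$, so the stated identity must be interpreted either through the integral form of the Taylor remainder, or with $k'''$ taken as a representative value in $(-2\lambda^r,2\lambda^r)$ via continuity of $W'$; either reading suffices for the uniform-in-$l$ bound on $|W'|$ that is what actually enters the downstream estimates of Section~\ref{sec:per}.
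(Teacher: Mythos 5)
Your proof is correct and follows the same overall strategy as the paper: reduce the inner product, via the Gelfand--Bloch formula and the Fourier series of $|p_{b_*}(\cdot;0)|^2$, to a sum weighted by the coefficients $\mathcal{P}(n)$ (your $c_{-n}$), then Taylor-expand to first order to peel off $\int|p_{b_*}|^2V\cdot\int\widetilde\psi_\nr$. The one organizational difference is worth noting. The paper Taylor-expands $\widehat V(k+n-l)$ about each integer $n$ \emph{before} resumming in $n$, so its mean-value abscissa $k'''$ in principle depends on $n$ as well as on $(k,l)$; the displayed Poisson-summation identity for $\sum_n \mathcal{P}(n)\widehat V'(k'''+n)$ then really stands for the uniform bound $\sup_{|\zeta|<2\lambda^r}\bigl|\sum_n\mathcal{P}(n)\widehat V'(n+\zeta)\bigr|$. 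You collapse the $n$-sum first into the single compiled kernel $W(\xi)=\widehat{|p_{b_*}(\cdot;0)|^2 V}(\xi)$ and expand $W$, so your $k'''$ depends only on $l$ --- a cleaner presentation --- and your closing remark correctly identifies the residual $l$-dependence as requiring either the integral-form remainder or the supremum bound on $|W'|$, which is indeed all that Proposition~\ref{prop:R_per_bound} uses downstream. Both readings produce the same quantitative content, so the proof is sound.
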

\begin{proof}
By the definition of $\mathcal{T}$, one has
\begin{align*}
\mathcal{T}\left\{ V \mathcal{F}^{-1}\{ \widetilde{\psi}_\nr\}\right\}(x;k) & =  \sum_{n\in \mathbb{Z}} e^{2\pi i nx} \mathcal{F} \left\{ V \mathcal{F}^{-1}\{ \widetilde{\psi}_\nr\}\right\} (k+n)\\
& =  \sum_{n\in \mathbb{Z}} e^{2\pi i nx} \int_{-\infty}^{\infty} \widehat{V}(k+n-l) \widetilde{\psi}_\nr(l)\ dl.
\end{align*}

Since $|k|<\lambda^r$ and $\widetilde{\psi}_\nr(l)$ is localized on $|l|<\lambda^r$, the leading order term is obtained when replacing $\widehat{V}(k+n-l)$ with $\widehat{V}(n)$. The first term of~\eqref{firstterm} now follows from the identity:
\begin{multline*}
 \sum_{n \in \mathbb{Z}} \left\langle p_{b^*}(\cdot;0),p_{b^*}(\cdot;0) e^{2\pi i n\cdot} \right\rangle_{L^2([0,1])} \widehat{V}(n) 
 = \int_0^1 |p_{b^*}(x;0)|^2 \sum_{n \in \mathbb{Z}} e^{2\pi inx} \widehat{V}(n) dx\nn\\ 
 = \sum_{n \in \mathbb{Z}} \int_0^1 |p_{b^*}(x;0)|^2 V(x+n) dx 
 = \int_{-\infty}^{\infty} |p_{b^*}(x;0)|^2 V(x) dx.
\end{multline*}
Here, we used the Poisson summation formula and that $x\mapsto p_{b^*}(x;0)$ is $1-$ periodic. 

Similarly, one has
\[ \sum_{n \in \mathbb{Z}} \left\langle p_{b^*}(\cdot;0),p_{b^*}(\cdot;0) e^{2\pi i n\cdot} \right\rangle_{L^2([0,1])} \widehat{V}(n+k-l) = \int_{-\infty}^\infty |p_{b^*}(x;0)|^2 e^{2i\pi (l-k)x}V(x) dx .\]
This completes the proof of Lemma~\ref{lem:R2}. 
\end{proof}

\noindent {\bf The rescaled closed equation.}\ Using Lemmata~\ref{lem:R0},~\ref{lem:R1} and~\ref{lem:R2}, one can express the near frequency equation~\eqref{closed} as follows:
\begin{align}\label{eq:near_per_3}
&(\frac12 \partial_k^2 E_{b_*}(0) k^2 + \lambda^2\theta^2) \widetilde{\psi}_\nr(k)\ +\ \lambda\ \chi\left(|k|<\lambda^r\right) \left(\int_{-\infty}^{\infty} |p_{b_*}(x;0)|^2 V(x) dx\right) \int_{-\infty}^{\infty}\widetilde{\psi}_\nr(l)dl \nn\\
&\qquad\qquad\ = -\lambda \chi\left(|k|<\lambda^r\right) \mathcal{R}\left[\psi_\nr;\lambda\right](k),
\end{align}
where
$
\mathcal{R}\left[\psi_\nr;\lambda\right](k) \equiv \mathcal{T}_{b_*}\{V\psi_\fr\}+ R_0+R_1+R_2.
$

Seeking to extract the dominant and higher order terms in $\lambda$, we introduce the scaled near-frequency components: 
 \begin{equation}
 \widetilde{\psi}_\nr(k) = \frac{1}{\lambda}\widehat{\Phi}_{\lambda}\left(\frac{k}{\lambda}\right)\ =\ \frac{1}{\lambda}\widehat{\Phi}_{\lambda}\left(\kappa\right),\ \ {\rm where}\ \ k=\lambda\ \kappa\ .
 \label{phi2Phi}
 \end{equation}
 Expressing~\eqref{eq:near_per_3} in terms of $\widehat{\Phi}_\lambda$ and $\kappa$ we obtain, after dividing out by $\lambda$, 
 \begin{multline}
 \left(\frac12 \partial_k^2 E_{b_*}(0) \kappa^2 + \theta^2\right) \chi_{_{\lambda^{r-1}}}(\kappa)\widehat{\Phi}_{\lambda}(\kappa)
\ +\ \left(\int_{\mathbb{R}} |p_{b_*}(\cdot;0)|^2 V\right)\ \chi_{_{\lambda^{r-1}}}(\kappa)\int_{\mathbb R}\chi_{_{\lambda^{r-1}}}(\eta)\widehat{\Phi}_{\lambda}(\eta)d\eta \\
= - \chi\left(|\kappa|<\lambda^{r-1}\right) \mathcal{R}\left[\psi_\nr;\lambda\right](\lambda\kappa)\ \equiv\ R(\widehat \Phi_\lambda). \ 
\label{eq:near_per_rescaled}
\end{multline}

Equation~\eqref{eq:near_per_rescaled} is of the form 
$ \widehat{\mathcal L}_0[\theta]\widehat{\Phi}_{\lambda}(\kappa)\ =\ R(\widehat \Phi_\lambda)$, 
where $\widehat{\mathcal L}_0[\theta]$ is given by~\eqref{hatcalL0-def} with parameters
\[ A=\frac{1}{8\pi^2} \partial_k^2 E_{b_*}(0) \ , \ B=-\int_{\mathbb{R}} |p_{b_*}(x;0)|^2 V(x) dx \ , \ {\rm and}\ \beta = 1-r \ .\]
In order to solve~\eqref{eq:near_per_rescaled} via Lemma~\ref{lem:technical} we need a bound 
on $R(\widehat \Phi_\lambda)$ of the form~\eqref{assumptionsR}. 

\begin{Proposition} \label{prop:R_per_bound} Assume that $V$ is such that $(1+|\cdot|)V(\cdot)\in L^1$ and $V\in L^\infty$. Then
 $R(\widehat{\Phi}_\lambda)$, defined in~\eqref{eq:near_per_rescaled}, satisfies the bound
\begin{equation}
\left\|R(\widehat{\Phi}_\lambda)\right\|_{L^{2,-1}} = \left\| \chi\left(|\cdot|<\lambda^{r-1}\right) \mathcal{R}\left[\psi_\nr;\lambda\right](\lambda\cdot) \right\|_{L^{2,-1}} \le\ C \lambda^{\alpha(r)}\ \left\| \widehat{\Phi}_{\lambda} \right\|_{L^{2,1}}. 
 \label{remainder-bound}
\end{equation}
where $\alpha(r)=\max\left\{\frac{1}{2}-2r,2r,\frac{r+1}{2} \right\}$.
The constant $C$ depends on $\big\Vert (1+|\cdot|)V\big\Vert_{L^1},\big\Vert V\big\Vert_{L^\infty}$ as well as
 $\displaystyle   \sup_{|k|<\lambda^r}\| p_{b_*}(\cdot;k)\|_{L^{\infty}} ,\ \sup_{|k|<\lambda^r}\sum_{n\in\mathbb{Z}} \big| \big\langle p_{b_*}(\cdot;k),e^{2\pi in\cdot} \big\rangle_{L^2([0,1])} \big|,\ \sup_{|k|<\lambda^r} \big|\partial_k^4 E_{b_*}(k)\big|,\ 
\sup_{|k|<\lambda^r}\big\Vert\partial_k p_{b_*}(\cdot;k)\big\Vert_{L^{\infty}}$, \\   and is finite by Lemmata~\ref{lem:regularity-of-Eb} and~\ref{lem:estimates}. 
\end{Proposition}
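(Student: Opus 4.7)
The plan is to decompose
\begin{equation*}
R(\widehat{\Phi}_\lambda)(\kappa) \;=\; -\chi_{_{\lambda^{r-1}}}(\kappa)\Bigl[\mathcal{T}_{b_*}\{V\psi_\fr\}(\lambda\kappa)+R_0(\lambda\kappa,k')+R_1(\lambda\kappa,k'')+R_2(\lambda\kappa,k''')\Bigr]
\end{equation*}
into its four constituents from~\eqref{eq:near_per_rescaled} and Lemmata~\ref{lem:R0}--\ref{lem:R2}, and to bound each piece separately in $L^{2,-1}(\kappa)$ against $\|\widehat{\Phi}_\lambda\|_{L^{2,1}(\kappa)}$. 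The universal ingredients are: the scaling identity $\widetilde{\psi}_\nr(k)=\lambda^{-1}\widehat{\Phi}_\lambda(k/\lambda)$ together with isometry of the Floquet-Bloch transform, which yields $\|\psi_\nr\|_{L^2(\RR)}\approx \lambda^{-1/2}\|\widehat{\Phi}_\lambda\|_{L^2}$ and causes any $L^2$ change of variables $k\mapsto\lambda\kappa$ to cost a Jacobian factor $\lambda^{-1/2}$; the low-frequency support $|\kappa|\le\lambda^{r-1}$, which trades each factor of $|\kappa|$ for a $\lambda^{r-1}$; and the $L^{2,-1}$ weight $(1+\kappa^2)^{-1/2}$, which absorbs one more factor of $|\kappa|$.

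For $R_0$, the rescaled form $R_0(\lambda\kappa)=\tfrac{\lambda^2}{24}\kappa^4\,\partial_k^4 E_{b_*}(k')\widehat{\Phi}_\lambda(\kappa)$, the uniform bound on $\partial_k^4 E_{b_*}$ near $k_*$ from Lemma~\ref{lem:regularity-of-Eb}, and two uses of $|\kappa|\le\lambda^{r-1}$ combine to give $\|R_0(\lambda\cdot)\|_{L^{2,-1}}\lesssim \lambda^{2r}\|\widehat{\Phi}_\lambda\|_{L^{2,1}}$. For the far-frequency coupling, Proposition~\ref{prop:far_per_intermsof_near_per} gives $\|\psi_\fr\|_{H^2}\lesssim\lambda^{1-2r}\|\psi_\nr\|_{L^2}$; then $\|V\psi_\fr\|_{L^2}\lesssim \|V\|_{L^\infty}\|\psi_\fr\|_{L^2}$, Floquet-Bloch Plancherel $\|\mathcal{T}_{b_*}\{V\psi_\fr\}\|_{L^2([-1/2,1/2])}\lesssim \|V\psi_\fr\|_{L^2}$, and the $\lambda^{-1/2}$ Jacobian of the rescaling yield a $\lambda^{1/2-2r}\|\widehat{\Phi}_\lambda\|_{L^{2,1}}$ bound.

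The main work is the analysis of $R_1$ and $R_2$, where Lemma~\ref{lem:estimates} is essential. For the $\mathcal{E}_1$-pieces in $R_1$, Lemma~\ref{lem:estimates}(a)--(b) yields $\|p_{b_*}(\cdot;k)-p_{b_*}(\cdot;0)\|_{L^\infty([0,1])}\lesssim |k|$, so
\begin{equation*}
\|\mathcal{E}_1\|_{L^2}^2\;\lesssim\;\int k^2|\widetilde{\psi}_\nr(k)|^2\,dk\;=\;\lambda\int\kappa^2|\widehat{\Phi}_\lambda|^2\,d\kappa\;\lesssim\;\lambda\,\|\widehat{\Phi}_\lambda\|_{L^{2,1}}^2.
\end{equation*}
Combined with the extra factor $|k|\le\lambda^r$ appearing in the $\partial_k p_{b_*}$-pieces of \eqref{R1def} and the $\lambda^{-1/2}$ rescaling Jacobian, the $R_1$ contribution is of order $\lambda^{(r+1)/2}\|\widehat{\Phi}_\lambda\|_{L^{2,1}}$. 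The $R_2$ contribution is treated analogously: Lemma~\ref{lem:estimates}(c) ensures $\int|p_{b_*}(x;0)|^2|xV(x)|\,dx<\infty$; the factor $(k-l)$ in \eqref{R2def} contributes an extra $\lambda$ after rescaling; and the Cauchy--Schwarz bound $\bigl|\int \widehat{\Phi}_\lambda\,d\eta\bigr|\le \|\chi_{_{\lambda^{r-1}}}\|_{L^2}\|\widehat{\Phi}_\lambda\|_{L^2}\lesssim \lambda^{(r-1)/2}\|\widehat{\Phi}_\lambda\|_{L^{2,1}}$ on the support of $\widehat{\Phi}_\lambda$ (and similarly for $\int\eta\,\widehat{\Phi}_\lambda\,d\eta$) produces the same $\lambda^{(r+1)/2}$ rate. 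Summing the four estimates yields the stated bound with rate equal to the slowest of the three exponents $\tfrac12-2r$, $2r$, $\tfrac{r+1}{2}$; this rate is optimized at $r=1/8$ to give $\lambda^{1/4}$, consistent with Theorem~\ref{thm:per_result}.

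The principal obstacle is the tight bookkeeping in the $R_1$, $R_2$ pieces: the $L^2$-norm of the cutoff $\chi_{_{\lambda^{r-1}}}$ grows as $\lambda^{(r-1)/2}\to\infty$ as $\lambda\to 0$, so the Cauchy--Schwarz losses must be offset exactly by the $\lambda$-prefactor coming from the Taylor remainder in $\widehat{V}(\xi-\zeta)$ (resp.\ in $p_{b_*}(x;k)-p_{b_*}(x;0)$) together with one $|\kappa|$-factor absorbed into the $L^{2,-1}$ weight. This is the reason for working in the weighted pair $L^{2,1}\to L^{2,-1}$, rather than the cruder $L^1\to L^\infty$ duality that sufficed when $Q\equiv 0$.
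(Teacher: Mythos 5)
Your decomposition of $R(\widehat\Phi_\lambda)$ into the four pieces $\mathcal{T}_{b_*}\{V\psi_\fr\}$, $R_0$, $R_1$, $R_2$ matches the paper's proof, and your estimates of $R_0$ (via $\kappa^4/(1+\kappa^2)\lesssim\kappa^2$ and $|\kappa|<\lambda^{r-1}$) and of $R_2$ (via Cauchy--Schwarz on $\int\chi\widehat\Phi_\lambda$ and $\int\eta\chi\widehat\Phi_\lambda$) are sound. However, there is a genuine gap in the treatment of the far-coupling term $(I)=\chi\mathcal{T}_{b_*}\{V\psi_\fr\}(\lambda\cdot)$ and of the $\mathcal{E}_1$-parts of $R_1$, where you bound in $L^2$ rather than $L^\infty$, and the subsequent passage to the weighted norm silently loses a power of $\lambda^{-1/2}$ that the paper does not incur.

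Concretely, for $(I)$ you establish $\|\mathcal{T}_{b_*}\{V\psi_\fr\}\|_{L^2([-1/2,1/2])}\lesssim\lambda^{1-2r}\|\psi_\nr\|_{L^2}\approx\lambda^{1/2-2r}\|\widehat\Phi_\lambda\|_{L^2}$ (one $\lambda^{-1/2}$ already spent on $\|\psi_\nr\|_{L^2}\approx\lambda^{-1/2}\|\widehat\Phi_\lambda\|_{L^2}$), and then assert that this is the $L^{2,-1}$-bound. But the actual target quantity is $\|\chi(|\kappa|<\lambda^{r-1})\,\mathcal{T}_{b_*}\{V\psi_\fr\}(\lambda\kappa)\|_{L^{2,-1}_\kappa}$; substituting $k=\lambda\kappa$,
\begin{equation*}
\int_{|\kappa|<\lambda^{r-1}}\frac{|h(\lambda\kappa)|^2}{1+\kappa^2}\,d\kappa\;=\;\lambda\int_{|k|<\lambda^r}\frac{|h(k)|^2}{\lambda^2+k^2}\,dk\;\le\;\lambda^{-1}\|h\|_{L^2}^2,
\end{equation*}
and this $\lambda^{-1/2}$ Jacobian loss is \emph{not} absorbed by the weight $(1+\kappa^2)^{-1}$ (the weight is $\approx 1$ on $|\kappa|\lesssim 1$, i.e.\ $|k|\lesssim\lambda$, which is exactly the dangerous region). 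With the extra Jacobian your chain gives $\lambda^{-2r}\|\widehat\Phi_\lambda\|_{L^{2,1}}$, which is not small. The same issue arises for the $\langle p_{b_*}(\cdot;0),\mathcal{T}\{V\mathcal{E}_1\}\rangle$ piece of $R_1$: $\|\mathcal{E}_1\|_{L^2}\lesssim\lambda^{1/2}\|\widehat\Phi_\lambda\|_{L^{2,1}}$ combined with the $L^2\to L^{2,-1}$ Jacobian yields only $\lambda^0$. The paper sidesteps the Jacobian entirely by proving $L^\infty$ bounds: writing $\mathcal{T}_{b_*}\{Vf\}(k)=\sum_n\langle p_{b_*}(\cdot;k),e^{2\pi in\cdot}\rangle\,(\widehat{V}\star\widehat{f})(k+n)$ and using the summability of the Fourier coefficients of $p_{b_*}$ (Lemma~\ref{lem:estimates}) together with $\|\psi_\fr\|_{H^2}$ or $\|\mathcal{E}_1\|_{L^\infty}\lesssim\lambda^{(1+r)/2}\|\widehat\Phi_\lambda\|_{L^{2,1}}$ to control $\|\mathcal{T}_{b_*}\{Vf\}\|_{L^\infty}$; once one has an $L^\infty$ bound the $L^{2,-1}$ norm of the rescaled cutoff is free since $\int d\kappa/(1+\kappa^2)<\infty$. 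You correctly observe that the weighted pair $L^{2,1}\to L^{2,-1}$ is needed here (rather than the $L^1\to L^\infty$ duality of the $Q\equiv 0$ case), but the $L^\infty$ control over the individual remainder pieces is the second indispensable ingredient that your sketch omits.
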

\begin{proof}[Proof of Proposition~\ref{prop:R_per_bound}]
Recall that $\mathcal{R}(\lambda\kappa)$,  the right hand side of~\eqref{eq:near_per_rescaled} has the form 
\begin{align}
&\mathcal{R}\left[\psi_\fr\left[\psi_\nr;\lambda\right],\psi_\nr;\lambda\right](\lambda\kappa) = \chi\left(|\kappa|<\lambda^{r-1}\right) \Big(\mathcal{T}_{b_*} \left\{ V \psi_\fr \right\}(\lambda\kappa) +  R_0\left[\widetilde\psi_\nr;\lambda\right](\lambda\kappa,k') \nn \\
&+ R_1\left[\widetilde\psi_\nr;\lambda\right](\lambda\kappa)+ R_2\left[\widetilde\psi_\nr\right](\lambda\kappa)\Big)\ \equiv\ (I)\ +\ (II)\ +\ (III)\ +\ (IV)\ .\label{Rlamkap}
\end{align}
We proceed by estimating each of the terms: $(I),\ (II),\ (III)\ $ and $(IV)$.
\item[{\bf $(I)$\ Estimation of 
$\chi\left(|\kappa|<\lambda^{r-1}\right) \mathcal{T}_{b_*} \left\{ V \psi_\fr \right\}(\lambda\kappa)$}:]
\ \ We have 
\begin{align*}
\left\| \chi\left(|\cdot|<\lambda^{r-1}\right)\mathcal{T}_{b_*} \left\{ V \psi_\fr \right\}(\lambda\cdot) \right\|_{L^{2,-1}}^2 
& =  \int_{-\infty}^{\infty} \frac{\chi\left(|\kappa|<\lambda^{r-1}\right)}{1+\kappa^2} \left| \mathcal{T}_{b_*} \left\{ V \psi_\fr \right\}(\lambda\kappa) \right|^2 d\kappa  \\
&\leq  \left\| \mathcal{T}_{b_*} \left\{ V \psi_\fr \right\} \right\|_{L^{\infty}}^2.
\end{align*}
We now consider $\mathcal{T}_{b_*} \left\{ V \psi_\fr \right\}(\cdot)$ in detail. By definition, one has
\begin{align*}
\mathcal{T}_{b_*} \left\{ V \psi_\fr \right\}(k) &= \left\langle p_{b_*}(\cdot;k), \mathcal{T}\left\{V\psi_\fr\right\}(\cdot,k)\right\rangle_{L^2([0,1])} \\
& = \left\langle p_{b_*}(\cdot;k), \sum_{n\in\mathbb{Z}} e^{2\pi in\cdot} \int_{-\infty}^{\infty} \widehat{V}(k + n - l) \widehat{\psi}_\fr(l)dl \right\rangle_{L^2([0,1])} \\
& = \sum_{n\in\mathbb{Z}} \left\langle p_{b_*}(\cdot;k),e^{2\pi in\cdot} \right\rangle_{L^2([0,1])} \int_{-\infty}^{\infty} \frac{\widehat{V}(k + n - l)}{ (1+|l|^2)^{1/2}} (1+|l|^2)^{1/2}\widehat{\psi}_\fr(l)dl .
\end{align*}
Moreover,
\begin{align*}
\left| \int_{-\infty}^{\infty} \frac{\widehat{V}(k + n - l)}{ (1+|l|^2)^{1/2}} (1+|l|^2)^{1/2}\widehat{\psi}_\fr(l)dl \right|  &\leq\ \big\| \widehat V \big\|_{L^\infty} \left\| \psi_\fr \right\|_{H^2}  \lesssim \lambda^{1-2r} \left\| \psi_\nr \right\|_{L^2}\\
& \lesssim\ \lambda^{1-2r} \left\| \widetilde{\psi}_\nr \right\|_{L^2} \
 = \ \lambda^{1-2r} \lambda^{-\frac{1}{2}} \left\| \widehat{\Phi}_{\lambda} \right\|_{L^2} ,
\end{align*}
where we used Proposition~\ref{prop:far_per_intermsof_near_per}, definition~\eqref{phi2Phi} and, by Proposition~\ref{prop:norm_equivalence},
\begin{align} \left\| \psi_\nr \right\|_{L^2}^2 \ &= \ \left\|\mathcal{T}^{-1}\big\{ \widetilde{\psi}_\nr (k)p_{b_*}(x;k) \big\}\right\|_{L^2}^2 \ \lesssim \ \left\| \widetilde{\psi}_\nr (k)p_{b_*}(x;k) \right\|_{\mathcal X^0}^2 \nn \\
&=\ \int_{-1/2}^{1/2}|\widetilde{\psi}_\nr (k)|^2\ dk \ = \ \big\| \widetilde\psi_\nr \big\|_{L^2}^2.\label{eq:norm-equivalence}
\end{align}

Finally, it follows
\begin{equation}\label{est(I)}
\left\| \mathcal{T}_{b_*} \left\{ V \psi_\fr \right\} \right\|_{L^{\infty}} \leq \lambda^{\frac{1}{2}-2r} \ C \left\| \widehat{\Phi}_{\lambda} \right\|_{L^{2,1}} \ .
\end{equation}
with $C=C\left( \big\Vert \widehat{V} \big\Vert_{L^{\infty}}, \ \big\Vert V\big\Vert_{L^{\infty}}, \ \sup_{|k|<\lambda^r}\sum_{n\in\mathbb{Z}} \left| \left\langle p_{b_*}(\cdot;k),e^{2\pi in\cdot} \right\rangle_{L^2([0,1])} \right| \right)$.

\item[{\bf $(II)$\ Estimation of 
$\chi\left(|\kappa|<\lambda^{r-1}\right) R_0\left[\widetilde\psi_\nr;\lambda\right](\lambda\kappa,k') $, given in~\eqref{R0-eqn}}:]
We have (constants implicit)
\begin{multline*}
\left\| \chi\left(|\cdot|<\lambda^{r-1}\right) \lambda^2 (\cdot)^4 \widehat{\Phi}_{\lambda}(\cdot) \right\|_{L^{2,-1}(\mathbb{R})}^2 
 = \lambda^4 \int_{-\infty}^{\infty} \frac{\kappa^8}{1+\kappa^2} \chi\left(|\kappa|<\lambda^{r-1}\right) \left| \widehat{\Phi}_{\lambda}(\kappa)\right|^2 d\kappa \\
 =\lambda^4 \int_{-\infty}^{\infty} \frac{\kappa^8}{(1+\kappa^2)^2} \chi\left(|\kappa|<\lambda^{r-1}\right) (1+\kappa^2) \left| \widehat{\Phi}_{\lambda}(\kappa)\right|^2 d\kappa \\ \lesssim \lambda^4 \sup_{|\kappa|<\lambda^{r-1}}\left| \frac{\kappa^8}{(1+\kappa^2)^2} \right| \left\| \widehat{\Phi}_{\lambda} \right\|_{L^{2,1}}^2 \ \lesssim\  \lambda^{4r} \left\| \widehat{\Phi}_{\lambda} \right\|_{L^{2,1}}^2.
\end{multline*}
Therefore,
\begin{align}
\left\| \chi\left(|\kappa|<\lambda^{r-1}\right) R_0\left[\widetilde\psi_\nr;\lambda\right](\lambda\kappa,k') \right\|_{L^{2,-1}} &\equiv \left\| \chi\left(|\kappa|<\lambda^{r-1}\right)\frac{1}{4!}\partial_k^4 E_{b_*}(k') \lambda^2 \kappa^4 \widetilde{\Phi}_{\lambda}(\kappa) \right\|_{L^{2,-1}} \nn\\
&\lesssim\ \lambda^{2r}\ 
\sup_{|k'|<\lambda^{r}} |\partial_k^4 E_{b_*}(k')|\ \left\| \widehat{\Phi}_{\lambda} \right\|_{L^{2,1}}.
\label{est(II)}\end{align}

\item[{\bf $(III)$\ Estimation of 
$\chi\left(|\kappa|<\lambda^{r-1}\right) R_1\left[\widetilde\psi_\nr;\lambda\right](\lambda\kappa)$, given in~\eqref{R1def}:}] 

Recall
\begin{align}
R_1\left[\widetilde\psi_\nr;\lambda\right](k) &= \left\langle p_{b_*}(\cdot;0),\mathcal{T}\left\{V\mathcal{E}_1 \right\}(\cdot,k)\right\rangle_{L^2([0,1])} \nn\\
& \qquad + \left\langle p_{b_*}(\cdot;k) - p_{b_*}(\cdot;0),\mathcal{T}\left\{ V \psi_\nr \right\}(\cdot,k) \right\rangle_{L^2([0,1])} .\label{R1defA}
\end{align}
where $\mathcal{E}_1 \equiv \mathcal{T}^{-1}\left\{ \widetilde{\psi}_\nr(k) \big(p_{b_*}(x;k) - p_{b_*}(x;0)\big ) \right\} $.

Let us first obtain an estimate on $\mathcal{E}_1$. Using Taylor expansion of $p_{b_*}(x;\cdot)$ around $0$, one has
\begin{align*}
\left|\mathcal{E}_1(x)\right| \ &= \ \left| \int_{-1/2}^{1/2}e^{2\pi i kx} \widetilde{\psi}_\nr(k) \big(p_{b_*}(x;k) - p_{b_*}(x;0)\big ) \ dk \right| \\
\ &\leq \ \sup_{x\in\RR,|k'|<\lambda^r}\vert \partial_k p_{b_*}(x;k') \vert \int_{-\infty}^{\infty}| k \chi\left(|k|<\lambda^{r}\right) \widetilde{\psi}_\nr(k) | \ dk \\
\ &\leq \ \lambda \sup_{|k'|<\lambda^r}\Vert \partial_k p_{b_*}(\cdot;k') \Vert_{L^\infty} \int_{-\infty}^{\infty}| \kappa \chi\left(|\kappa|<\lambda^{r-1}\right) \widehat{\Phi}_\lambda(\kappa) | \ d\kappa \\
\ &\leq \ \lambda \sup_{|k'|<\lambda^r}\Vert \partial_k p_{b_*}(\cdot;k') \Vert_{L^\infty} \left(\int_{|\kappa|<\lambda^{r-1}}\frac{\kappa^2}{1+\kappa^2} \ d\kappa\right)^{1/2} \left\|\widehat{\Phi}_\lambda \right\|_{L^{2,1}} \\
\ &\leq \ 2\lambda^{\frac{1+r}2}\sup_{|k'|<\lambda^r}\Vert \partial_k p_{b_*}(\cdot;k') \Vert_{L^\infty}\left\|\widehat{\Phi}_\lambda \right\|_{L^{2,1}} ,
\end{align*}
so that we deduce
\begin{equation}\label{estE1}
\big\|\mathcal{E}_1\big\|_{L^\infty} \ \leq \ 2\lambda^{\frac{1+r}2}\sup_{|k'|<\lambda^r} \|\partial_k p_{b_*}(x;k')\|_{L^{\infty}} \left\|\widehat{\Phi}_\lambda \right\|_{L^{2,1}} .
\end{equation}

Estimation of the first term of~\eqref{R1defA} is as follows. One has
\begin{multline*}
\left\| \chi\left(|\kappa|<\lambda^{r-1}\right) \left\langle p_{b_*}(\cdot;0),\mathcal{T}\left\{V\mathcal{E}_1\right\}(\cdot,\lambda\kappa)\right\rangle_{L^2([0,1])} \right\|_{L^{2,-1}_\kappa}^2 = \\
\int_{-\infty}^{\infty} \frac{\chi\left(|\kappa|<\lambda^{r-1}\right)}{1+\kappa^2}\left|\left\langle p_{b_*}(\cdot;0),\mathcal{T}\left\{V\mathcal{E}_1\right\}(\cdot,\lambda\kappa)\right\rangle_{L^2([0,1])}\right|^2 d\kappa.
\end{multline*}
Turning to the integrand of the above expression, we rewrite the inner product
\begin{align*}
\left\langle p_{b_*}(\cdot;0),\mathcal{T}\left\{V\mathcal{E}_1\right\}(\cdot,\lambda\kappa)\right\rangle_{L^2([0,1])} &= \int_{0}^1 \mathcal{T}\left\{ p_{b_*}(\cdot;0) \mathcal{E}_1(\cdot) V(\cdot)\right\}(x;\lambda\kappa)\\
& = \int_{0}^1 \sum_{n\in\mathbb Z}e^{2\pi i n x}\mathcal F\left\{ p_{b_*}(\cdot;0) \mathcal{E}_1(\cdot) V(\cdot)\right\} (\lambda\kappa+n)\ dx \\
& = \mathcal F\left\{ p_{b_*}(\cdot;0) \mathcal{E}_1(\cdot) V(\cdot)\right\} (\lambda\kappa),
\end{align*}
where we used that $p_{b_*}(x;0)$ is $1-$periodic, so that it commutes with $\mathcal{T}$, and the Poisson summation formula. It follows that
\[
\left|\left\langle p_{b_*}(\cdot;0),\mathcal{T}\left\{V\mathcal{E}_1\right\}(\cdot,\lambda\kappa)\right\rangle_{L^2([0,1])}\right|\leq \big\|p_{b_*}(\cdot;0) \mathcal{E}_1(\cdot) V(\cdot)\big\|_{L^1}
\leq \big\|\mathcal{E}_1\big\|_{L^\infty}\int |p_{b_*}(x;0)| |V(x)|\ dx.
\]
Using~\eqref{estE1}, one deduces
\begin{equation}\label{estR11}
\left\| \chi\left(|\kappa|<\lambda^{r-1}\right) \left\langle p_{b_*}(\cdot;0),\mathcal{T}\left\{V\mathcal{E}_1\right\}(\cdot,\lambda\kappa)\right\rangle_{L^2([0,1])} \right\|_{L^{2,-1}_\kappa} \ \leq \ C\lambda^{\frac{1+r}{2}}\left\|\widehat{\Phi}_\lambda \right\|_{L^{2,1}} ,
\end{equation}
with $C=C(\sup_{|k|<\lambda^r}\|\partial_k p_{b_*}(\cdot;k)\|_{L^{\infty}},\int |p_{b_*}(x;0)| |V(x)|\ dx)$.
\medskip

The last term in~\eqref{R1defA} is estimated as follows. Note that
\begin{align*}
&\left| \left\langle p_{b_*}(\cdot;\lambda\kappa) - p_{b_*}(\cdot;0) , \mathcal{T}\left\{ V \psi_\nr \right\}(\cdot,\lambda\kappa) \right\rangle_{L^2([0,1])}\right| \\
&\qquad = \left| \int_0^1 \big(p_{b_*}(x;\lambda\kappa) - p_{b_*}(x;0)\big)\sum_{n\in \mathbb{Z}} e^{2\pi inx} \mathcal{F}\{ V \psi_\nr  \}(\lambda\kappa+n) dx \right| \\
&\qquad = \left| \int_0^1 \big(p_{b_*}(x;\lambda\kappa) - p_{b_*}(x;0)\big)\sum_{n\in \mathbb{Z}}( V \psi_\nr ) (x+n) e^{-2\pi i (\lambda\kappa+n)x} dx \right| \\
&\qquad \leq \int_{-\infty}^{\infty} \left| \big(p_{b_*}(x;\lambda\kappa) - p_{b_*}(x;0)\big)  V(x) \psi_\nr (x)\right|\ dx  \\
&\qquad \leq \lambda \kappa \sup_{|k'|<\lambda^r} \|\partial_k p_{b_*}(\cdot;k')\|_{L^{\infty}} \| \psi_\nr \|_{L^{\infty}} \|V\|_{L^1} ,
\end{align*}
where we used the Poisson summation formula along with the periodicity of $p_{b_*}(x;\lambda\kappa) - p_{b_*}(x;0)$ and its Taylor expansion as $|\lambda\kappa|<\lambda^r$.
Now, note that 
\[\| \psi_\nr \|_{L^{\infty}} = \| \mathcal{T}^{-1}\big\{ \widetilde{\psi}_\nr (k)p_{b_*}(x;k) \big\} \|_{L^{\infty}} \leq \sup_{|k|<\lambda^r}\| p_{b_*}(\cdot;k)\|_{L^{\infty}} \int_{-\lambda^r}^{\lambda^r} | \widetilde{\psi}_\nr(l)|\ dl \]
and
\[ \int_{-\infty}^\infty | \widetilde{\psi}_\nr(l)|\ dl \ = \ \int_{-\infty}^\infty | \widehat{\Phi}_\lambda(\eta)|\ d\eta 
 \ = \ \int_{-\infty}^\infty \frac{1}{(1+\eta^2)^{1/2}}(1+\eta^2)^{1/2}|\widehat{\Phi}_\lambda(\eta)|\ d\eta\leq C\left\| \widehat{\Phi}_\lambda \right\|_{L^{2,1}}.
\]
It follows
\begin{multline}
\left\| \chi\left(|\kappa|<\lambda^{r-1}\right) \left\langle p_{b_*}(\cdot;\lambda\kappa) - p_{b_*}(\cdot;0) , \mathcal{T}\left\{ V \psi_\nr \right\}(\cdot,\lambda\kappa) \right\rangle_{L^2([0,1])}\right\|_{L^{2,-1}_\kappa}  \\
 \leq \ C\lambda \left\|\widehat{\Phi}_\lambda \right\|_{L^{2,1}}\left(\int \frac{\kappa^2 \chi\left(|\kappa|<\lambda^{r-1}\right)}{1+\kappa^2} \right)^{1/2}\lesssim \lambda^{\frac{1+r}2}\left\|\widehat{\Phi}_\lambda\right\|_{L^{2,1}} ,\label{estR12}
\end{multline}
with $C=C\left(\sup_{|k|<\lambda^r}\| p_{b_*}(\cdot;k)\|_{L^{\infty}} ,\sup_{|k'|<\lambda^r}\|\partial_kp_{b_*}(\cdot;k')\|_{L^{\infty}}, \ \|V\|_{L^1}\right)$.
\medskip

Estimates~\eqref{estR11} and~\eqref{estR12} yield
\begin{equation}\label{est(III)}
\left\| \chi\left(|\kappa|<\lambda^{r-1}\right) R_1[\widetilde\psi_\nr](\lambda\kappa) \right\|_{L^{2,-1}} \leq C\left\|\widehat{\Phi}_{\lambda} \right\|_{L^{2,1}}\ \lambda^{\frac{1+r}{2}} .
\end{equation}
with $C=C\left(\sup_{|k|<\lambda^r}\| p_{b_*}(\cdot;k)\|_{L^{\infty}} ,\sup_{|k|<\lambda^r}\|\partial_kp_{b_*}(\cdot;k)\|_{L^{\infty}}, \ \|V\|_{L^1}\right) $.
\medskip

\item[{\bf $(IV)$ Estimation of $\chi\left(|\kappa|<\lambda^{r-1}\right) R_2[\widetilde\psi_\nr](\lambda\kappa)$, given in~\eqref{R2def}:}] Recall
\[
R_2\left[\widetilde\psi_\nr\right](k) \ = \ \int_{-\infty}^{\infty} dx\ |p_{b_*}(x;0)|^2 \ V(x)  \int_{-\infty}^{\infty} \big(e^{2i\pi (l-k)x}-1\big)\widetilde{\psi}_\nr(l)dl.
\]
We now use that $\left| e^{2i\pi (l-k)x}-1\right| \leq 2\pi |l-k||x|$. It follows
\[
\left| R_2\left[\widetilde\psi_\nr\right](\lambda\kappa)\right| \ \leq \ 2\pi \lambda \int_{-\infty}^{\infty} dx\ |p_{b_*}(x;0)|^2 \ |x\ V(x)|  \int_{-\infty}^{\infty} |\kappa-\eta| |\widehat{\Phi}_{\lambda}(\eta)|d\eta.
\]
We therefore define
\begin{equation}
\mathcal{I}(\kappa)\ =\ - \chi\left(|\kappa|<\lambda^{r-1}\right) \
 \int_{-\infty}^{\infty}|\kappa-\eta|\chi\left(|\eta|<\lambda^{r-1}\right)|\widehat{\Phi}_{\lambda}(\eta)|d\eta.
 \label{Ikappa}
 \end{equation}
 
The integral, $\mathcal{I}(\kappa)$, is bounded in $L^{2,-1}(\mathbb{R})$ as follows:
\begin{align*}
\left\| \mathcal{I} \right\|_{L^{2,-1}}^2 
& \leq   \int_{-\infty}^{\infty} \frac{\chi\left(|\kappa|<\lambda^{r-1}\right)}{1+\kappa^2} \int_{|\eta|<\lambda^{r-1}} \frac{|\kappa-\eta|^2}{1+\eta^2}d\eta d\kappa \left\|\widehat{\Phi}_{\lambda} \right\|_{L^{2,1}}^2 \\
& =   \left\|\widehat{\Phi}_{\lambda} \right\|_{L^{2,1}}^2 \int_{\kappa}\int_{\eta} \frac{|\kappa-\eta|^2}{(1+\kappa^2)(1+\eta^2)}\chi\left(|\kappa|<\lambda^{r-1}\right)\chi\left(|\eta|<\lambda^{r-1}\right)d\kappa d\eta.
\end{align*}
One easily checks that
\[
\int_{\kappa}\int_{\eta} \frac{|\kappa-\eta|^2}{(1+\kappa^2)(1+\eta^2)}\chi\left(|\kappa|<\lambda^{r-1}\right)\chi\left(|\eta|<\lambda^{r-1}\right)d\kappa d\eta \ \lesssim\ \lambda^{r-1},
\]
so that one obtains eventually
\begin{equation}\label{est(IV)}
\left\| \chi\left(|\kappa|<\lambda^{r-1}\right) R_2[\widetilde\psi_\nr](\lambda\kappa) \right\|_{L^{2,-1}} \leq C \left\|\widehat{\Phi}_{\lambda} \right\|_{L^{2,1}}\ \lambda^{\frac{r+1}{2}} ,
\end{equation}
with $C=C\left(\sup_{|k|<\lambda^r}\| p_{b_*}(\cdot;k)\|_{L^{\infty}} , \big\Vert x V(x) \big\Vert_{L^1_x} \right)$.
\bigskip

Altogether,~\eqref{est(I)},~\eqref{est(II)},~\eqref{est(III)}, and~\eqref{est(IV)} yield the estimate of Proposition~\ref{prop:R_per_bound}.
\end{proof}

\subsection{Completion of the proof of Theorem~\ref{thm:per_result}}

We now prove Theorem~\ref{thm:per_result} by an application of Lemma~\ref{lem:technical} to equation~\eqref{eq:near_per_rescaled}, where the remainder is estimated in Proposition~\ref{prop:R_per_bound}.

\begin{proof}[Proof of Theorem~\ref{thm:per_result}] 
We seek $E^\lambda\equiv E_{b_*}(0)-\lambda^2\theta^2$ and $\psi^\lambda$ of the form
\begin{align*}
\psi^\lambda&= \psi_\nr + \psi_\fr = \mathcal{T}^{-1}\left\{\widetilde{\psi}_\nr(k)p_{b_*(x;k)}\right\} + \mathcal{T}^{-1}\left\{\sum_{b=0}^\infty\widetilde{\psi}_{\fr,b}(k)p_b(x;k)\right\} \\
&=\int_{-1/2}^{1/2} \left( \widetilde{\psi}_\nr(k)u_{b_*}(x;k)+\sum_{b=0}^\infty \widetilde{\psi}_{\fr,b}(k)u_b(x;k) \right) \ dk.
\end{align*}
where $\widetilde{\psi}_\nr$, $\widetilde{\psi}_\fr$ satisfy equations~\eqref{eq:near_per_prime}--\eqref{eq:far_per_prime}; see Section~\ref{sec:far_near_per}. 

By application of Proposition~\ref{prop:far_per_intermsof_near_per}, one has that $\psi_\fr$ is uniquely defined as a function of $\psi_\nr$ and $\lambda$, and that
$ \big\Vert \psi_\fr[\psi_\nr;\lambda]\big\Vert_{H^2} \ \leq \ \lambda^{1-2r}\big\Vert \psi_\nr\big\Vert_{L^2}.$
Then, defining $\widehat{\Phi}_{\lambda}$ as in~\eqref{phi2Phi}, one has
 \begin{equation}
 \widetilde{\psi}_\nr(k) = \frac{1}{\lambda}\widehat{\Phi}_{\lambda}\left(\frac{k}{\lambda}\right)\ =\ \frac{1}{\lambda}\widehat{\Phi}_{\lambda}\left(\kappa\right), \qquad k=\lambda\kappa.
 \label{phi2Phi_completion}
 \end{equation}
By Proposition~\ref{prop:R_per_bound},  the rescaled (from~\eqref{eq:near_per_prime}) near-frequency equation~\eqref{eq:near_per_rescaled} can be written as
 \begin{multline}\label{eq:near_per_rescaled_completion}
\left(\frac12\partial_k^2 E_{b_*}(0)\kappa^2 + \theta^2\right) \chi_{_{\lambda^{r-1}}}(\kappa)\widehat{\Phi}_{\lambda}(\kappa) +  \chi_{_{\lambda^{r-1}}}(\kappa) \left(\int_\RR |p_{b_*}(\cdot;0)|^2 V\right) \int_\RR
\chi_{_{\lambda^{r-1}}}(\eta)\widehat{\Phi}_{\lambda}(\eta)d\eta \\
= - \chi\left(|\kappa|<\lambda^{r-1}\right) \mathcal{R}\left(\widehat{\Phi}_{\lambda}\right)(\kappa),
\end{multline}
with $\big\Vert \mathcal{R}\big(\widehat{\Phi}_{\lambda}\big) \big\Vert_{L^{2,-1}} 
\ \le\ C\lambda^{\alpha(r)}\ \big\Vert \widehat{\Phi}_{\lambda} \big\Vert_{L^{2,1}}, $
and $\alpha(r)=\max(\frac{1}{2}-2r,2r,\frac{r+1}{2})$.

From now on, we set $r=1/8$, $\alpha=1/4$, which yield optimal estimates.
Applying Lemma~\ref{lem:technical} with $\beta=1-r=7/8$,
\begin{equation}
A\ =\ \frac{1}{8\pi^2}\partial_k^2 E_{b_*}(0)\ \text{ and }\ B\ =\ -\int_{-\infty}^{\infty} |u_{b_*}(x;0)|^2 V(x) dx\quad \Big(\textrm{assumed to be positive}\Big),
\label{A1BhatV_per}
\end{equation}
 we deduce that there exists a solution $\left( \theta^2, \widehat{\Phi}_{\lambda} \right)$ of the rescaled near-frequency equation~\eqref{eq:near_per_rescaled_completion}, satisfying
\begin{equation}\label{eq:bounds_rescaled_near_completion}
\|\widehat{\Phi}_{\lambda} - \widehat{f}_0\|_{L^{2,1}}\ \leq\ C\ \lambda^{\frac14} \qquad \text{ and } \qquad |\theta^2 - \theta_0^2|\ \leq\ C\ \lambda^{\frac14}\ . 
\end{equation} 
Here $\left( \theta_0^2, \widehat{f}_0 \right)$ is a solution of the homogeneous equation
\[
\widehat{\mathcal{L}} _{0,\lambda}\left( \theta_0,\widehat{f}_0 \right) = (4\pi^2 A \xi^2 + \theta^2) \widehat{f}_0 -B\ \chi\left(|\xi|<\lambda^{-\frac78}\right) \int_{-\infty}^{\infty}\chi\left(|\eta|<\lambda^{-\frac78}\right)\widehat{f}_0(\eta)d\eta = 0,
\]
as described in Lemma~\ref{lem:homogeneous}. 
Thus $\widetilde{\psi}_\nr(\xi) \ = \ \frac{1}{\lambda} \widehat{\Phi}_{\lambda}\left(\frac{\xi}{\lambda}\right)$ and $E^\lambda = E_{b_*}(0) - \lambda^2\theta^2(\lambda)$ are well-defined (and satisfy the Ansatz of Lemma~\ref{lem:R0}), and $\widetilde{\psi}_\fr$ is uniquely determined as the solution of~\eqref{eq:far_per_prime}; see Lemma~\ref{prop:far_per_intermsof_near_per}. It follows that
\begin{equation}
\psi^\lambda(x) \ \equiv \ \psi_\fr \ + \ \psi_\nr  \ \ \equiv \ \psi_\fr \ + \ \int_{-1/2}^{1/2} \widetilde{\psi}_\nr(k)u_{b_*}(x;k) \ dk \label{def-psi-lambda}
\end{equation} 
is well-defined.

There remains to prove estimates~\eqref{Elambda-Qper} and~\eqref{psilambda-Qper}.  Recalling that $E^\lambda\ = \ E_{b_*}(0) \ - \ \lambda^2\theta^2$,~\eqref{eq:bounds_rescaled_near_completion} implies
$
 \left| E^\lambda -  (E_{b_*}(0)  -  \lambda^2 \theta_0^2)  \right|  \leq  C \lambda^{2+1/4}
$.
By Lemma~\ref{lem:homogeneous}, one has 
$ \left|\theta_0(\lambda)-\frac{B}{2\sqrt{A}}\right| \ \leq \ C(A,B)\lambda^{\frac78},$
so that one can set
 \[ E_2\ \equiv\ -\frac{B^2}{4A} \ = \ -\ \frac{ \left| \int_{-\infty}^{\infty} |u_{b_*}(x;k_*)|^2 V(x)dx\right|^2}{\frac1{2\pi^2}\partial_k^2 E_{b_*}(k_*) };\]
 and estimate~\eqref{Elambda-Qper} follows.

We now turn to a proof of the eigenfunction approximation~\eqref{psilambda-Qper}.
Recall
\begin{align*}\psi_\nr(x) \ &\equiv \ \int_{-1/2}^{1/2} \widetilde{\psi}_\nr(k)u_{b_*}(x;k) \ = \ \int_{-1/2}^{1/2} \frac{1}{\lambda} \widehat{\Phi}_{\lambda}\left(\frac{k}{\lambda}\right)e^{2\pi i k x }p_{b_*}(x;k) \ dk \\
&= \  \int_{-1/2\lambda}^{1/2\lambda} \chi\left(|\xi|<\lambda^{-\frac78}\right)\widehat{\Phi}_{\lambda}\left(\xi\right)e^{2\pi i \lambda\xi x }p_{b_*}(x;\lambda\xi) \ d\xi\\
&= \int_{\RR} \chi\left(|\xi|<\lambda^{-\frac78}\right)\widehat{\Phi}_{\lambda}\left(\xi\right)e^{2\pi i \lambda\xi x }p_{b_*}(x;0) \ d\xi\\
& \qquad + \int_{\RR} \chi\left(|\xi|<\lambda^{-\frac78}\right)\widehat{\Phi}_{\lambda}\left(\xi\right)e^{2\pi i \lambda\xi x }(\lambda\xi)\partial_k p_{b_*}(x;k') \ d\xi\\
&= u_{b_*}(x;0) \int_{\RR} \chi\left(|\xi|<\lambda^{-\frac78}\right)\widehat{f}_0(\xi)e^{2\pi i \lambda\xi x } \ d\xi \\
&\qquad + u_{b_*}(x;0) \int_{\RR} \chi\left(|\xi|<\lambda^{-\frac78}\right)\left(\widehat{\Phi}_{\lambda}-\widehat{f}_0\right)(\xi)e^{2\pi i \lambda\xi x } \ d\xi \\
& \qquad + \int_{\RR} \chi\left(|\xi|<\lambda^{-\frac78}\right)\widehat{\Phi}_{\lambda}\left(\xi\right)e^{2\pi i \lambda\xi x }(\lambda\xi)\partial_k p_{b_*}(x;k') \ d\xi\\
&=I_1(x)+I_2(x)+I_3(x),
\end{align*}
with $|k'|=|k'(\lambda\xi)|<\lambda^{\frac18}$.
 Now, since $\chi\left(|\xi|<\lambda^{-\frac78}\right)\widehat{f}_0(\xi)=\widehat{f}_0(\xi)$, one has
\[ I_1(x) \ \equiv \ u_{b_*}(x;0) \mathcal{F}^{-1}\left\{\widehat{f}_0\right\}(\lambda x) .
\] 
By~\eqref{eq:asymptotic-f0} in Lemma~\ref{lem:homogeneous}, one has 
\begin{align} 
&\sup_{x\in\RR} \left\vert\ I_1(x) \ - \ \frac{1}{B}u_{b_*}(x;0)\exp\left(-\frac{\lambda B}{2A}|x|\right)\ \right\vert \nn  \\
&\qquad  = \ \sup_{x\in\RR} \left\vert\ u_{b_*}(x;0)\left\{\mathcal{F}^{-1}\left\{ \widehat{f}_0 \right\}(\lambda x) -\frac{1}{B}\exp\left(-\frac{\lambda B}{2A}|x|\right)\right\}\ \right\vert 
 \leq \ C \|p_{b_*}(\cdot;0)\|_{L^{\infty}} \ \lambda^{7/8}.
\label{eq:I1}\end{align}
and $ \|p_{b_*}(\cdot;0)\|_{L^{\infty}}$ is bounded; see Lemma~\ref{lem:estimates}.

Let us now estimate $I_2(x)$ and $I_3(x)$. One has 
\begin{align}
|I_2(x)| \ &\equiv \ \left| u_{b_*}(x;0) \int_{\RR} \chi\left(|\xi|<\lambda^{-\frac78}\right)\left(\widehat{\Phi}_{\lambda}-\widehat{f}_0\right)(\xi)e^{2\pi i \lambda\xi x } \ d\xi \right|\nn \\
&\leq | p_{b_*}(x;0) | \int_{\RR} \frac{ \chi\left(|\xi|<\lambda^{-\frac78}\right)}{(1+|\xi|^2)^{1/2}}(1+|\xi|^2)^{1/2}\left|\widehat{\Phi}_{\lambda}(\xi)-\widehat{f}_0(\xi)\right|\ d\xi \nn \\
& \leq \ C \| p_{b_*}(\cdot;0) \|_{L^{\infty}} \left\|\widehat{\Phi}_{\lambda}-\widehat{f}_0\right\|_{L^{2,1}} \leq \ C(A,B) \| p_{b_*}(\cdot;0) \|_{L^{\infty}} \lambda^{1/4},
\label{eq:I2}\end{align}
where the last inequality comes from~\eqref{eq:bounds_rescaled_near_completion}.
Similarly,
\begin{align}
|I_3(x)| \ &\equiv \ \left| \int_{\RR} \chi\left(|\xi|<\lambda^{-\frac78}\right)\widehat{\Phi}_{\lambda}\left(\xi\right)e^{2\pi i \lambda\xi x }(\lambda\xi)\partial_k p_{b_*}(x;k') \ d\xi\right| \nn \\
&\leq \lambda \sup_{|k'|<\lambda^{1-7/8}} \| \partial_k p_{b_*}(\cdot;k') \|_{L^{\infty}} \int_{\RR} \chi\left(|\xi|<\lambda^{-\frac78}\right)|\xi| \left|\widehat{\Phi}_{\lambda}(\xi)\right| d\xi \nn \\
& \leq \ C \sup_{|k'|<\lambda^{1/8}} \| \partial_k p_{b_*}(\cdot;k') \|_{L^{\infty}} \left\|\widehat{\Phi}_{\lambda}\right\|_{L^{2,1}} \ \lambda .
\label{eq:I3}\end{align}

By~\eqref{eq:I1},~\eqref{eq:I2} and~\eqref{eq:I3}, one has
\[ \psi_{\nr} \ = \ I_1(x)+I_2(x)+I_3(x) \ = \ \frac{2}{B}u_{b_*}(x;0)\exp\left(\frac{-\lambda B}{2A}|x|\right) \ + \ \psi_{\text{rem}}(x), \]
with $\|\psi_{\text{rem}}\|_{L^{\infty}} \lesssim \lambda^{1/4}$.

Finally, let us note that by Sobolev embeddings, one has
\[\|\psi_\fr \|_{L^\infty} \ \leq \ \| \psi_\fr \|_{H^2} \leq C\lambda^{1-1/4} \| \psi_\nr \|_{L^2}=C \lambda^{1/2-1/4} \| \widehat{\Phi}_{\lambda}\|_{L^2}\leq C \lambda^{1/4} ,
\]
where we use Proposition~\ref{prop:far_per_intermsof_near_per} with $r=1/8$, and~\eqref{eq:norm-equivalence}.

It follows that $\psi^\lambda=\psi_\nr+\psi_\fr$ satisfies
\[\sup_{x\in\mathbb R}\ \left|\ \psi^\lambda(x) \ - \ \frac1B u_{b_*}(x;0)\exp(\lambda\alpha_0|x|)\ \right| \ \leq \ C \lambda^{1/4} \ , \quad \text{with}\ \alpha_0=-\frac{B}{2A} . \]
Since $\psi^\lambda$ is defined up to a multiplicative constant,~\eqref{psilambda-Qper} holds.
This completes the proof of Theorem~\ref{thm:per_result}.
\end{proof}

\appendix

\section{General properties of $E_b(k)$ and derivatives $\partial_k^jE_b(k_*)$, where $E_b(k_*)$ is the endpoint of a spectral band}\label{pf-of-lem:band-edge}

To make our discussion more self-contained, we prove Lemma~\ref{lem:band-edge}, which concerns the spectrum of the eigenvalue problem, for $E$ fixed, 
\begin{equation}\label{eq:e-value-problem}
\left( -\partial_x^2 + Q(x) \right)\psi(x;E) = E\psi(x;E), \quad Q(x+1) = Q(x), 
\end{equation}
with solutions which satisfy 
\[
\psi(x+1;E) = \rho \psi(x;E)\ \qquad \rho\in\CC.
\]

Let $\phi_1(x;E)$ and $\phi_2(x;E)$ be two linearly independent solutions of~\eqref{eq:e-value-problem} such that 
\begin{align*}
\phi_1(0;E) &= 1, &  \phi_2(0;E) &= 0, \\
\phi_1'(0;E) &= 0, &  \phi_2'(0;E) &= 1.
\end{align*}
The functions $\phi_1(x+1;E)$ and $\phi_2(x+1;E)$ are two other linearly independent solutions to~\eqref{eq:e-value-problem}, so that we can write 
\begin{align}
\phi_1(x+1;E) & = A_{11}\phi_1(x;E) + A_{12}\phi_2(x;E), \label{eq:phi-1-per} \\
\phi_2(x+1;E) & = A_{21}\phi_1(x;E) + A_{22}\phi_2(x;E). \label{eq:phi-2-per}
\end{align}
Note that the matrix $\left( A_{ij} \right)$ is nonsingular. In general, every solution of~\eqref{eq:e-value-problem} has the form 
\begin{equation}
\psi(x;E) = c_1 \phi_1(x;E) + c_2 \phi_2(x;E). 
\end{equation}
As we are specifically interested in solutions which satisfy $\psi(x+1;E) = \rho \psi(x;E)$, one has the following identity
\begin{align}
\psi(x+1;E) = \rho\psi(x;E) \Leftrightarrow & \ c_1(\phi_1(x+1;E)-\rho\phi_1(x;E)) + c_2 (\phi_2(x+1;E)-\rho\phi_2(x;E)) = 0 \nonumber \\ 
 \Leftrightarrow & \left( c_1 \left(A_{11} - \rho \right) + c_2 A_{21} \right) \phi_1(x;E) + \left( c_1 A_{12} + c_2 \left( A_{22} - \rho \right) \right) \phi_2(x;E) = 0 \nonumber \\
 \Rightarrow & 
\begin{cases}
c_1 \left(A_{11} - \rho \right) + c_2 A_{21} = 0 ,\\
c_1 A_{12} + c_2 \left( A_{22} - \rho \right) = 0.
\end{cases} \label{cond:solvability}
\end{align}
The solvability condition~\eqref{cond:solvability} is satisfied for nontrivial $c_1$ and $c_2$ if 
\begin{equation}
\det(A - \rho I) = 0, \quad \text{i.e.}\quad \rho^2 - \left( A_{11} + A_{22} \right)\rho + \det(A) = 0.
\end{equation}
Using that the Wronskian, $W\left[\phi_1,\phi_2\right](x;E)\equiv \phi_1(x;E)\phi_2'(x;E)-\phi_1'(x;E)\phi_2(x;E)$, is constant with respect to $x$, one has 

\[
\det(A) = W\left[\phi_1,\phi_2\right](1;E) =W\left[\phi_1,\phi_2\right](0;E)= 1.
\]

Therefore $\rho$ must satisfy $\rho^2 - D(E)\rho + 1 = 0$, where we define the discriminant
\begin{equation}\label{eq:D-def}
D(E)\ \equiv \ A_{11} +A_{22}\ =\ \phi_1(1;E)+\phi_2'(1;E) \ .
\end{equation}

We note that the two solutions of the equation $\rho^2 - D(E)\rho + 1 = 0$ satisfy $|\rho|\leq1$ if and only if the discriminant $|D(E)|\leq2$. In that case, one can write $\rho = e^{\pm 2\pi ik}$, with $k\in (-1/2, 1/2]$, and 
\begin{equation}\label{eq:D-behaviour}
D(E) = 2\cos(2\pi k).
\end{equation}
As $|\rho|=1$, $\psi(x;E)$ is a bounded solution to~\eqref{eq:e-value-problem}, and $E=E_b(k)$ is in the continuous spectrum of $H_Q\equiv -\frac{d^2}{dx^2}+Q$. More precisely, for $E=E_b(k)$, one has
\[ \psi(x;E_b(k)) \ =\ u_b(x;k) \ = \ e^{2\pi ikx}p_b(x;k), \quad p_b(x+1;k) = p_b(x;k),\]
where $\left\{ E_b(k), p_b(x;k) \right\}_{b\geq 0}$ is the eigenpair solution to~\eqref{eq:e-value}, as defined in Section~\ref{sec:background}.

Let us now rewrite Lemma~\ref{lem:band-edge} which states some of the properties associated with the stability bands.

\begin{Lemma}[Lemma~\ref{lem:band-edge}]\label{lem:band-edge-in Appendix}
Assume $E_b(k_*)$ is an endpoint of a spectral band of $-\partial_x^2 + Q(x)$, which borders on a spectral gap;
 see \eqref{whatisgap}.
Then $k_*\in\{0,1/2\}$ and the following results hold:
\begin{enumerate}
\item $E_b(k_*)$ is a simple eigenvalue of the eigenvalue problem~\eqref{k-evp}.
\item \subitem $b$ even: $E_b(0)$ corresponds to the {\em left (lowermost)} end point of the band,
\subitem \phantom{$b$ even:} $E_b(1/2)$ corresponds to the {\em right (uppermost)} end point.
\subitem $b$ odd: $E_b(0)$ corresponds to the {\em right (uppermost)} end point of the band,
\subitem \phantom{$b$ odd:} $E_b(1/2)$ corresponds to the {\em left (lowermost)}  end point.
\item $\partial_k E_b(k_*) = 0$,\ $\partial_k^3E_b(k_*) = 0$;
\item \subitem $b$ even: $\partial_k^2 E_b(0) > 0$, $\partial_k^2 E_b(1/2) < 0$;
\subitem $b$ odd: $\partial_k^2 E_b(0) < 0$, $\partial_k^2 E_b(1/2) > 0$; 
\end{enumerate}
\end{Lemma}

The proof of Lemma~\ref{lem:band-edge-in Appendix} is a consequence of the following result, concerning the problem~\eqref{eq:e-value-problem}, and which is proved in the first two chapters of~\cite{eastham1973spectral} and part I of~\cite{magnus1979hill}.
\begin{figure}[htp]
 \begin{center}
 \includegraphics[width=0.8\textwidth]{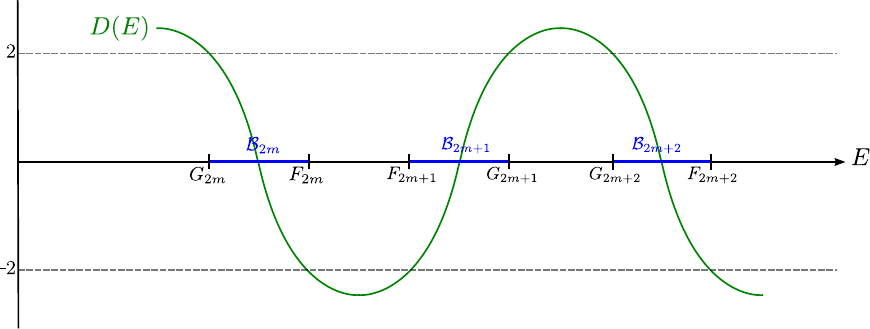}
 \end{center}
 \caption{Sketch of the discriminant, $D(E)$, and stability bands $\mathcal B_b=[G_b,F_b]$.}
\label{fig:discriminant}
 \end{figure}
\begin{Theorem} \label{thm:eastham}
Consider the equation~\eqref{eq:e-value-problem}, and define $D(E)$ with~\eqref{eq:D-def}. Denote the edges of the stability bands as
\[
G_0 < F_0 \leq F_1 < G_1 \leq G_2 < F_2 \leq F_3 < G_3 \dots 
\]
Then the following facts hold (see Figure~\ref{fig:discriminant} for an illustration):
\begin{itemize}
\item[$\mathrm{I}$] In the interval $[G_{2m},F_{2m}]$, $D(E)$ decreases from $2$ to $-2$.

\item[$\mathrm{I'}$] In the interval $(G_{2m},F_{2m})$, $D'(E)<0$.

\item[$\mathrm{II}$] In the interval $[F_{2m+1},G_{2m+1}]$, $D(E)$ increases from $-2$ to $2$.

\item[$\mathrm{II'}$] In the interval $(F_{2m+1},G_{2m+1})$, $D'(E)>0$.

\item[$\mathrm{III}$] In $(-\infty, G_0)$ and $(G_{2m+1},G_{2m+2})$, $D(E) > 2$.

\item[$\mathrm{IV}$] In $(F_{2m},F_{2m+1})$, $D(F) < -2$.

\item[$\mathrm{V}$] $D(E) = \pm 2$ and $D'(E) = 0$ if and only if $E$ is a double eigenvalue. Furthermore, $D''(E)<0$ if $D(E)=2$ and $D''(E)>0$ if $D(E)=-2$. 

\end{itemize}

\end{Theorem}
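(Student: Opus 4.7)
The plan is to study the discriminant $D(E) = \phi_1(1;E) + \phi_2'(1;E)$ as an entire function of $E$, combining three standard tools: (a) a Green's-identity computation of $D'(E)$ that pins down its sign on each open stability band, (b) Sturm--Liouville oscillation theory for the periodic ($\rho = 1$) and antiperiodic ($\rho = -1$) eigenvalue problems on $[0,1]$, and (c) large-$|E|$ WKB/Pr\"ufer asymptotics. I would assemble the claims in the order (I$'$), (II$'$), then (I), (II), then (III), (IV), and finally (V).

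For (I$'$) and (II$'$), I would differentiate the equation $-\phi'' + Q\phi = E\phi$ with respect to $E$ and integrate by parts, obtaining identities of the form $\partial_E\phi_i(x;E)\,\phi_j'(x;E) - \partial_E\phi_j'(x;E)\,\phi_i(x;E) = -\int_0^x \phi_i\phi_j\,ds$. Specializing at $x=1$ and summing yields
\[
D'(E) \;=\; -\int_0^1\!\big[\,\phi_2(1;E)\,\phi_1(x;E)^2 + (\phi_1(1;E)-\phi_2'(1;E))\,\phi_1(x;E)\phi_2(x;E) - \phi_1'(1;E)\,\phi_2(x;E)^2\,\big]\,dx ,
\]
and the integrand is a quadratic form in $(\phi_1,\phi_2)$ whose discriminant equals $D(E)^2 - 4$ (using $\det A = 1$). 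Hence on any open band where $|D(E)|<2$ the form is sign-definite and $D'(E)$ never vanishes; tracking the overall sign across the band structure gives (I$'$) and (II$'$).

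To organize the band edges themselves, I would invoke Sturm oscillation theory for the self-adjoint periodic and antiperiodic problems on $[0,1]$: each has a discrete real spectrum, namely the zeros of $D-2$ and $D+2$ respectively, and the $n$-th eigenfunction has exactly $n$ zeros in $[0,1)$. A Pr\"ufer-angle comparison yields the interlacing $G_0 < F_0 \le F_1 < G_1 \le G_2 < F_2 \le F_3 < G_3 \le \cdots$, with equality occurring precisely when the corresponding (anti)periodic eigenvalue has geometric multiplicity two. Combining this interlacing with the monotonicity from (I$'$)/(II$'$) and the boundary values $D=\pm 2$ gives (I) and (II). For (III) and (IV), each spectral gap has endpoints where $D$ takes the same value $\pm 2$ and contains no zero of $D\mp 2$; continuity together with the absence of interior critical points (by the completed (I$'$), (II$'$)) forces $|D| > 2$ of the correct sign throughout the gap. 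The semi-infinite gap $(-\infty, G_0)$ is handled by the WKB estimate $\phi_1(1;E),\phi_2'(1;E) \sim \cosh(\sqrt{-E})$ as $E\to-\infty$, which shows $D(E)\to +\infty$ there.

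Finally (V) is essentially a reformulation of the preceding analysis: if $D(E_0) = \pm 2$ and $D'(E_0) = 0$, then the integrand quadratic form in the $D'(E)$ identity vanishes on the full two-dimensional solution space, which can happen only if both linearly independent solutions satisfy the same (anti)periodic boundary condition, i.e., $E_0$ is a double eigenvalue of that problem; the sign of $D''(E_0)$ is read off from which side of the band $D=\pm 2$ is attained, giving $D''<0$ when $D=+2$ (local maximum) and $D''>0$ when $D=-2$ (local minimum). The main obstacle I expect is step one: carefully extracting from the integral representation of $D'(E)$ the strict negativity on even-indexed bands versus positivity on odd-indexed bands, and reconciling this with the direction in which $\rho = e^{\pm 2\pi i k}$ traverses the unit circle as $E$ sweeps a band; once this sign-tracking is in hand, the remaining pieces are standard Sturm--Liouville and entire-function arguments available in \cite{eastham1973spectral,magnus1979hill}.
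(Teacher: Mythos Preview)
The paper does not actually prove this theorem: it states the result and immediately attributes the proof to the first two chapters of \cite{eastham1973spectral} and part I of \cite{magnus1979hill}, treating it as classical background to be quoted rather than re-derived. Your proposal, by contrast, sketches the genuine argument, and the outline you give---the integral identity for $D'(E)$ via differentiation in $E$ and integration by parts, the observation that the integrand is a quadratic form in $(\phi_1,\phi_2)$ with discriminant $D(E)^2-4$, Sturm oscillation for the periodic and antiperiodic problems to obtain the interlacing of the $G_m$ and $F_m$, and WKB asymptotics as $E\to-\infty$---is precisely the approach carried out in those references. So your sketch is correct and is in fact \emph{more} than what the paper supplies; you are reproducing the textbook proof rather than citing it. The one place to be careful, as you already flag, is the sign-tracking in step one: the quadratic-form argument gives $D'(E)\neq 0$ on each open band, but pinning down which sign goes with even versus odd bands requires either an induction anchored at the bottom band (where $D(G_0)=2$ and $D\to+\infty$ as $E\to-\infty$ forces $D'(G_0^+)<0$) or a direct appeal to the oscillation count; both routes work, but the argument is not complete until one of them is written out.
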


\noindent{\em Proof of Lemma~\ref{lem:band-edge-in Appendix}.} Let us recall that one has from~\eqref{eq:D-behaviour} that the discriminant satisfies
$
D(E_b(k)) = 2\cos(2\pi k).
$
It follows that as $k$ increases continuously from $0$ to $1/2$, $D(E)$ decreases continuously from 2 to -2. Therefore by $\mathrm{I}$ and $\mathrm{II}$, $E_{2m}(k) $ increases continuously from $G_{2m}$ to $F_{2m}$ as $k$ increases continuously from $0$ to $1/2$, and as $k$ decreases continuously from $0$ to $-1/2$. Similarly, $E_{2m+1}(k) $ decreases continuously from $G_{2m+1}$ to $F_{2m+1}$ as $k$ increases continuously from $0$ to $1/2$, and as $k$ decreases continuously from $0$ to $-1/2$. This proves claim $\mathit{2}$. 

We now turn to part $\mathit{1}$. Let $E_b(k_*)$ correspond to a band edge, that is if there exists a gap between the $b^{th}$ band and the closest consecutive one. Without loss of generality, we assume $E_b(k_*)$ to be the lowermost edge of an even band, for example $G_{2m}$ in Figure~\ref{fig:discriminant}. Therefore, for any $\delta > 0$ sufficiently small,
\begin{equation}\label{eq:disc-reqmnts}
D(E_b(k_*) - \delta) > 2 \ \text{ and } \ D(E_b(k_*) + \delta) < 2.
\end{equation}
Assume for the sake of contradiction that $E_b(k_*)$ is a double eigenvalue, which means, by part V of Theorem~\ref{thm:eastham}, that $D'(E_b(k_*)) = 0$ and $D''(E_b(k_*)) < 0$. Now, Taylor expand the discriminant about $E_b(k_*)$,
\begin{align*}
D(E) & = D(E_b(k_*)) + D'(E_b(k_*))(E-E_b(k_*)) + \frac{1}{2}D''(E_b(k_*))(E-E_b(k_*))^2 + \mathcal{O}\left( (E-E_b(k_*))^3 \right) \\
& = -2 + \frac{1}{2}D''(E_b(k_*))(E-E_b(k_*))^2 + \mathcal{O}\left( (E-E_b(k_*))^3 \right).
\end{align*} 
Since $D''(E_b(k_*)) < 0$, we have $D(E_b(k_*) - \delta) \approx 2 + (1/2)D''(E_b(k_*))\delta^2 < 2$, which is a contradiction of~\eqref{eq:disc-reqmnts}. Therefore part $\mathit{1}$ is proven and we have 
that
at the band edges, $E_b(k_*)$, the derivative of the discriminant is nonzero,
\begin{equation}\label{eq:D-deriv-not-zero}
\frac{dD}{dE} (E_b(k_*)) \neq 0.
\end{equation}

To see the first identity in part $\mathit{3}$, note that differentiating $D(E_b(k)) = 2\cos(2\pi k)$ with respect to $k$ yields
$
-4\pi \sin(2\pi k) = \frac{dD}{dE}(E_b(k))  \times  \frac{dE_b}{dk}(k).
$
Using~\eqref{eq:D-deriv-not-zero}, we conclude that $\frac{dE_b}{dk}(k) = 0$ if and only if $k = 0$ or $k=1/2$.

To prove part $\mathit{4}$, we differentiate $D(E_b(k))$ twice with respect to $k$ and evaluate at $k_*$:
\begin{equation*}
-8\pi^2 \cos(2\pi k_*) = \frac{d^2}{dk^2}(D\circ E_b)(k_*) = D''(E_b(k_*))\left( \frac{dE_b}{dk} \right)^2 (k_*) + D'(E_b(k_*))\frac{d^2E_b}{dk^2}(k_*) .
\end{equation*}
Therefore, by $\mathrm{I'}$, $\mathrm{II'}$, and~\eqref{eq:D-deriv-not-zero} we conclude $\mathit{4}$. 

Similarly, to show the second identity of part $\mathit{3}$, we differentiate once more $D(E_b(k))$ with respect to $k$:
\[
16\pi^3\sin(2\pi k)  = D'(E_b(k))\frac{d^3E_b}{dk^3}(k) + 3 D''(E_b(k))\frac{d^2E_b}{dk^2}\frac{d\ E_b}{dk}(k)+ D'''(E_b(k))\left( \frac{dE_b}{dk} \right)^3(k).
\]
Evaluated at $k_*$, we have
$
0 = D'(E_b(k))\frac{d^3E_b}{dk^3}(k) ,
$
which concludes the proof of Lemma~\ref{lem:band-edge} once we again note~\eqref{eq:D-deriv-not-zero}.
\qed

\section{Regularity of $k\mapsto E_b(k)$ and $k\mapsto u_b(x;k)$}\label{regularity-app}

In this section we give a self-contained discussion of the regularity with respect to $k$ of the Floquet-Bloch eigenvalues and eigenstates. 

Consider the $k-$ pseudo-periodic eigenvalue problem for each $k\in(-1/2,1/2]$:
\begin{align}
&\left(-\partial_x^2+Q(x)\right)u(x;k) =\ E u(x;k),\ \ u(x+1;k)=e^{2\pi ik}u(x;k)
\label{k-pseudo}\end{align}
Introducing the Floquet-Bloch phase explicitly via
$u(x;k)\ =\ e^{2\pi ikx}\ p(x;k), $
we obtain the equivalent formulation
\begin{align}
&H_Q(k) p(x;k)\ =\left(-(\partial_x+2\pi ik)^2+Q(x)\right)p(x;k) =\ E p(x;k),\ \ p(x+1;k)=p(x;k)\ .
\label{k-per}
\end{align}
For each $k\in(-1/2,1/2]$, the eigenvalue problem ~\eqref{k-per} (equivalently~\eqref{k-pseudo}) has a discrete sequence of eigenvalues
$E_0(k)\le E_1(k)\le E_2(k) \le \cdots\le E_n(k)\le\cdots\ \ .$

It can be proved, using the min-max characterization of eigenvalues of a self-adjoint operator that 
the maps $k\mapsto E_b(k),\ b=0,1,\dots$, are locally Lipschitz continuous. A proof based on standard perturbation follows from results in~\cite{RS4}. An elementary proof is given in Appendix A of~\cite{FW-dirac}.

In the present paper, we require a Taylor expansion of the $E_b(k)$ near $k=k_*$, for which $E_b(k_*)$ is the endpoint of a spectral band, which borders on a spectral gap. By part $\mathit{5}$ of Lemma~\ref{lem:band-edge-in Appendix}, the eigenvalue $E_b(k_*)$ is simple. We prove the following
\begin{Theorem}\label{reg-in-k}
Suppose $E_*$ is the endpoint of a spectral band of $-\partial_x^2+Q(x)$, which borders on a gap.
 Thus, $E_*=E_b(k_*)$ for $k_*\in\{0,1/2\}$ and the corresponding eigenspace of solutions to~\eqref{k-per} has dimension equal to $1$. We denote the normalized eigenfunction by $p(x;k_*)$;
 \[ \int_0^1|p(y;k_*)|^2 dy\ =\ 1.\]
 Then, there exists $\rho>0$ such that for all complex $k$ in a complex disc centered at $k_*$, $B_\rho(k_*)=\{k\in\mathbb{C}: |k-k_*|<\rho\}$, the following holds:
 \begin{enumerate}
 \item $k\mapsto E_b(k)$ is analytic on $B_\rho$.
 \item There is a map $k\mapsto p_b(x;k)$, such that any eigenvector corresponding to $E_b(k)$ is a multiple of $p_b(x;k)$, where $H_Q(k)p_b(x;k)=E_b(k)p_b(x;k)$.
 \item Moreover, we can choose $k\mapsto p_b(x;k),\ k\in B_\rho$ to be analytic and such that 
\[\int_0^1 | p_b(x;k)|^2 dx\ =\ 1.\]
 \end{enumerate}
 \end{Theorem}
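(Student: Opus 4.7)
The plan is to apply Kato's theory of analytic perturbations of type (A). The key observation is that
\[
H_Q(k) \ = \ -\partial_x^2 \ - \ 4\pi i k\, \partial_x \ + \ 4\pi^2 k^2 \ + \ Q(x)
\]
is a quadratic polynomial in $k$ whose coefficients are differential operators on the fixed domain $H^2_{\rm per}([0,1])$. Hence $k\mapsto H_Q(k)$ extends to an entire family of closed operators on this fixed domain, which is a self-adjoint holomorphic family of type (A) for $k\in\mathbb{R}$. This rigid domain structure is what allows complex-analytic perturbation theory to be applied directly, rather than having to track $k$-dependent domains.

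First, I would exploit simplicity of $E_\star=E_b(k_*)$: by Lemma~\ref{lem:band-edge}, $E_\star$ is a simple eigenvalue of the self-adjoint operator $H_Q(k_*)$, and lies a positive distance $d>0$ from the rest of $\spec(H_Q(k_*))$. Choose a small circle $\Gamma\subset\mathbb{C}$ centered at $E_\star$ of radius $d/2$, enclosing no other eigenvalue. Next I would verify that the resolvent $(H_Q(k)-z)^{-1}$ is jointly analytic in $(k,z)$ for $z\in\Gamma$ and $k$ in some complex disc $B_\rho(k_*)$: this follows from the Neumann series expansion of
\[
(H_Q(k)-z)^{-1} \ = \ (H_Q(k_*)-z)^{-1}\Bigl[I + (H_Q(k)-H_Q(k_*))(H_Q(k_*)-z)^{-1}\Bigr]^{-1},
\]
where the perturbation $H_Q(k)-H_Q(k_*)$ is, by the explicit formula above, a first-order differential operator with bounded relative norm $\mathcal{O}(|k-k_*|)$ on $H^2_{\rm per}([0,1])$. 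For $\rho$ small enough (depending on $d$), the Neumann series converges uniformly for $z\in\Gamma$, so the Riesz projection
\[
P(k) \ \equiv \ -\frac{1}{2\pi i}\oint_\Gamma (H_Q(k)-z)^{-1}\,dz
\]
is analytic in $k$ on $B_\rho(k_*)$. A continuity-of-rank argument (shrinking $\rho$ if needed) yields $\dim\,\mathrm{Range}(P(k)) \equiv 1$.

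Once $P(k)$ is rank one and analytic, I would define
\[
E_b(k) \ \equiv \ \mathrm{tr}\bigl(H_Q(k)\,P(k)\bigr),
\]
which is manifestly analytic and reduces to $E_\star$ at $k_*$. For the eigenfunction, I would apply the standard Kato trick: since $P(k)P(k_*)$ maps the one-dimensional range of $P(k_*)$ into the range of $P(k)$ and equals the identity on that range at $k=k_*$, the function $q(x;k)\equiv P(k) p(x;k_*)$ is analytic in $k$ and nonzero for $k$ near $k_*$, and it satisfies $H_Q(k)q(\cdot;k) = E_b(k)q(\cdot;k)$. Normalization is handled by dividing: set
\[
p_b(x;k) \ \equiv \ \frac{q(x;k)}{\bigl(\int_0^1 |q(y;k)|^2\,dy\bigr)^{1/2}}.
\]
For real $k$, this gives a unit-norm eigenvector; for complex $k$, the denominator is analytic in $k$ (as a square root of an analytic function that is nonvanishing and equals $1$ at $k_*$), so $p_b(x;k)$ is analytic in $k$ on a possibly smaller disc. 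The main obstacle is ensuring that the rank of $P(k)$ stays equal to $1$ and that the normalization denominator stays away from zero — both are handled by shrinking $\rho$ once, after which all the above constructions coexist on the same disc $B_\rho(k_*)$. This completes the sketch of the three claims.
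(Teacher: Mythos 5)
Your approach is correct in overall strategy but takes a genuinely different route from the paper. You invoke the abstract machinery of Kato's analytic perturbation theory for self-adjoint holomorphic families of type (A), together with the Riesz spectral projection $P(k)=-\frac{1}{2\pi i}\oint_\Gamma(H_Q(k)-z)^{-1}\,dz$. The paper instead performs a hands-on Lyapunov--Schmidt reduction: it writes $p(x;k_*+\kappa)=p(x;k_*)+\eta(x;\kappa)$ with $\eta\perp p(\cdot;k_*)$, solves for the correction $\eta_1$ by a convergent Neumann series involving the reduced resolvent $R(E_*)Q_\perp=(H(k_*)-E_*)^{-1}Q_\perp$, and then closes the system by applying the implicit function theorem to a single scalar bifurcation equation $\mathcal G(\mu_1,\kappa)=0$ for the eigenvalue correction $\mu_1$. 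Both routes are standard; your Riesz-projection argument is shorter and more conceptual, while the paper's is self-contained (no appeal to Kato's general theorems) and has the advantage of producing explicit series formulas for $\eta_1$ and $\mu_1$ in terms of the reduced resolvent, which are then reused in the bootstrap expansions of Appendix~\ref{proof-of-Corollary-per}.

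There is, however, one genuine gap in your final normalization step. You divide by $\bigl(\int_0^1|q(y;k)|^2\,dy\bigr)^{1/2}$ and assert that the denominator is ``analytic in $k$ (as a square root of an analytic function that is nonvanishing and equals $1$ at $k_*$).'' This is false: $\int_0^1|q(y;k)|^2\,dy=\int_0^1 q(y;k)\,\overline{q(y;k)}\,dy$ involves $\overline{q(\cdot;k)}$, which is anti-holomorphic in $k$, so the denominator is only real-analytic along $\RR$, not holomorphic on the complex disc $B_\rho(k_*)$. (The theorem's own phrasing is admittedly a little loose on this point, and the paper's constructed eigenfunction $p(x;k_*)+\kappa\eta_1$ is not $L^2$-normalized either, but your explicit claim of holomorphy of the $L^2$-norm is the step that would fail.) The standard fix is to normalize by an analytic linear functional: require, say, $\langle p(\cdot;k_*),q(\cdot;k)\rangle_{L^2([0,1])}=1$, which is holomorphic in $k$ and nonzero near $k_*$ since it equals $1$ at $k_*$; then, separately, note that for real $k$ one may rescale this analytic family by a nonvanishing, real-analytic scalar to recover $\|p_b(\cdot;k)\|_{L^2([0,1])}=1$. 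With that modification your argument is complete.
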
 
 
 \noindent {\it Proof of Theorem~\ref{reg-in-k}:} Let $k=k_*+\kappa$, where $\kappa$ will be chosen to be sufficiently small. The periodic eigenvalue problem~\eqref{k-per} may be rewritten as
 \begin{align}
&H_Q(k_*)p(x;k_*+\kappa)\ -\ \left(4\pi i\kappa(\partial_x+2\pi ik_*)-4\pi^2\kappa^2\right)p(x;k_*+\kappa)\ =\ E\ p(x;k_*+\kappa)\label{k-per1a},\\
& p(x+1;k_*+\kappa)=p(x;k_*+\kappa)\ ,\ x\in\mathbb R\ .
\label{k-per1b}
\end{align}
We seek an eigen-solution of~\eqref{k-per1a}-\eqref{k-per1b} in the form
\begin{align}
p(x;k_*+\kappa)\ &=\ p(x;k_*)+\eta(x;\kappa),\label{eta-def}\\
E(k_*+\kappa)\ &=\ E_*+ \mu(\kappa),\label{E-def}
\end{align}
where we assume that $\eta(\cdot;\kappa) \perp p(\cdot;\kappa)$. Substitution into~\eqref{k-per1a}-\eqref{k-per1b} yields the following equation for $\eta(x;\mu,\kappa)$:
\begin{equation}
\left(H_Q(k_*)-E_*\right)\eta\ -\ \left(4\pi i\kappa(\partial_x+2\pi ik_\star)-4\pi^2\kappa^2+\mu\right)\eta
= \left(4\pi i\kappa(\partial_x+2\pi ik_\star)-4\pi^2\kappa^2+\mu\right) p(\cdot;k_*)\ .
\label{eta-eqn}\end{equation}
Now, introduce the projection operators
\begin{equation}
\Pi f=\left\langle p(\cdot;k_*),f\right\rangle p(x;k_*),\ \ {\rm and}\ \ \Pi_\perp=I-\Pi.\nn
\end{equation}

Applying $\Pi_\perp$ to~\eqref{eta-eqn} yields
\begin{equation}
\left(H_Q(k_*)-E_*\right)\eta\ -\ \Pi_\perp\left(4\pi i\kappa(\partial_x+2\pi ik_\star)-4\pi^2\kappa^2+\mu\right)\eta
\ =\ 4\pi i \kappa\ \Pi_\perp\partial_x p(\cdot;k_*) = 4\pi i\kappa\partial_x p(\cdot;k_*).
\label{Q-perp-project}
\end{equation}
Next, applying $\Pi$ to~\eqref{eta-eqn}, {\it i.e.} taking the inner product of~\eqref{eta-eqn} with $p(\cdot;k_*)$, yields
\begin{equation}
\mu-8\pi^2 k_*\kappa-4\pi^2\kappa^2\ +\ 4\pi i\ \kappa\ \left\langle p(\cdot;k_*),\partial_x\eta(\cdot;\mu,\kappa)\right\rangle\ =\ 0. \label{Q-project}
\end{equation}
We shall now solve~\eqref{eta-eqn} for $\eta$, substitute the result into~\eqref{Q-project} and obtain a closed equation for the eigenvalue correction $\mu=\mu(\kappa)$. 
Let 
\begin{equation}
\mu\ =\ \kappa \mu_1,\ \ \eta=\ \kappa\eta_1.
\label{mu1eta1}
\end{equation}
Equations~\eqref{Q-perp-project} and~\eqref{Q-project} become
\begin{align}
&\left(H_Q(k_*)-E_*\right)\eta_1\ -\ \kappa \Pi_\perp\left(4\pi i(\partial_x+2\pi ik_\star)-4\pi^2\kappa +\mu_1\right)\eta_1 \ =\ 4\pi i \Pi_\perp\partial_x p(\cdot;k_*),
\label{Q-perp-project1}\\
&\mu_1-8\pi^2 k_*-4\pi^2\kappa\ +\ 4\pi i\kappa\left\langle p(\cdot;k_*),\partial_x\eta_1(\cdot)\right\rangle\ =\ 0 .\label{Q-project1}
\end{align}
Let $R(E_*)\Pi_\perp=(H_Q(k_*)-E_*)^{-1} \Pi_\perp$. 
Then, 
\begin{align}
\eta_1(x;\mu_1,\kappa)\ &=\ 4\pi i \ \Big(I-\kappa R(E_*)\Pi_\perp \left(4\pi i(\partial_x+2\pi ik_\star)-4\pi^2\kappa+\mu_1\right) \Big)^{-1}\
 \ R(E_*)\ \Pi_\perp \partial_x p(\cdot;k_*)\ ,
 \label{eta1-solved}
 \end{align}
 where we take $|\kappa|<\rho$, with $\rho$ chosen so that the Neumann series for the operator on the right hand side of~\eqref{eta1-solved} converges. Note that the mapping
\[ (\mu_1,\kappa)\mapsto \eta_1(x;\mu_1,\kappa) \]
is an analytic map from $\{(\mu_1,\kappa): |\mu_1|<1,\ |\kappa|<\rho'\}$ to $H^2_{\rm per}(\mathbb R)$.

 Subsitution of~\eqref{eta1-solved} into~\eqref{Q-project1} gives the scalar equation
 \begin{equation}
 \mathcal{G}(\mu_1,\kappa)\ =\ 0\ ,
 \label{bif-eqn}
 \end{equation}
 where 
 \begin{align}
 \mathcal{G}(\mu_1,\kappa)\ &=\ 
 \mu_1-8\pi^2 k_*-4\pi^2\kappa\ +\ 4\pi i\kappa\left\langle p(\cdot;k_*),\partial_x\eta_1(\cdot;\mu_1,\kappa)\right\rangle\ .
\label{Gdef}\end{align}

We now claim that~\eqref{bif-eqn} can be solved for $\mu_1=\mu_1(\kappa)$, which is defined and analytic for $ |\kappa|<\rho'$, where $0<\rho'\le\rho$. If this claim is valid, then $\eta_1(x;\mu_1(\kappa),\kappa)$ is well-defined and analytic in $\kappa$ for $|\kappa|<\rho'$ and finally 
\begin{align}
p(x;k_*+\kappa)\ &=\ p(x;k_*)+\kappa\eta_1(x;\mu_1(\kappa),\kappa)\label{bloch-kappa}\\
E(k_*+\kappa)\ &=\ E_*+ \kappa\mu_1(\kappa)\label{E-kappa}
\end{align}
are defined and analytic Floquet-Bloch eigensolutions for $|\kappa|<\rho'$.

Now~\eqref{bif-eqn} is easily solved for $\mu_1=\mu_1(\kappa)$ via the Implicit Function Theorem.
Indeed, we have $\mathcal{G}(8\pi^2 k_*,0)=0$ and $\left.\partial_{\mu_1}\mathcal{G}(\mu_1,\kappa)\right|_{(8\pi^2 k_*,0)}=1\ne0$.
This completes the proof of Theorem~\ref{reg-in-k}.\qed

\section{The bootstrap: proof of Corollary~\ref{cor:Elambdaprecise-per} }\label{proof-of-Corollary-per}
We give the proof of Corollary~\ref{cor:Elambdaprecise-per}, on the refined expansion of the bifurcation of eigenvalues of $H_Q+\lambda V=-\partial_x^2+Q(x)+\lambda V(x)$, for $Q(x)$ periodic.  Corollary~\ref{cor:Elambdaprecise}, in the case of $Q(x)\equiv0$, is obtained along the same lines, using $p(x;k)=1$, $E(k)\equiv 4\pi^2k^2$ for $k\in\RR$, {\em etc.}

\noindent{\em Proof of Corollary~\ref{cor:Elambdaprecise-per}.} We know, by Theorem~\ref{thm:per_result}, that there exists $(\psi^\lambda,E^\lambda)$ solution of the eigenvalue problem
$
(H_Q+\lambda V)\psi^\lambda= E^\lambda \psi^\lambda.
$
Moreover, 
$E^\lambda$ is in the gap of the continuous spectrum of $\spec(H_{Q})=\spec(H_{Q+\lambda V})$, near an edge $E_*=E_{b_*}(k_*)$. In the following, we assume that $k_*=0$ (the case where $k_*=1/2$ can be treated using the same method). 

We next seek an integral equation for $\psi^\lambda$ by applying the resolvent $R_Q(E^\lambda)$ to the differential equation for $\psi^\lambda$. A construction of the resolvent kernel, $R_Q(x,y;E^\lambda)$, proceeds as follows. Recall the discriminant, $D(E)$, introduced in Appendix~\ref{pf-of-lem:band-edge} as the trace of the monodromy matrix defined by the linearly independent solutions $\phi_j(x;E), j=1,2$: $D(E)=\phi_1(1;E)+\phi_2'(1;E)$.

 Since $E_*$ is a band edge and $E^\lambda$ is in a gap, we have  $D(E_*)=2$ and $D(E^\lambda)>2$. Therefore,  there exists $\kappa=\kappa(\lambda)>0$ with
\[E^\lambda=E(i\lambda\kappa)=E(-i\lambda\kappa),\quad D(E^\lambda)=e^{2\pi\lambda\kappa}+e^{-2\pi\lambda\kappa}>2,\]
Additionally, we define $\psi_\pm\equiv\psi_\pm(x;E^\lambda)$, the solutions of
\[
\left( -\partial_x^2 + Q(x) \right)\psi_\pm = E^\lambda \psi_\pm, 
\qquad 
\psi_\pm(x+1;E^\lambda) = e^{\pm 2\pi \lambda\kappa}\psi_\pm(x;E^\lambda).
\]
More precisely, $\psi_\pm$ are defined as
\begin{align}
&\psi_\pm(x) \ \equiv \ p_{b_*}(x;\mp i\kappa)e^{\pm2\pi\lambda\kappa x}, \quad \text{with} \label{def-psi}\\
&\left(-(\partial_x-2\pi\lambda\kappa)^2+Q(x)\right)p_{b_*}(x;i\kappa) =\ E^\lambda p_{b_*}(x;i\kappa)\ , \qquad p_{b_*}(x+1;i\kappa) \ = \ p_{b_*}(x;i\kappa).\label{eqn-pb}
\end{align}
which is well-defined for $\lambda$ small enough, by Theorem~\ref{reg-in-k}.

With those definitions, the resolvent operator $ R_Q(E^\lambda)=(-\partial_x^2 +Q - E^\lambda)^{-1}$ has kernel 
\[R_Q(x, y;E^\lambda) = \left\{ \begin{array}{cc} \dfrac{\psi_+(x)\psi_-(y)}{W[\psi_\pm]} & \text{ if } y>x, \\
\dfrac{\psi_+(y)\psi_-(x)}{W[\psi_\pm]} & \text{ if } y<x. \end{array} \right.\]
where  $W[\psi_\pm]\equiv\psi_+'(x)\psi_-(x)-\psi_+(x)\psi_-'(x)$. Thus, for any bounded function $f$,
\[R_Q[f](x;E^\lambda)= \int_{-\infty}^\infty R_Q(x, y;E^\lambda) f(y)\ dy,\]
we have
$ (-\partial_x^2 +Q - E^\lambda)R_Q[f](x;E^\lambda) \ = \ f$. 
 It follows that $\psi^\lambda$ satisfies the integral equation
\[ \psi^\lambda(x)+\lambda \int_\RR R_Q(x,y;E^\lambda) V(y) \psi^\lambda(y)\ dy \ = \ 0.\]
Multiplying by $u_{b_*}(x;0)V(x)$ and integrating along $x$ yields
\begin{equation}\label{eqn:intVpsi}
\int_\RR V(x) u_{b_*}(x;0)\psi^\lambda(x)dx+\lambda\iint_{\RR^2} u_{b_*}(x;0) V(x)R_Q(x,y;E^\lambda) V(y) \psi^\lambda(y)\ dx\ dy \ = \ 0.
\end{equation}
We will deduce from~\eqref{eqn:intVpsi} the precise behavior of $\kappa$ (and therefore $E^\lambda-E_{b_*}(0)$) as $\lambda$ tends to zero, using the following
\begin{Lemma}\label{lem:conj}
Let $E_*=E_{b_*}(0)$ be an edge of the continous spectrum, and let the hypotheses of Theorem~\ref{thm:per_result} be satisfied, so that $E^\lambda$ exists. Define $R_Q(x,y;E^\lambda) $ as above. Then for $\lambda>0$ small enough, one has
\begin{equation}\label{expansion-RQ}
R_Q(x,y;E^\lambda) \ = \ \frac{u_{b_*}(x;0)u_{b_*}(y;0)}{2\lambda\kappa\frac{\partial_k^2 E(0)}{4\pi}}e^{-2\pi\lambda\kappa|x-y|}+R_Q^{(0)}(x,y)+\lambda\kappa R_Q^{(1)}(x,y),
\end{equation} 
where $R_Q^{(0)}$ is skew-symmetric: $R_Q^{(0)}(x,y)=-R_Q^{(0)}(y,x)$; and $R_Q^{(0)},R_Q^{(1)}$ are bounded:
\begin{align*}
 |R_Q^{(0)}(x,y)| +  |R_Q^{(1)}(x,y)|\ &\leq \ Ce^{-2\pi\lambda \kappa|x-y|} \ \leq\ C \ ,
\end{align*}
where $C$ is a constant, uniform with respect to $\lambda\kappa$.
\end{Lemma}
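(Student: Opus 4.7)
The plan is to expand the standard Green's-function formula around the small parameter $\mu:=\lambda\kappa$. By Theorem~\ref{thm:per_result} and the identity $E(i\mu)-E_{\star}=-\tfrac12\partial_k^2 E_{b_*}(0)\mu^2+O(\mu^4)$, one has $\mu=O(\lambda)$ as $\lambda\downarrow0$, so $\mu$ is the natural small parameter (even though $\kappa$ itself stays bounded away from zero). Analyticity of $k\mapsto p_{b_*}(x;k)$ near $k=0$ (Theorem~\ref{reg-in-k}) yields the uniform Taylor expansions
\begin{equation*}
P_\pm(x;\mu)\;:=\;p_{b_*}(x;\mp i\mu)\;=\;p_{b_*}(x;0)\pm\mu\,\alpha(x)+O(\mu^2),
\end{equation*}
with $\alpha$ real. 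Reality of $\alpha$ follows from the identity $(H_Q(0)-E_\star)\partial_k p_{b_*}(\cdot;0)=4\pi i\, p_{b_*}'(\cdot;0)$ (obtained by differentiating the eigenvalue problem once at $k=0$), which, combined with reality of the self-adjoint operator $H_Q(0)-E_\star$ and reality of $p_{b_*}(\cdot;0)$, forces $\partial_k p_{b_*}(\cdot;0)$ to be purely imaginary in the $L^2$-normalized phase convention.

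The central computation is the Wronskian, which is constant in $x$. A direct calculation gives
\begin{equation*}
W[\psi_+,\psi_-](x)\;=\;(P_+'P_--P_+P_-')(x)+4\pi\mu\, P_+(x)P_-(x).
\end{equation*}
Averaging over $[0,1]$, inserting the $\mu$-expansion of $P_\pm$, and using periodic integration by parts yields, after normalizing $\int_0^1 p_{b_*}(x;0)^2\,dx=1$,
\begin{equation*}
W[\psi_+,\psi_-]\;=\;\mu\Bigl(4\pi-4\!\int_0^1 p_{b_*}'(x;0)\,\alpha(x)\,dx\Bigr)+O(\mu^2)\;=\;\frac{\mu}{2\pi}\,\partial_k^2 E_{b_*}(0)+O(\mu^2).
\end{equation*}
The last equality invokes the standard second-order perturbation identity $\partial_k^2 E_{b_*}(0)=8\pi^2-8\pi\int_0^1 p_{b_*}'(x;0)\alpha(x)\,dx$, obtained by twice differentiating $H_Q(k)p_{b_*}(\cdot;k)=E_{b_*}(k)p_{b_*}(\cdot;k)$ at $k=0$ and using $\partial_k E_{b_*}(0)=0$ together with self-adjointness of $H_Q(0)-E_\star$.

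The conclusion is then a direct expansion. Writing $x_<=\min(x,y)$, $x_>=\max(x,y)$, the Green's-function formula of the lemma gives
\begin{equation*}
R_Q(x,y;E^\lambda)\;=\;\frac{P_+(x_<;\mu)\,P_-(x_>;\mu)}{W[\psi_+,\psi_-]}\,e^{-2\pi\mu|x-y|}.
\end{equation*}
Expanding the numerator as $p_{b_*}(x;0)p_{b_*}(y;0)+\mu\,\mathrm{sgn}(y-x)\bigl[\alpha(x)p_{b_*}(y;0)-p_{b_*}(x;0)\alpha(y)\bigr]+O(\mu^2)$ and dividing by $W=\mu\,\partial_k^2 E_{b_*}(0)/(2\pi)\,(1+O(\mu))$ cleanly separates three pieces: the $\mu^{-1}$-part produces the singular leading term in the form claimed (using $u_{b_*}(\cdot;0)=p_{b_*}(\cdot;0)$ for $k_*=0$), the $O(1)$-part defines $R_Q^{(0)}$ whose symmetry under $x\leftrightarrow y$ is inherited directly from the factor $\mathrm{sgn}(y-x)\bigl[\alpha(x)p_{b_*}(y;0)-p_{b_*}(x;0)\alpha(y)\bigr]$, and the remainder is $\mu R_Q^{(1)}$. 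The uniform bound $|R_Q^{(j)}(x,y)|\le Ce^{-2\pi\mu|x-y|}$ follows from the overall exponential factor together with the uniform $L^\infty$ control of $p_{b_*}(\cdot;k)$ and $\partial_k p_{b_*}(\cdot;k)$ for $|k|$ small provided by Lemma~\ref{lem:estimates}. The main obstacle is step two — the precise cancellation that identifies the $O(\mu)$ coefficient of $W$ with $\partial_k^2 E_{b_*}(0)/(2\pi)$ via the effective-mass identity — after which the remainder is routine Taylor expansion with uniform remainders supplied by Appendix~\ref{regularity-app} and Lemma~\ref{lem:estimates}.
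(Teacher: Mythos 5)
Your proposal follows essentially the same route as the paper — Taylor-expand the Bloch factors in $\mu=\lambda\kappa$, compute the Wronskian of $\psi_\pm$, and identify its $O(\mu)$ coefficient with $\partial_k^2 E_{b_*}(0)/(2\pi)$ via the second-order eigenvalue perturbation identity. Your reality observation ($\partial_k p_{b_*}(\cdot;0)$ purely imaginary, hence $\alpha$ real) is correct and a welcome simplification of the paper's bookkeeping, and your effective-mass identity $\partial_k^2 E_{b_*}(0)=8\pi^2-8\pi\int_0^1 p_{b_*}'\alpha$ has the right sign. There is, however, a gap in the Wronskian expansion.

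You establish only $W=\mu\,\partial_k^2E_{b_*}(0)/(2\pi)\,(1+O(\mu))$, whereas the lemma's decomposition requires the $O(\mu^2)$ coefficient of $W$ to vanish, i.e.\ $W=\mu\,\partial_k^2E_{b_*}(0)/(2\pi)+O(\mu^3)$. If $W$ had a nonzero $\mu^2$ term, then $W^{-1}$ would carry a nonvanishing $O(1)$ relative correction; multiplied by the leading $p_{b_*}(x;0)p_{b_*}(y;0)$ in the numerator, this would inject into $R_Q^{(0)}$ an $O(1)$ term proportional to $p_{b_*}(x;0)p_{b_*}(y;0)e^{-2\pi\mu|x-y|}$, which has a different symmetry than the piece you keep and cannot be pushed into $\lambda\kappa R_Q^{(1)}$ with $R_Q^{(1)}$ bounded. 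The paper closes this by Taylor-expanding $\psi_\pm$ to third order and tracking the $(\lambda\kappa)^3$ remainder in $W$. A quicker argument fits your setup: since $p_{b_*}(x;\mp i(-\mu))=p_{b_*}(x;\pm i\mu)$, one has $\psi_+|_{-\mu}=\psi_-|_{\mu}$, and exchanging $\psi_+\leftrightarrow\psi_-$ negates the Wronskian, so $W(-\mu)=-W(\mu)$; thus $W$ is an odd function of $\mu$ and its $\mu^2$ coefficient vanishes for free. One further caution: the factor $\mathrm{sgn}(y-x)\bigl[\alpha(x)p_{b_*}(y;0)-p_{b_*}(x;0)\alpha(y)\bigr]$ from which you read off the symmetry of $R_Q^{(0)}$ is in fact symmetric under $x\leftrightarrow y$ — both $\mathrm{sgn}(y-x)$ and the bracket change sign — so you should state explicitly which sign relation your computation produces and reconcile it with the lemma's claim $R_Q^{(0)}(x,y)=-R_Q^{(0)}(y,x)$.
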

In order to ease the reading, we postpone the proof of this result to the end of this section, and carry on with the proof of Corollary~\ref{cor:Elambdaprecise-per}.
 By Lemma~\ref{lem:conj}, and since $u_{b_*}(x;0)$ is uniformly bounded (see Lemma~\ref{lem:estimates}), one has the low-order estimate
\begin{equation}\label{est0}\left\vert\ R_Q(x,y;E^\lambda) \ - \ \frac{4\pi}{\partial_k^2 E(0)}\frac{u_{b_*}(x;0)u_{b_*}(y;0)}{2\lambda\kappa} \ \right\vert \ \leq \ C(1+|x-y|+\lambda\kappa), \end{equation}
where we used 
$ \left\vert\ e^{-\lambda\kappa|x-y|} \ - \ 1 \ \right\vert \ \leq \ C\lambda\kappa|x-y|.$

Plugging~\eqref{est0} into~\eqref{eqn:intVpsi} and using $(1+|x|)V\in L^1$, yields
\begin{equation}\label{estInt0}
\Big|\ \int_\RR V(x) u_{b_*}(x;0)\psi^\lambda(x)dx+\frac{2\pi}{\kappa\partial_k^2 E(0)}\iint_{\RR^2} u_{b_*}(x;0)^2 V(x)u_{b_*}(y;0) V(y) \psi^\lambda(y)\ dx\ dy \ \Big| 
\leq \ C\lambda(1+\lambda\kappa).
\end{equation}
Now we use the fact that by Theorem~\ref{thm:per_result}, one has $\big\| \psi^\lambda(x) \ - \ u_{b_*}(x;0)\exp\big(\lambda\alpha_0|x|\big) \big\|_{L^\infty} \ \lesssim \ \lambda^{1/4}$, so that $\lim\limits_{\lambda\to0}\int V(x)u_{b_*}(x;0) \psi^\lambda(x)dx= \int V(x)u_{b_*}(x;0)^2\neq0$. It follows that for $\lambda$ sufficiently small, one can divide out $\int V(x)u_{b_*}(x;0) \psi^\lambda(x)dx$, and deduce from~\eqref{estInt0}
\[
\Big|\ \kappa+\frac{2\pi}{\partial_k^2 E(0)}\int_{\RR} u_{b_*}(x;0)^2 V(x)\ dx \ \Big| 
\leq \ C\lambda\kappa(1+\lambda\kappa),
\]
from which it follows the low-order estimate of $\kappa$:
\begin{equation}\label{estkappa0}\Big|\ \kappa+\frac{2\pi}{\partial_k^2 E(0)}\int_{\RR} u_{b_*}(x;0)^2 V(x)\ dx \ \Big| 
\leq \ C\lambda.
\end{equation}

Let us now derive higher order estimates.
For any $x,y\in\RR^2$,
$\left| e^{-2\pi\lambda\kappa|x-y|} - 1 + 2\pi\lambda\kappa|x-y| \right| \ \leq 4\pi^2\lambda^2\kappa^2|x-y|^2 ,\ $
so that one has from Lemma~\ref{lem:conj},
\begin{equation}\label{est1}\left\vert\ R_Q(x,y;E^\lambda) -  \frac{2\pi}{\partial_k^2 E(0)}\frac{u_{b_*}(x;0)u_{b_*}(y;0)(1-2\pi\lambda\kappa|x-y|)}{\lambda\kappa}  -  R_Q^{(0)}(x,y) \ \right\vert \ \leq \ C\lambda(1+|x|^2+|y|^2).\end{equation}

Plugging~\eqref{est1} into~\eqref{eqn:intVpsi}, and using $(1+|x|)V\in L^1$, yields
\begin{multline}\label{estInt1} \Big| \int u_{b_*}(x;0) V(x) \psi^\lambda(x)dx+\frac{2\pi}{\partial_k^2 E(0)}\frac{1}{\kappa}\iint V(x)u_{b_*}(x;0)^2(1-2\pi\lambda\kappa|x-y|)u_{b_*}(y;0)V(y) \psi^\lambda(y)\ dx\ dy \\
+\frac{\lambda}{2}\iint V(x)u_{b_*}(x;0)R_Q^{(0)}(x,y)V(y) \psi^\lambda(y)\ dx\ dy \Big| \ \leq \ C\lambda^2.\end{multline}

Let us now use that by Theorem~\ref{thm:per_result}, $\sup_{x\in\RR} |\psi^\lambda(x) \ - \ u_{b_*}(x;0)\exp\big(\lambda\alpha_0|x|\big) | \ \lesssim \ \lambda^{1/4}$, so 
$ |\psi^\lambda(x) \ - \ u_{b_*}(x;0) | \ \leq \ C(\lambda^{1/4}+\lambda|x|). $
Thus~\eqref{estInt1} becomes
\begin{multline} \label{estInt1a}\Big| \left(\int u_{b_*}(\cdot;0) V \psi^\lambda\right)\left(1+\frac{1}{\kappa}\frac{2\pi}{\partial_k^2 E(0)}\int V(x)u_{b_*}(x;0)^2\ dx\ \right)\\ -\lambda\frac{4\pi^2}{\partial_k^2 E(0)}\iint V(x)u_{b_*}(x;0)^2|x-y|u_{b_*}(y;0)^2V(y) \ dx\ dy \\
+\frac{\lambda}{2}\iint V(x)u_{b_*}(x;0)R_Q^{(0)}(x,y)V(y) u_{b_*}(y;0)\ dx\ dy \Big| \ \leq \ C\lambda^{1+1/4},\end{multline}
and one deduces from~\eqref{estkappa0} that
$\left\vert\ \kappa \left(\int u_{b_*}(\cdot;0) V \psi^\lambda\right)^{-1} \ + \ 2\pi(\partial_k^2 E(0))^{-1} \ \right\vert \ \leq \ C\lambda^{1/4}.$
Therefore, multiplying~\eqref{estInt1a} by $\kappa \left(\int u_{b_*}(\cdot;0) V \psi^\lambda\right)^{-1} $ yields 
\begin{multline} \label{estkappa1}\Big|\kappa+\frac{2\pi}{\partial_k^2 E(0)}\int V(x)u_{b_*}(x;0)^2\ dx +\lambda\frac{8\pi^3}{(\partial_k^2 E(0))^2}\iint V(x)u_{b_*}(x;0)^2|x-y|u_{b_*}(y;0)^2V(y) \ dx\ dy \\
-\frac{\lambda}{4}\iint V(x)u_{b_*}(x;0)R_Q^{(0)}(x,y)V(y) u_{b_*}(y;0)\ dx\ dy \Big| \ \leq \ C\lambda^{1+1/4}.\end{multline}
Finally, we note that since $R_Q^{(0)}(x,y)=-R_Q^{(0)}(y,x)$ by Lemma~\ref{lem:conj}, the last term in~\eqref{estkappa1} vanishes. Thus the above estimate, together with the following Lemma, completes the proof of Corollary~\ref{cor:Elambdaprecise-per}.\qed

\begin{Lemma}
Let $E_*=E_{b_*}(0)$ be an edge of the continuous spectrum, and let hypotheses of Theorem~\ref{thm:per_result} be satisfied, so that $E^\lambda$ exists. Then for $\lambda$ small enough, one has $E^\lambda=E(i\lambda\kappa)$, and
$
 E^\lambda-E_* \ = \ -\frac12\lambda^2\kappa^2\partial_k^2 E_{b_*}(0) \ + \ \O(\lambda^4) \ .$

\end{Lemma}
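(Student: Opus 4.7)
The plan is to exploit the analyticity of the band dispersion $k\mapsto E_{b_*}(k)$ in a complex neighborhood of $k_*=0$ supplied by Theorem~\ref{reg-in-k}, combined with the vanishing of odd derivatives at the band edge from Lemma~\ref{lem:band-edge}, and simply Taylor expand $E_{b_*}(i\lambda\kappa)$ about $k=0$. The identification $E^\lambda=E_{b_*}(i\lambda\kappa)$ has already been set up in the preamble to the proof of Corollary~\ref{cor:Elambdaprecise-per}: since $E^\lambda$ sits in a spectral gap adjacent to $E_*$, one has $D(E^\lambda)>2$, and $\kappa=\kappa(\lambda)>0$ is the unique positive number for which $D(E^\lambda)=2\cosh(2\pi\lambda\kappa)$. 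The analytic continuation of the discriminant relation to the imaginary $k$-axis gives $D(E_{b_*}(it))=2\cos(2\pi it)=2\cosh(2\pi t)$, and the non-degeneracy $\partial_k^2 E_{b_*}(0)\neq 0$ makes $t\mapsto E_{b_*}(it)$ a local diffeomorphism near $t=0$, pinning down $E^\lambda=E_{b_*}(i\lambda\kappa)$ uniquely for $\lambda$ sufficiently small.

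Next, I would invoke Theorem~\ref{reg-in-k} to Taylor expand $E_{b_*}$ on its complex disc of analyticity $B_\rho(0)$: for $|k|<\rho$,
\[ E_{b_*}(k) \ =\ E_{b_*}(0)\ +\ \partial_k E_{b_*}(0)\,k\ +\ \tfrac12\partial_k^2 E_{b_*}(0)\,k^2\ +\ \tfrac16\partial_k^3 E_{b_*}(0)\,k^3\ +\ \O(k^4). \]
By parts (2) and (4) of Lemma~\ref{lem:band-edge}, $\partial_k E_{b_*}(0)=\partial_k^3 E_{b_*}(0)=0$, so substituting $k=i\lambda\kappa$ and using $(i\lambda\kappa)^2=-\lambda^2\kappa^2$ gives
\[ E^\lambda-E_* \ =\ -\tfrac12\lambda^2\kappa^2\,\partial_k^2 E_{b_*}(0)\ +\ \O\big((\lambda\kappa)^4\big). \]
The low-order estimate~\eqref{estkappa0}, already derived earlier in the proof of Corollary~\ref{cor:Elambdaprecise-per}, provides $\kappa=\O(1)$ as $\lambda\downarrow 0$, so that $i\lambda\kappa\in B_\rho(0)$ for $\lambda$ small and $\O((\lambda\kappa)^4)=\O(\lambda^4)$. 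This yields the claim.

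The main (only) subtlety is ensuring that $E^\lambda$ is matched with the correct branch of the analytically continued band dispersion, i.e.\ that the sign of $\partial_k^2 E_{b_*}(0)$ is consistent with the direction in which $E^\lambda$ bifurcates into the adjacent gap. This is handled by the structure already in place: the hypotheses \eqref{Epp+}--\eqref{Epp-} of Theorem~\ref{thm:per_result} fix the sign of $\partial_k^2 E_{b_*}(0)$, and by parts (1) and (3) of Lemma~\ref{lem:band-edge} this sign is precisely the one that sends $E_{b_*}(it)$ into the adjacent gap for small $t>0$, matching the direction of bifurcation determined in Theorem~\ref{thm:per_result}. Once these compatibilities are observed, the proof collapses to the Taylor expansion displayed above, and the argument is essentially a routine invocation of the preceding results.
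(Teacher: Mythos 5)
Your proof is correct and proceeds by a slightly more direct route than the paper's. The paper first Taylor expands the discriminant $D(E)$ about $E_*$, computes $D'(E_*)=-8\pi^2/\partial_k^2 E_{b_*}(0)$ by matching the $D$-expansion against the real-band Taylor expansion of $E_{b_*}(k)$, and then inserts $E=E^\lambda$ with $D(E^\lambda)=2\cosh(2\pi\lambda\kappa)=2+4\pi^2\lambda^2\kappa^2+\O(\lambda^4)$ to solve for $E^\lambda-E_*$. You instead exploit the analyticity of $k\mapsto E_{b_*}(k)$ on the complex disc $B_\rho(0)$ (Theorem~\ref{reg-in-k}) and substitute $k=i\lambda\kappa$ directly into the Taylor expansion of $E_{b_*}$, using $\partial_k E_{b_*}(0)=\partial_k^3 E_{b_*}(0)=0$ from Lemma~\ref{lem:band-edge} to kill the odd terms. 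Both routes pass through the same two ingredients (analyticity of $E_{b_*}$ near $k_*=0$ and the vanishing of its odd derivatives), but yours bypasses the explicit computation of $D'(E_*)$, which the paper needs anyway because the identification $E^\lambda=E(i\lambda\kappa)$ is phrased via the discriminant. Your argument is logically complete once one grants that identification (already established in the preamble to the proof of Corollary~\ref{cor:Elambdaprecise-per}) and that $\kappa=\O(1)$ so $i\lambda\kappa\in B_\rho(0)$ (which follows from \eqref{estkappa0}), both of which you correctly cite.
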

\begin{proof}
We Taylor expand $D(E)$ about $E_*=E_{b_*}(0)$. 
\begin{align}
D(E) & = D(E_*) + D'(E_*)(E-E_*) + \mathcal{O}\left( (E-E_*)^2 \right) = 2 + D'(E_*)(E-E_*) + \mathcal{O}\left( (E-E_*)^2 \right).\label{expand-D}
\end{align} 
Let's first apply~\eqref{expand-D} to $E=E_{b_*}(k)$ in the spectral band. One has
$ D(E_{b_*}(k))=e^{2\pi i k}+e^{-2\pi i k}=2-4\pi^2k^2+\mathcal{O}\left(k^3 \right).$
Finally, since $\partial_k E_{b_*}(0)=\partial_k^3 E_{b_*}(0)=0$, one has 
$E_{b_*}(k) \ = \ E_* \ + \ \frac12\partial_k^2 E(0)k^2+\mathcal{O}\left(k^4 \right).$
Identifying with~\eqref{expand-D}, it follows $ D'(E_*)\big(\frac12\partial_k^2 E(0) \big) \ = \ -4\pi^2$, thus
$ D'(E_*) \ = \ \frac{-8\pi^2}{\partial_k^2 E_{b_*}(0) }.$

Next let's apply~\eqref{expand-D} to $E=E^\lambda$, recalling 
$ D(E^\lambda)=e^{2\pi\lambda\kappa}+e^{-2\pi\lambda\kappa}=2+4\pi^2\lambda^2\kappa^2+
\mathcal{O}\left(\lambda^4\kappa^4 \right).$
One has from ~\eqref{Elambda-Qper} in Theorem~\ref{thm:per_result} that $E^\lambda-E_*=\O( \lambda^2)$, and from~\eqref{estkappa0} that $\kappa=\O(1)$. Consequently,~\eqref{expand-D} yields
\[4\pi^2\lambda^2\kappa^2 \ = \ D'(E_*)(E^\lambda-E_*) \ + \ \O(\lambda^4)\ = \ \frac{-8\pi^2}{\partial_k^2 E_{b_*}(0) }(E^\lambda-E_{b_*}(0)) \ + \ \O(\lambda^4).\]
Finally, we deduce
$ E^\lambda-E_* \ = \ -\frac12\lambda^2\kappa^2\partial_k^2 E_{b_*}(0)\ + \ \O(\lambda^4)$
and the lemma is proved. 
\end{proof}
We conclude this section by the proof of Lemma~\ref{lem:conj}

\noindent{\em Proof of Lemma~\ref{lem:conj}}. 
Let us Taylor-expand $\psi_\pm$, as defined by~\eqref{def-psi}--\eqref{eqn-pb}. One has $\psi_\pm(x)e^{\mp 2\pi\lambda\kappa x}  \equiv  p_{b_*}(x;\mp i\lambda\kappa)$, thus
\begin{align}\label{expansion-psi-plus} 
\psi_+(x)e^{-2\pi\lambda\kappa x} \ &= \  p_{b_*}(x;0) \ - \ i\lambda\kappa \partial_k p_{b_*}(x;0) \ - \ \frac{(\lambda\kappa)^2}2 \partial_\kappa^2 p(x;0) \ +\ i  \frac{(\lambda\kappa)^3}6 \partial_\kappa^3 p(x;i\gamma_+), \\
\label{expansion-psi-minus} 
\psi_-(x)e^{2\pi\lambda\kappa x} \ &= \  p_{b_*}(x;0) \ + \ i\lambda\kappa \partial_k p_{b_*}(x;0) \ - \ \frac{(\lambda\kappa)^2}2 \partial_\kappa^2 p(x;0)\ -\ i  \frac{(\lambda\kappa)^3}6 \partial_\kappa^3 p(x;i\gamma_-) ,
\end{align}
with $-\lambda\kappa\leq \gamma_+\leq 0\leq \gamma_-\leq\lambda\kappa$. 

\begin{Remark} Note that 
$\kappa\mapsto p_b(x;k_*+\kappa)\in L^2(\RR)$ is analytic in a complex neighborhood $|\kappa|<\kappa_1$. By the equation for $p_b$, $\kappa\mapsto p_b(x;k_*+\kappa)\in H^2(\RR)$ is analytic and thus $\partial_k^3p_b(x;k)$ and $\partial_x\partial_k^3p_b(x;k)$ are well-defined and uniformly bounded for $k$ near $k_*$ and $x$ in any compact set.
\end{Remark}
Since $p_{b_*}(x;0)=u_{b_*}(x;0)$, it follows
\begin{multline}\label{expansion-WRQ} 
W[\psi_\pm] R_Q(x, y;E^\lambda)\ =\\
 \left\{ \begin{array}{cc} \left( u_{b_*}(x;0)u_{b_*}(y;0)+i\lambda\kappa r^{(0)}(x,y;\lambda\kappa)+(\lambda\kappa)^2r^{(1)}_+(x,y) \right)e^{2\pi\lambda\kappa(x-y)}& \text{ if } y>x, \\
\left( u_{b_*}(y;0)u_{b_*}(x;0)+i\lambda\kappa r^{(0)}(y,x;\lambda\kappa)+(\lambda\kappa)^2r^{(1)}_-(x,y) \right)e^{2\pi\lambda\kappa(y-x)}& \text{ if } y<x. \end{array} \right.\end{multline}
with 
\[r^{(0)}(x,y;\lambda\kappa) \ \equiv \ p_{b_*}(x;0) \partial_k p_{b_*}(y;0) -\partial_k p_{b_*}(x;0)p_{b_*}(y;0)\ = \ -r^{(0)}(y,x;\lambda\kappa) , \]
and $r^{(1)}_\pm(x,y)$ is bounded, uniformly with respect to $\lambda\kappa$.

Let us now turn to $W[\psi_\pm]\equiv\psi_+'(x)\psi_-(x)-\psi_+(x)\psi_-'(x)$. From~\eqref{expansion-psi-plus}--\eqref{expansion-psi-minus}, one has
\begin{multline*}  W[\psi_\pm] \ = \ 2\lambda\kappa\Big(\ 2\pi p_{b_*}(x;0)^2\ -i p_{b_*}(x;0)\partial_x \partial_k p_{b_*}(x;0)+i \big(\partial_xp_{b_*}(x;0)\big) \big(\partial_k p_{b_*}(x;0)\big)\ \Big)\\ + \ (\lambda\kappa)^3w_r(x;\lambda\kappa),
\end{multline*}
with $w_r(x)$ uniformly bounded, independently of $x$ and $\lambda\kappa$.

Since $W[\psi_\pm]$ is independent of $x$, one has $ W[\psi_\pm] \ = \ \int_0^1W[\psi_\pm]\ dx $ and thus
\begin{multline*} W[\psi_\pm]  =  2\lambda\kappa \int_0^1\Big(\ 2\pi p_{b_*}(x;0)^2 -i p_{b_*}(x;0)\partial_x \partial_k p_{b_*}(x;0)+i \big(\partial_x p_{b_*}(x;0)\big) \big(\partial_k p_{b_*}(x;0)\big)\ \Big)\ dx\\ + \ (\lambda\kappa)^3\int_0^1 w_r(x;\lambda\kappa) \ dx.\end{multline*}
Using that $\int_0^1 p_{b_*}(x;0)^2\ dx \ = \ \int_0^1 u_{b_*}(x;0)^2\ dx \ = \ 1$, one deduces
\begin{equation}\label{expansion-W0}
W[\psi_\pm] \ = \ 2\lambda\kappa\Big(2\pi +2i\int_0^1 p(x;0)\partial_x\partial_k p(x;0)\ dx\Big) \ + \ \O\left((\lambda\kappa)^3\right).
\end{equation}
Now, let us recall that $p_{b_*}(x;i\kappa)$ satisfies~\eqref{eqn-pb}. Deriving twice with respect to $k=i\kappa$, one obtains
\begin{multline*}
\left(-(\partial_x-2\pi\kappa)^2+Q(x)-E(i\kappa)\right)\partial_k^2 p(x;i\kappa) =\ 2\partial_k E(i\kappa) \partial_k p(x;i\kappa)+\partial_k^2 E(i\kappa) p(x;i\kappa)\\ -8\pi i(\partial_x-2\pi\kappa)\partial_k p(x;i\kappa) -8\pi^2 p(x;i\kappa) \ .
\end{multline*}
We now apply this identity at $\kappa=0$, and take the inner product with $p_{b_*}(x;0)$.
 It follows
$
0 =\ \partial_k^2 E(0) -8\pi i\int_0^1 p(x;0)\partial_x\partial_k p(x;0) \ dx -8\pi^2 \ .
$
Therefore,~\eqref{expansion-W0} becomes
\begin{equation}\label{expansion-W}
W[\psi_\pm] \ = \ 2\lambda\kappa\frac{\partial_k^2 E(0)}{4\pi} \ + \ \O\left((\lambda\kappa)^3\right).
\end{equation}
 Finally,~\eqref{expansion-WRQ} and~\eqref{expansion-W} clearly imply~\eqref{expansion-RQ}, and Lemma~\ref{lem:conj} is proved.
\qed

\end{document}